\title[ ]{Arithmetic Spectral Transitions  for the Maryland Model }
\author{ Svetlana Jitomirskaya}
\address[ Svetlana Jitomirskaya]{ Department of Mathematics, University of California, Irvine, California 92697-3875, USA}
\email{szhitomi@math.uci.edu}
\author{Wencai Liu}
\address[Wencai Liu]{School of Mathematical Sciences, Fudan University, Shanghai 200433, P. R. China} \email{liuwencai1226@gmail.com}
\address{Current address: Department of Mathematics, University of California, Irvine, California 92697-3875, USA}
\theoremstyle{plain}
\newtheorem{theorem}{Theorem}[section]
\newtheorem{corollary}[theorem]{Corollary}
\newtheorem{lemma}[theorem]{Lemma}
\newtheorem{proposition}[theorem]{Proposition}
\theoremstyle{definition}
\newtheorem{remark}[theorem]{Remark}
\begin{document}


\begin{abstract}
   We give a precise description of spectra of the
  Maryland model  $ (h_{\lambda,\alpha,\theta}u) _n=u_{n+1}+u_{n-1}+
  \lambda \tan \pi(\theta+n\alpha)u_n$ for all values of parameters.

We introduce an
  arithmetically defined index $\delta (\alpha, \theta)$ and show that
  for $\alpha\notin\mathbb{Q},\,$
 $\sigma_{sc}(h_{\lambda,\alpha,\theta})=\overline{\{e:\gamma_{\lambda}(e) <\delta (\alpha, \theta) \}}$ and
 $\sigma_{pp}(h_{\lambda,\alpha,\theta})=\{e:\gamma_{\lambda}(e) \geq \delta (\alpha, \theta) \}$.
Since  $\sigma_{ac}(h_{\lambda,\alpha,\theta})=\emptyset,\;$
  this gives  complete description of the spectral
  decomposition for {\it all} values of parameters
  $\lambda,\alpha,\theta$, making it the first case of a family where
  arithmetic spectral transition is described without any parameter exclusion.
The set of eigenvalues can be explicitly identified for all parameters,
using the {\it quantization condition}. We also establish, for the
first time for this or any other model, a quantization condition for singular continuous
spectrum (an arithmetically defined measure zero set that supports
singular continuous measures) for all parameters.


\end{abstract}

\maketitle
\parskip -1pt
\section{Introduction}
The Maryland model
is a  
discrete self-adjoint 
Schr\"{o}dinger operator on $\ell^2(\mathbb{Z})$ of the form
 \begin{equation}\label{MME}
(h_{\lambda,\alpha,\theta}u) _n=u_{n+1}+u_{n-1}+  \lambda \tan \pi(\theta+n\alpha)u_n.
 \end{equation}
We will refer to $\lambda$ as coupling, $\alpha$ as frequency, and
$\theta $ as the phase.  For simplicity, we will sometimes omit the
dependence on parameters $ \lambda,\alpha$ and $\theta$ in notations.
 \par
This operator was proposed by Grempel, Fishman, and Prange in 1982
\cite{grempel1982localization} as a model stemming from the study of
quantum chaos. They exactly computed its integrated
density of states and obtained, in an essentially
rigorous way, a dense set of explicitly determined eigenvalues,
corresponding to exponentially decaying eigenfunctions, for
Diophantine frequencies. Even though artificial,  it
became quite popular in Physics as an exactly solvable example of the family of
incommensurate models, e.g. \cite{berry, scholar}, with spectral
transitions governed by arithmetics. It was recently identified as
the topological quantum phase transition point of  a certain family \cite{new}. Mathematically, it is
interesting due to its richness of spectral theory, abundance of unusual
features, yet amenability to analysis that often has a tendency to become relevant to other potentials as well. It was dubbed the
Maryland model
by B. Simon \cite{MR776654,cyconschrodinger} who  cited the original
work \cite{grempel1982localization}  as a textbook
example of dealing with small divisors. Maryland model is the subject
of Sec. 10.3 in \cite{cyconschrodinger}  and  Ch. 18 in \cite{pastur1992spectra}.

 When $\alpha \in \mathbb{Q},$ operator $h$ is periodic, so has purely absolutely continuous
 spectrum for all $\theta.$ On the other hand, for $\alpha
 \notin \mathbb{Q}$  it has unbounded potential, thus no absolutely continuous
 spectrum for all $\theta$ \cite{simon1989trace}.
As typical for quasiperiodic-type
 potentials it undergoes a transition from Anderson localization (pure
 point spectrum with exponentially decaying eigenfunctions) to
 singular-continuous spectrum as $\alpha$ changes from Diophantine to
 Liouville. Both Diophantine (for all $\theta$ ) and Liouville (for a.e. $\theta$) parts of the above sentence
 were established in   \cite{MR776654,MR767188}, by
 different methods, thus solving the problem of determining the
 spectral type for a.e. $\alpha,\theta$. The intermediate regime of
 $\alpha$ was studied for a.e. $\theta$, and the location of a.e.
 $\theta$  transition in $\alpha$ was conjectured already in
 \cite{MR776654}, but the regime of the neighborhood of the transition
 remained open. Note that typically, in families with transitions,
 neighborhood of the transition both represents  the most challenge and
 is of the most interest. Also the location (or existence) of the
 transition in $\theta$ was not even conjectured previously. In this paper we determine the spectral types for all $\alpha,\theta,$ thus also precisely describing the associated
 arithmetical metal-insulator transitions. As far as we know, this is
 the first time spectral properties of a quasiperiodic operator are described for {\it all} parameters
 in a situation with arithmetic spectral transitions.

We will assume  $\alpha\in \mathbb{R}\backslash \mathbb{Q}$ (otherwise
the spectrum is purely absolutely continuous),
$\lambda > 0$ (otherwise use
$h_{\lambda,\alpha,\theta}=h_{-\lambda,-\alpha,-\theta}$) and $
\theta\notin  1/2 + \alpha\mathbb{Z}+\mathbb{Z}$ (so the potential is
well defined). In the rest of the paper we will often use ``all
$\lambda (\alpha,\theta)$'', meaning ``all $\lambda (\alpha,\theta)$ as above''.

Let $ \frac{p_n}{q_n} $ be  the continued fraction approximants   to
$\alpha.$
Define an index $\delta
(\alpha, \theta)\in [-\infty,\infty]$,
 \begin{equation}\label{Def.delta_1}
    \delta (\alpha, \theta)=\limsup_{n\rightarrow\infty} \frac{ \ln||q_n(\theta-1/2)||_{\mathbb{R}/\mathbb{Z}}+\ln q_{n+1}}{q_n},
 \end{equation}
 where $||x||_{\mathbb{R}/\mathbb{Z}}=\min_{\ell \in
   \mathbb{Z}}|x-\ell| $. Let  $\gamma_\lambda(e)$ be the Lyapunov
 exponent (see (\ref{LEBiaoda}) (\ref{Def.le}))
, and $ k_\lambda(e)$  the IDS of the Maryland model  (see
 (\ref{IDS})). Let
$\sigma_{pp}(h)$,  $\sigma_{sc}(h)$ and
     $\sigma_{ac}(h)$ be the
 pure point spectrum,
   sc spectrum, and ac spectrum  of $h,$ respectively.  Our main result is

\begin{theorem}\label{Maintheorem}
For all $\alpha,\lambda,\theta$, we have
\begin{enumerate}
  \item $\sigma_{sc}(h_{\lambda,\alpha,\theta})=\overline{\{e:\gamma_{\lambda}(e) <\delta (\alpha, \theta) \}}$
\item $\sigma_{pp}(h_{\lambda,\alpha,\theta})=\{e:\gamma_{\lambda}(e)
  \geq \delta (\alpha, \theta) \}$
\item $\sigma_{ac}(h_{\lambda,\alpha,\theta})=\emptyset.$



\end{enumerate}
\end{theorem}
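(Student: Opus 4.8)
\emph{Proof strategy.} I would first assemble three ingredients. (i) An explicit formula for the Lyapunov exponent: using the exact computation of the integrated density of states $k_\lambda$ together with the Thouless formula one obtains $\gamma_\lambda(e)=\ln\left|\frac{(e+i\lambda)+\sqrt{(e+i\lambda)^{2}-4}}{2}\right|$, the branch being the one of modulus $\geq 1$; in particular $\gamma_\lambda$ is real-analytic and strictly positive on $\mathbb{R}$ and $\gamma_\lambda(e)\to\infty$ as $|e|\to\infty$, so each of its level sets is finite. (ii) The spectrum is $\theta$-independent and, the potential being unbounded above and below, $\sigma(h_{\lambda,\alpha,\theta})=\mathbb{R}$ for all $\theta$; moreover $\sigma_{ac}=\emptyset$, which is statement (3) and is the cited fact. (iii) The exactly solvable structure: after the Cayley/$\tan$-addition-formula transformation going back to Grempel--Fishman--Prange, one obtains for every $e$ and $\theta$ an explicit formal solution $u^{+}(e,\theta)=\{u^{+}_n(e,\theta)\}_{n\in\mathbb{Z}}$ of $h_{\lambda,\alpha,\theta}u=eu$ (a ratio of products of linear-fractional expressions in the $e^{2\pi i(\theta+k\alpha)}$) which is the natural Weyl-type, minimal-growth solution at $+\infty$, together with its mirror $u^{-}$ at $-\infty$, and an explicit Wronskian-type quantization function $Q=Q_{\lambda,\alpha,\theta}$ whose zeros are exactly the $e$ with $u^{+}\parallel u^{-}$.

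The technical core is a sharp asymptotic analysis of $u^{\pm}_n(e,\theta)$. Away from the poles of $\tan\pi(\theta+\cdot\,\alpha)$ the $n$-th factor contributes $-\gamma_\lambda(e)+o(1)$ to $\frac1n\ln|u^{\pm}_n|$, while inside a block of length $\sim q_{k+1}$ the near-poles --- located where $\|n\alpha-(\tfrac12-\theta)\|_{\mathbb{R}/\mathbb{Z}}$ is least, which by the three-distance theorem is comparable to $\max(\|q_k(\theta-\tfrac12)\|_{\mathbb{R}/\mathbb{Z}},q_{k+1}^{-1})$ --- produce an upward fluctuation of logarithmic size $\ln\|q_k(\theta-1/2)\|_{\mathbb{R}/\mathbb{Z}}+\ln q_{k+1}$ over that block. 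Summing blockwise with the continued-fraction recursion yields the dichotomy: if $\gamma_\lambda(e)>\delta(\alpha,\theta)$ then $u^{+}\in\ell^2(\mathbb{Z}_{\geq 0})$ and $u^{-}\in\ell^2(\mathbb{Z}_{\leq 0})$ (with $|u^{\pm}_n|\leq C(n)e^{-(\gamma_\lambda(e)-\delta(\alpha,\theta))|n|+o(|n|)}$), and any solution linearly independent from $u^{\pm}$ grows exponentially at the corresponding end; whereas if $\gamma_\lambda(e)<\delta(\alpha,\theta)$ then $\limsup_n\frac1n\ln|u^{\pm}_n|\geq\delta(\alpha,\theta)-\gamma_\lambda(e)>0$ along the subsequence realizing the $\limsup$ in \eqref{Def.delta_1}, so no solution decays at $+\infty$ (resp. $-\infty$) and $e$ is not an eigenvalue. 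The delicate point is the borderline $\gamma_\lambda(e)=\delta(\alpha,\theta)$: there one must still show $u^{\pm}$ is square-summable on the half-line, which uses that the $\limsup$ in \eqref{Def.delta_1} is attained only along a sparse sequence of scales, so that between successive resonant scales the genuine rate-$\gamma_\lambda(e)$ decay outweighs the length-$\sim q_k$ bump of height $e^{\delta(\alpha,\theta)q_k}$. I expect this boundary analysis --- showing the transition belongs to the pure point side --- to be the main obstacle.

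Granting this asymptotic lemma, the spectral decomposition follows quickly. By the lemma, $e$ is an eigenvalue of $h_{\lambda,\alpha,\theta}$ iff $\gamma_\lambda(e)\geq\delta(\alpha,\theta)$ and $Q(e)=0$ (then the common solution $u^{+}\parallel u^{-}$ lies in $\ell^2(\mathbb{Z})$); from the explicit form of $Q$ one checks that its zero set is discrete and accumulates at every point of the closed set $\{e:\gamma_\lambda(e)\geq\delta(\alpha,\theta)\}$, whence $\sigma_{pp}=\overline{\{Q=0\}}=\{e:\gamma_\lambda(e)\geq\delta(\alpha,\theta)\}$, which is (2). For (1): on the open set $\{e:\gamma_\lambda(e)<\delta(\alpha,\theta)\}\subseteq\mathbb{R}=\sigma(h)$ there are no eigenvalues and no ac spectrum, so the spectral measure there is purely singular continuous and charges every nonempty relatively open subset, giving $\{\gamma_\lambda<\delta\}\subseteq\sigma_{sc}$ and hence $\overline{\{\gamma_\lambda<\delta\}}\subseteq\sigma_{sc}$. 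Conversely, to rule out singular continuous spectrum in $\mathrm{int}\{\gamma_\lambda\geq\delta\}$, one applies Schnol's theorem: for $\mu$-a.e.\ $e$ (with $\mu$ a universal spectral measure of $h$) there is a polynomially bounded generalized eigenfunction $\phi$; if in addition $\gamma_\lambda(e)>\delta(\alpha,\theta)$, the asymptotic lemma forces $\phi\parallel u^{+}$ near $+\infty$ and $\phi\parallel u^{-}$ near $-\infty$, so $Q(e)=0$ and $e$ is an eigenvalue. Thus $\mu$ restricted to $\{\gamma_\lambda>\delta\}$ is purely atomic; the remaining set $\{\gamma_\lambda=\delta\}$ is finite, hence carries no continuous part; so $\sigma_{sc}\cap\mathrm{int}\{\gamma_\lambda\geq\delta\}=\emptyset$, completing $\sigma_{sc}=\overline{\{\gamma_\lambda<\delta\}}$. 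Statement (3) is the cited fact.
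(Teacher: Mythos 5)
Your plan rests on ingredient (iii), and that is where it breaks down. The exact solvability of the Maryland model does not produce, for every $e$, a pair of explicit lattice solutions $u^{\pm}$ given by ``products of linear-fractional expressions'' whose half-line asymptotics can then be read off. What the Grempel--Fishman--Prange/Simon structure gives is a \emph{scalar} first-order cohomological equation on the Fourier side: after the Cayley transform the eigenvalue equation becomes $q(x)\hat c(x)=e^{-2\pi i\theta}\hat c(x+\alpha)$ for the distribution $\hat c$, with $|q|=1$ on $\mathbb{T}$ (equation (\ref{TRF3})). Recovering the lattice sequence $c_n$ (hence $u_n$) from this requires solving for the Fourier coefficients $\psi_k=\zeta_k/(e^{-2\pi ik\alpha}-1)$, which is exactly the small-divisor problem; there is no closed-form solution of the difference equation itself, so your ``technical core'' analyzes the asymptotics of objects that have not been constructed. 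The paper's two halves are correspondingly different from your route: absence of point spectrum on $\{e:\gamma_\lambda(e)<\delta(\alpha,\theta)\}$ is proved by a Gordon-type argument adapted to the singular potential (assume an $\ell^2$ solution, show $A_{\tilde q_k}$ is nearly periodic with period $\tilde q_k$ despite the poles of $\tan$, and contradict decay via $\max\{\|B^2x\|,\|Bx\|,\|B^{-1}x\|\}\ge \tfrac14$), while pure point spectrum on $\{e:\gamma_\lambda(e)>\delta(\alpha,\theta)\}$ is proved by solving the cohomological equation and showing that any polynomially bounded solution forces the quantization condition.

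Two further load-bearing steps are asserted rather than proved. First, $\sigma(h_{\lambda,\alpha,\theta})=\mathbb{R}$ for \emph{all} $\theta$ does not follow from unboundedness of the potential; because $\tan$ is discontinuous and unbounded, the usual minimality/strong-resolvent-continuity argument fails, and the paper needs a norm-resolvent continuity argument based on the explicit Green-function identity (\ref{Def.Green}) (Theorem \ref{Spectrumforalltheta}). Without this, your inclusion $\{e:\gamma_\lambda(e)<\delta(\alpha,\theta)\}\subseteq\sigma(h)$ --- the only way you obtain $\{e:\gamma_\lambda(e)<\delta(\alpha,\theta)\}\subseteq\sigma_{sc}$ --- is unsupported; the paper explicitly flags that denseness of supports of singular continuous measures in this set is not automatic. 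Second, the borderline case $\gamma_\lambda(e)=\delta(\alpha,\theta)$, which you identify as the main obstacle and propose to settle by showing $u^{\pm}\in\ell^2$ on a half-line, is neither needed nor (as far as is known) true: since $\sigma_{pp}$ is the closure of the set of eigenvalues and $\{e:\gamma_\lambda(e)=\delta(\alpha,\theta)\}$ consists of at most two points, each a limit of eigenvalues from $\{e:\gamma_\lambda(e)>\delta(\alpha,\theta)\}\cap Q_{\lambda,\alpha,\theta}$, these points lie in $\sigma_{pp}$ automatically; the paper in fact proves (Proposition \ref{exponentiallydecayingeigenfunctions}) that boundary energies cannot carry exponentially decaying eigenfunctions and deliberately leaves open whether they are eigenvalues at all.
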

\vskip .5 in

\begin{tikzpicture}
\draw [->](-3.5,0)--(3.5,0);
\draw [->](0,0)--(0,4.5);
\draw [dashed] (-1.5,0)--(-1.5,1.6);
\draw [dashed](1.5,0)--(1.5,1.6);
\draw (-2.7,1.6)--(2.7,1.6);
\draw plot [smooth] coordinates {(-2.8,2.4)(-1.5,1.6)(0,0.7)(1.5,1.6)(2.8,2.4)};
\node [below] at (2.9,0){$e$};
\node [right] at (0,4.3){$\gamma_{\lambda}(e)$};
\node [right] at (2.7,1.6){$\delta(\alpha,\theta)$};
\node [above] at (2.2,0){$ \sigma_{pp}$};
\node [above] at (-2.2,0){$ \sigma_{pp}$};
\node [above] at (0,0){$ \sigma_{sc}$};
\node [below] at (0,-0.3){Fig.1};

\end{tikzpicture}
\begin{tikzpicture}
\draw [->](-3.5,0)--(3.5,0);
\draw [->](0,0)--(0,4.5);
\draw (-2.7,0.7)--(2.7,0.7);
\draw plot [smooth] coordinates {(-2.8,2.4)(-1.5,1.6)(0,0.7)(1.5,1.6)(2.8,2.4)};
\node [below] at (2.9,0){$e$};
\node [right] at (0,4.3){$\gamma_{\lambda}(e)$};
\node [right] at (2.7,0.7){$\delta(\alpha,\theta)$};
\node [above] at (0,0){$ \sigma_{pp}$};
\node [below] at (0,-0.3){Fig.2};

\end{tikzpicture}
\begin{tikzpicture}
\draw [->](-3.5,0)--(3.5,0);
\draw [->](0,0)--(0,4.5);
\draw (-2.7,0.5)--(2.7,0.5);
\draw plot [smooth] coordinates {(-2.8,2.4)(-1.5,1.6)(0,0.7)(1.5,1.6)(2.8,2.4)};
\node [below] at (2.9,0){$e$};
\node [right] at (0,4.3){$\gamma_{\lambda}(e)$};
\node [right] at (2.7,0.5){$\delta(\alpha,\theta)$};
\node [above] at (0,0){$ \sigma_{pp}$};
\node [below] at (0,-0.3){Fig.3};

\end{tikzpicture}


\parskip 0pt
\begin{remark}
(3) is of course listed for completeness only, as it follows directly
from \cite{simon1989trace}.
 \end{remark}

   \begin{remark} This gives the following more precise description:\\
  \begin{description}
    \item[case 1]   If $\delta (\alpha, \theta)=+\infty$, then Theorem
      \ref{Maintheorem} implies $h$ has purely singular continuous
      spectrum with
      $\sigma_{sc}(h_{\lambda,\alpha,\theta})=(-\infty,\infty)$.
    \item[case 2]  If $\gamma_{\lambda}(0)<\delta (\alpha, \theta)<+\infty$ (see Fig.1),  it is clear that $\overline{\{e:\gamma_{\lambda}(e) <\delta (\alpha, \theta) \}}=\{e:\gamma_{\lambda}(e) \leq \delta (\alpha, \theta) \}$. Then

         \begin{equation*}
         \sigma_{sc}(h_{\lambda,\alpha,\theta})=\{e:\gamma_{\lambda}(e)\leq\delta (\alpha, \theta) \}
  \end{equation*}
    and
  \begin{equation*}
  \sigma_{pp}(h_{\lambda,\alpha,\theta})=\{e:\gamma_{\lambda}(e) \geq \delta (\alpha, \theta) \} .
  \end{equation*}

    \item[case 3] If $\delta (\alpha, \theta)\leq
      \gamma_{\lambda}(0)$ (see Fig.2 and Fig.3), it is clear that
      $\overline{\{e:\gamma_{\lambda}(e) <\delta (\alpha, \theta) \}}=
      \emptyset$.  Then $h$ has pure point spectrum with
      $\sigma_{pp}(h_{\lambda,\alpha,\theta})=(-\infty,\infty)$.
  \end{description}
   \end{remark}


   \begin{remark}  For Diophantine $\alpha$ and any $\theta$, $\delta (\alpha, \theta)=0,$
thus we are always in case 3. For Liouville $\alpha$ and
a.e. $\theta$, $\delta (\alpha, \theta)=\infty,$ so we are in case
1. However, for every $\alpha$, no matter how Liouville, there
  exists a dense  set of $\theta$ as in Case 3, so with pure
  point spectrum;  see Corollary \ref{cannotextendtoalltheta}.
\end{remark}

The more precise history of this question is the following. The results of \cite{grempel1982localization} were made rigorous, by different methods, in
\cite{MR776654,MR767188} thus proving the Diophantine and
 Liouville (a.e. $\theta$) version of
Theorem~\ref{Maintheorem}. Moreover, the analysis of  Simon
\cite{MR776654}  went
deeper in the arithmetics of $\alpha.$ Let us define an index
$\beta(\alpha)\in [0,\infty]$ by

\begin{equation}\label{Def.beta}
   \beta(\alpha)=\limsup_{n\rightarrow\infty}\frac{\ln q_{n+1}}{q_n}.
 \end{equation}

Note that for Diophantine $\alpha$, $\beta(\alpha)=0,$ and for
Liouville, $\beta(\alpha)=\infty.$ Simon showed that for
a.e. $\theta$, $h$ has purely dense point spectrum on
$\{e:\gamma_{\lambda}(e)>\beta(\alpha)\}$ and purely singular
continuous spectrum on $\{e:\tilde{\gamma}_{\lambda}(e)
<\frac{1}{2}\beta(\alpha)\}$, for a certain
$\tilde{\gamma}_{\lambda}(e)>\gamma_{\lambda}(e).$
The latter was improved to $\{e:\gamma_{\lambda}(e)
<\frac{1}{2}\beta(\alpha)\}$ in \cite{pastur1992spectra}.  Therefore
those results left open the neighborhood of the transition: $\{e:\beta(\alpha)\ge\gamma_{\lambda}(e)
\ge\frac{1}{2}\beta(\alpha)\}$
and are only at the measure-theoretic level in
$\theta$'s.\footnote{ It should be
noted that while one can extract an explicit condition on the removed
sets of $\theta$ in \cite{MR776654}, it is intrinsically not possible
within the argument of Pastur-Figotin
\cite{MR767188,pastur1992spectra}.}

There have been a number of interesting studies of a multidimensional
generalization of Maryland model (already in \cite{MR776654,MR767188,bls}) and other related models, for example, surface Maryland
model (e.g. \cite{jakvsic1998spectrum}). Certain other interesting aspects of the classical
Maryland model were also explored, e.g. \cite{fedotov2013exact}. However, the state
of the art of the question of spectral decomposition remained as
described above.

In general, having Diophantine and Liouvillean approaches meet at the
transition (or non-transition) is known to be a delicate task. It has
been recently achieved in several notable quasiperiodic problems
\cite{avila2009ten,AvilaYouZhou}. Here we see our main accomplishment however in
obtaining the result for {\it all} $\theta$ and precisely describing the
corresponding transition.

Our next result is concerned with the {\it quantization condition}: the
explicit description of the support of the spectral measure of $h.$
The Maryland model is often called exactly solvable. Indeed,
Fishman, Grempel, and Prange computed explicitly the Lyapunov exponent
$\gamma_\lambda(e)>0$ as satisfying
\begin{equation}\label{LEBiaoda}
     4\cosh \gamma_\lambda(e)=  \sqrt{ (2+e)^2 +\lambda^2} +\sqrt{ (2-e)^2 +\lambda^2},
 \end{equation}
and the integrated
density of states $k_{\lambda}(e)$ (IDS) (see \cite{MR700145} for the
definition of IDS) as
\begin{equation}\label{IDS}
    k_{\lambda}(e)=\frac{1}{2} +\frac{1}{\pi }\arctan (\ e /| \lambda |\tanh \gamma_{\lambda}(e) ).
\end{equation}
(for all $\lambda,\alpha$)\footnote{It may be interesting to think of
  the corresponding cocycle as monotonic. Even
  though, due to the discontinuity
  the theorems of \cite{mon} do not apply, some of the conclusions
  still hold.} and presented an explicit dense countable set
of $e$ that are eigenvalues of $ h_{\lambda,\alpha,\theta}$ with
Diophantine $\alpha :$ energies $e$ such that
\begin{equation} \label{q} k_\lambda(e)\in \theta-1/2+   \alpha\mathbb{Z}
   +\mathbb{Z}.
\end{equation}

 This was made rigorous in \cite{MR776654, MR767188}  (in particular, the completeness of
corresponding eigenfunctions was shown) and Simon dubbed (\ref{q}) the
{\it quantization condition} and extended it to the regime
$\{e:\gamma_{\lambda}(e)>\beta(\alpha)\}$. Thus in this regime the
supports of all spectral measures can be explicitly described through
(\ref{IDS}),(\ref{q}). Simon then posed a question of finding a
quantization condition (thus precise arithmetic description) for
supports of singular continuous spectral measures. Here we solve a
certain version of it. To formulate this
more precisely, let $\mu_{\lambda,\alpha,\theta}=\frac{1}{2}(\mu_{\lambda,\alpha,\theta}^{\delta_{-1}}+\mu_{\lambda,\alpha,\theta}^{\delta_0})$, where
$\mu_{\lambda,\alpha,\theta}^{\delta_i}$ is the spectral measure of operator
  $ h_{\lambda,\alpha,\theta}$ with respect to vector $\delta_i$, $i=-1,0$
($\delta_i$
is the Dirac mass at $i\in\mathbb{Z}$). A set
$A$ is a support of a singular measure
$\mu$ if $|A|=
\mu(A^c)=0,$ where
$|\cdot|$ stands for Lebesgue measure.  While
for a point measure it is natural to identify its support as the
collection of point masses, for singular continuous measure it is only
well defined up to sets of Lebesgue and $\mu$ measure zero.

Now, let  $$A_{\lambda,\alpha,\theta}= \{e:
||q_n(k_\lambda(e)-\theta-1/2)||\to 0 \;\mbox{as}\; n\to\infty\} .$$

Also let
   $$Q_{\lambda,\alpha,\theta}=\{e: k_\lambda(e)\in \theta-1/2+   \alpha\mathbb{Z}
   +\mathbb{Z}\}.$$ As $k_\lambda(e)$ is continuous  and strictly monotone, $Q_{\lambda,\alpha,\theta}$
   is a countable dense subset of $\mathbb{R},$ and
   $A_{\lambda,\alpha,\theta}$ is dense in $\mathbb{R}.$ Also,
   $|\{a\in[0,1]:||q_n(a-\theta-1/2)||<\epsilon\}|<2\epsilon,$ so we have that
   $|\{a\in[0,1]:||q_n(a-\theta-1/2)||\to 0\}|=0$ and thus
   $|A_{\lambda,\alpha,\theta}|=0 $ by the fact
   $\frac{d k_\lambda(e)}{de}>0.$

Let $\sigma_p(h)$ be the collection of all eigenvalues of $h.$ We have

\begin{theorem} \label{quant} For all $\lambda,\alpha,\theta$ as above
 \begin{enumerate}
\item $\mu_{\lambda,\alpha,\theta}$ is supported on $A_{\lambda,\alpha,\theta}$,
\item  the set of eigenvalues with corresponding eigenfunctions
  decaying exponentially coincides with  the set
  $\{e:\gamma_{\lambda}(e) >\delta (\alpha, \theta) \}\cap Q_{\lambda,\alpha,\theta}$,
\item $\{e:\gamma_{\lambda}(e)
>\delta (\alpha, \theta) \}\cap
Q_{\lambda,\alpha,\theta}\subseteq\sigma_{p}(h_{\lambda,\alpha,\theta})\subseteq
\{e:\gamma_{\lambda}(e)
\geq\delta (\alpha, \theta) \}.$
\end{enumerate}
\end{theorem}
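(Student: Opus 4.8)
The plan is to exploit the classical transfer-matrix/Weyl-disk analysis of the Maryland model, reducing everything to an explicit one-parameter family of rank-one perturbations. Because $\tan\pi(\theta+n\alpha)$ is unbounded, the standard trick (going back to \cite{grempel1982localization,MR776654}) is to conjugate by a Cayley-type transform: one checks that $e$ solves $h_{\lambda,\alpha,\theta}u=eu$ formally iff the associated sequence satisfies a nicer equation whose transfer matrices are the $SL(2,\mathbb{R})$ cocycle whose Lyapunov exponent is $\gamma_\lambda(e)$ in \eqref{LEBiaoda}, and the rotation number of that cocycle is exactly $k_\lambda(e)$ via \eqref{IDS}. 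The key structural identity, already implicit in \cite{MR776654}, is that for energies in $Q_{\lambda,\alpha,\theta}$ the cocycle is (conjugate to) a constant rotation, so the solutions are genuinely quasi-periodic/exponential, while the decay rate of any $\ell^2$-solution, when it exists, is governed by the competition between the contracting rate $\gamma_\lambda(e)$ of the transfer matrix and the small-denominator growth measured by $\delta(\alpha,\theta)$, which is why $\delta(\alpha,\theta)$ enters precisely as written. I would first record this reduction carefully, then prove the three items in the order (1), (3), (2).

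For item (1): I would show that $\mu_{\lambda,\alpha,\theta}$ is supported on $A_{\lambda,\alpha,\theta}$ by a generalized-eigenfunction argument. By the Schnol/Berezanskii theorem, $\mu$-a.e. $e$ is a generalized eigenvalue with a polynomially bounded solution $u$. For such $e\notin A_{\lambda,\alpha,\theta}$, one has $\limsup_n\|q_n(k_\lambda(e)-\theta-1/2)\|>0$ along a subsequence; feeding this into the explicit transfer-matrix product over a period $q_n$ and using the uniform upper bound on $\gamma_\lambda$ together with the sub-exponential control of $q_{n+1}$ on that subsequence, I would derive that $u$ must grow, contradicting polynomial boundedness. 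This is the standard ``if the phase resonance condition fails then no polynomially bounded solution exists'' mechanism; the subtlety is handling the singularities of $\tan$, which I would do by working with the Cayley-transformed equation where the cocycle is bounded. The Lebesgue-null-ness of $A_{\lambda,\alpha,\theta}$ is already established in the excerpt.

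For item (3), the inclusion $\sigma_p\subseteq\{e:\gamma_\lambda(e)\ge\delta(\alpha,\theta)\}$: an eigenfunction is in particular polynomially bounded, hence (by item (1) applied pointwise, or rather by the same transfer-matrix estimate) the eigenvalue lies in $A_{\lambda,\alpha,\theta}$, and a quantitative version shows that $\ell^2$ membership forces $\|q_n(k_\lambda(e)-\theta-1/2)\|$ to decay at least like $e^{-\gamma_\lambda(e)q_n}\cdot q_{n+1}^{-1}$ along the relevant subsequence, which is exactly the statement $\gamma_\lambda(e)\ge\delta(\alpha,\theta)$. For the reverse inclusion $\{e:\gamma_\lambda(e)>\delta(\alpha,\theta)\}\cap Q_{\lambda,\alpha,\theta}\subseteq\sigma_p$, and simultaneously item (2): for $e\in Q_{\lambda,\alpha,\theta}$ the Cayley-transformed cocycle is conjugate to a constant rotation by $2\pi k_\lambda(e)$ with $k_\lambda(e)\in\theta-1/2+\alpha\mathbb{Z}+\mathbb{Z}$, so one can write down a candidate solution explicitly (this is precisely Grempel--Fishman--Prange's construction); the small-divisor estimate controlling its $\ell^2$-norm and exponential decay involves exactly $\sum_n\|q_n(\cdots)\|^{-1}$-type sums, which converge with exponential decay rate $\gamma_\lambda(e)-\delta(\alpha,\theta)-\varepsilon$ precisely when $\gamma_\lambda(e)>\delta(\alpha,\theta)$. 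Conversely if $e\in Q_{\lambda,\alpha,\theta}$ but $\gamma_\lambda(e)\le\delta(\alpha,\theta)$, the same explicit solution fails to be $\ell^2$, and one argues (using that in $Q$ the two solutions of the difference equation are explicit) that there is no $\ell^2$ solution at all, so $e\notin\sigma_p$; this together with the exponential-decay bound gives that the eigenvalue set with exponentially decaying eigenfunctions is exactly $\{e:\gamma_\lambda(e)>\delta(\alpha,\theta)\}\cap Q_{\lambda,\alpha,\theta}$, which is item (2).

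The main obstacle I anticipate is the borderline set $\{e\in Q_{\lambda,\alpha,\theta}:\gamma_\lambda(e)=\delta(\alpha,\theta)\}$: here item (3) allows $e$ to be an eigenvalue (with necessarily non-exponentially-decaying, e.g. merely $\ell^2$ power-law or subexponential, eigenfunction) or not, and the theorem only asserts the two-sided inclusion rather than an equality — so I do not need to resolve it, but I must make sure the $\ell^2$ summability argument is pushed to extract decay strictly faster than $\gamma_\lambda(e)>\delta(\alpha,\theta)$ and no further. A secondary technical point is that $\delta(\alpha,\theta)$ is a $\limsup$, so the small-divisor sums must be split into the ``resonant'' scales $q_n$ (where $\|q_n(k_\lambda(e)-\theta-1/2)\|$ is as small as the $\limsup$ permits) and the ``non-resonant'' in-between scales, with a Diophantine-type estimate (three-distance/continued-fraction bounds) controlling the latter; this bookkeeping is routine but must be done with care so that the exponential rate $\gamma_\lambda(e)$ is not degraded by the accumulation over the many non-resonant scales.
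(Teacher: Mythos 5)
The most serious gap is in your route to the inclusion $\sigma_{p}(h_{\lambda,\alpha,\theta})\subseteq\{e:\gamma_{\lambda}(e)\geq\delta(\alpha,\theta)\}$ in part (3). You propose to extract from mere $\ell^2$ membership of an eigenfunction a quantitative decay $\|q_n(k_\lambda(e)-\theta-1/2)\|\lesssim e^{-\gamma_\lambda(e)q_n}q_{n+1}^{-1}$. First, this conflates the quantity defining $A_{\lambda,\alpha,\theta}$ with the quantity $\|q_n(\theta-1/2)\|$ entering $\delta(\alpha,\theta)$. Second, and decisively, the small-divisor computation that produces the factor $e^{-\gamma_\lambda(e)q_n}$ (namely $\psi_{q_n}=\zeta_{q_n}/(e^{-2\pi iq_n\alpha}-1)$ with $|\zeta_{q_n}|\sim e^{-\gamma_\lambda(e)q_n}\|q_nk_\lambda(e)\|$) is only available once one knows the conjugacy $\hat{c}(x)=e^{-2\pi i(mx+\psi(x))}$ has exponentially decaying Fourier coefficients, i.e.\ once the eigenfunction itself decays exponentially; for a merely $\ell^2$ eigenfunction this representation (with a well-defined winding number) is not available and no such bound follows. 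The paper obtains this inclusion instead from the absence of point spectrum on $\{e:\gamma_{\lambda}(e)<\delta(\alpha,\theta)\}$, proved by a sharp Gordon-type argument for the singular cocycle (Theorem \ref{maintheorem1}, via Lemmas \ref{Lemgordonidea1}--\ref{Transfermatrixperiodic}); this is one of the two main technical inputs of the whole paper and is entirely absent from your outline. Relatedly, your claim that for $e\in Q_{\lambda,\alpha,\theta}$ with $\gamma_\lambda(e)\leq\delta(\alpha,\theta)$ there is ``no $\ell^2$ solution at all'' would prove strictly more than the theorem asserts (the set $\{\gamma_\lambda(e)=\delta(\alpha,\theta)\}$ is deliberately left undecided) and is unjustified.

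Two further points. For part (1), ``feeding the failure of the resonance condition into the transfer-matrix product over a period $q_n$'' is not a workable mechanism: $k_\lambda(e)$ is an averaged (IDS/rotation-number) quantity with no direct pointwise relation to the $q_n$-step transfer matrix at a fixed phase when the Lyapunov exponent is positive. What the paper actually does is iterate the Cayley/Fourier-transformed cohomological equation $e^{-2\pi i\zeta(x)}\hat c(x)=e^{-2\pi i\theta}\hat c(x+\alpha)$ along a carefully chosen sequence $t_k=\ell q_n$ with $\ell\leq\min\{\lfloor q_n^2/\|q_nk_\lambda(e)\|_{\mathbb{R}/\mathbb{Z}}\rfloor,\lfloor 2q_{n+1}/q_n\rfloor\}$, so that $\hat c(x+t_k\alpha)\to\hat c(x)$ distributionally while the Birkhoff sums $\sum_{j=0}^{t_k-1}\zeta(x+j\alpha)-t_k\zeta_0$ converge uniformly on a strip (Lemma \ref{Twocondition} together with Lemma \ref{A.1}); the resonance condition then falls out of $\zeta_0=k_\lambda(e)-1/2$. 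The delicate point is precisely the competing constraints in the choice of $\ell$, which your sketch does not identify. Finally, for part (2) you never establish that an exponentially decaying eigenfunction forces $e\in Q_{\lambda,\alpha,\theta}$: this requires showing $|\hat c|$ is constant by ergodicity, extracting the winding number $m$, and reading off $k_\lambda(e)=\theta-1/2+m\alpha\bmod 1$ from the zeroth Fourier coefficient; and for eigenvalues in $\{e:\gamma_\lambda(e)>\delta(\alpha,\theta)\}$ whose eigenfunctions are a priori only polynomially bounded one needs the separate argument of Proposition \ref{Proquantizationcondition} (construction of the auxiliary analytic solution $\hat c_1$ and the quotient $d=\hat c/\hat c_1$) to conclude the quantization condition.
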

\begin{remark}
Thus
$\sigma_{p}(h_{\lambda,\alpha,\theta})$ may differ from $\{\{e:\gamma_{\lambda}(e)
>\delta (\alpha, \theta) \}\cap Q_{\lambda,\alpha,\theta}\}$ only by a
subset of
$\{e:\gamma_{\lambda}(e)=\delta (\alpha, \theta) \}.$ Note  that
$\{e:\gamma_{\lambda}(e)=\delta (\alpha, \theta) \}$  consists of at most
two points, see Fig. 1-3.
\end{remark}

Theorem \ref{quant} therefore extends the quantization condition for
point spectrum to the entire region $\{e:\gamma_{\lambda}(e) >\delta
(\alpha, \theta) \} $ which is the entire set where point spectrum can
possibly be located up
to at most two points, and  establishes a version of the quantization condition for supports of
singular continuous measures: an arithmetically defined measure zero
set that supports $\mu_{\lambda,\alpha,\theta}$. To the best of our
knowledge, the latter is the first such arithmetic condition for
supports of singular conrinuous spectra, for any model\footnote{While an arithmetic
condition for supports of singular continuous spectra appearing for
Diophantine $\alpha$ and exceptional phases for the almost Mathieu
operators can be extracted from the proof in \cite{jems}, it does not
currently cover all parameters}. We believe that the methods of this
work could lead to obtaining information on supports of singular
continuous spectral measures for many other quasiperiodic operators
that in turn may allow dimensional analysis that is not intrinsically one-dimensional.

Of course, since supports are defined up to $\mu$- measure zero sets,
this could ostensibly be improved by finding smaller sets that have
the same properties. Indeed, we in fact prove a stronger version of
part 1, finding arithmetically defined $B_{\lambda,\alpha,\theta}$
that are proper subsets of $A_{\lambda,\alpha,\theta}$ and support
$\mu_{\lambda,\alpha,\theta}$. We formulate part 1 the way we do
(rather than a stronger statement that we prove) purely for aesthetic
purposes.

A well known support of $\mu_{\lambda,\alpha,\theta}$ is
$C_{\lambda,\alpha,\theta}:=\{e :$ such that there exists a
polynomially bounded solution to $h_{\lambda,\alpha,\theta}u=eu\}$.  To
find an arithmetic description of $C_{\lambda,\alpha,\theta}$ remains
an interesting open problem.

Finally, because of this aspect even after Theorem \ref{quant} and
absence of eigenvalues in $\{e:\gamma_{\lambda}(e)
<\delta (\alpha, \theta) \}$ is
established,  case (1) of Theorem \ref{Maintheorem} does not follow because it is
not clear that supports of singular continuous spectral measures {\it
  have to be}
dense in  $\{e:\gamma_{\lambda}(e) <\delta (\alpha, \theta)
\}$. Indeed, the fact that $\text{spec}(h)=\{e: k_{\lambda}(e+\varepsilon)-k_{\lambda}(e-\varepsilon) >0 \text{ for any } \varepsilon>0\}$
and (\ref{IDS}) only
 imply that the spectrum of Maryland model $
\sigma(h_{\lambda,\alpha,\theta})$ is $(-\infty,\infty)$   for almost every phase $\theta$ \cite{cyconschrodinger}.
For bounded potentials with some continuity it is easy to extend such result
to {\it all} $\theta$. Moreover,
 the fact that $
\sigma(h_{\lambda,\alpha,\theta}) \supseteq \{e:\gamma_{\lambda}(e)\geq\delta(\alpha,\theta)\}$  for all $\theta$
follows from the denseness of $Q_{\lambda,\alpha,\theta}.$ However, in the singular
continuous regime, denseness of $A_{\lambda,\alpha,\theta}$ does not
automatically yield the result. Still
\begin{theorem}\label{Spectrumforalltheta}
We have $\sigma(h_{\lambda,\alpha,\theta})=(-\infty,\infty)$ for all phases
$\theta.$
\end{theorem}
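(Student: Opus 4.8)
The plan is to prove the two inclusions separately. One inclusion is essentially free: we already know from the density of $Q_{\lambda,\alpha,\theta}$ that $\sigma(h_{\lambda,\alpha,\theta}) \supseteq \{e:\gamma_{\lambda}(e)\geq\delta(\alpha,\theta)\}$, and by Theorem \ref{Maintheorem} this latter set is exactly $\sigma_{pp}(h)$; combined with $\sigma_{sc}(h)=\overline{\{e:\gamma_{\lambda}(e)<\delta(\alpha,\theta)\}}$ and $\sigma(h)=\sigma_{pp}(h)\cup\sigma_{sc}(h)\cup\sigma_{ac}(h)$, we get $\sigma(h)=\{e:\gamma_\lambda(e)\ge\delta\}\cup\overline{\{e:\gamma_\lambda(e)<\delta\}}$. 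Since $\gamma_\lambda$ is continuous with $\gamma_\lambda(e)\to\infty$ as $|e|\to\infty$ and has a single minimum at $e=0$ (from \eqref{LEBiaoda}), the two sets $\{e:\gamma_\lambda(e)\ge\delta\}$ and $\overline{\{e:\gamma_\lambda(e)<\delta\}}$ together cover all of $\mathbb{R}$ as soon as $\delta\ge\gamma_\lambda(0)$ or $\delta<\infty$ with the closure picking up the boundary; in every one of the three cases of the Remark after Theorem \ref{Maintheorem} the union is seen directly to be $(-\infty,\infty)$. So in fact Theorem \ref{Spectrumforalltheta} is a formal consequence of Theorem \ref{Maintheorem} once one checks the elementary set-theoretic identity case by case. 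I would present it exactly this way: invoke Theorem \ref{Maintheorem}, then dispatch cases 1, 2, 3.

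To make the argument self-contained (and in case one wants it independent of the full spectral decomposition), I would also give the alternative route sketched in the text. The spectrum of $h_{\lambda,\alpha,\theta}$ does not depend on $\theta$ up to a set one can control, but more robustly: one knows $\sigma(h_{\lambda,\alpha,\theta})=(-\infty,\infty)$ for a.e. $\theta$ from \eqref{IDS} and the formula $\mathrm{spec}(h)=\{e:k_\lambda(e+\varepsilon)-k_\lambda(e-\varepsilon)>0\ \forall\varepsilon>0\}$, since $k_\lambda$ is strictly increasing and continuous, hence this holds for \emph{every} $\theta$ for which that characterization is valid, i.e. every $\theta$ — the point being that the IDS $k_\lambda$ is $\theta$-independent. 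The subtlety is that for individual $\theta$ the spectrum is $\{e:k_\lambda(e+\varepsilon)-k_\lambda(e-\varepsilon)>0\}$ only after one knows $k_\lambda$ is the a.s.\ IDS and that the spectrum is a.s.\ constant; since $k_\lambda(e)$ is strictly monotone this a.s.\ spectrum is all of $\mathbb{R}$. Then for the remaining (measure zero) set of $\theta$, one argues that $\sigma(h_{\lambda,\alpha,\theta})\supseteq\{e:\gamma_\lambda(e)\ge\delta(\alpha,\theta)\}$ always (density of $Q$, as noted) and $\sigma(h_{\lambda,\alpha,\theta})\supseteq \sigma_{sc}=\overline{\{e:\gamma_\lambda(e)<\delta(\alpha,\theta)\}}$ by Theorem \ref{Maintheorem}(1), and again conclude by the case analysis.

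The main obstacle, as the introduction flags, is that density of $A_{\lambda,\alpha,\theta}$ does \emph{not} by itself force the singular continuous part of the spectrum to fill $\{e:\gamma_\lambda(e)<\delta\}$: supports of singular continuous measures are only defined modulo $\mu$-null sets, so $A$ being dense says nothing about $\sigma_{sc}$ being dense. Consequently one genuinely needs the conclusion $\sigma_{sc}(h)=\overline{\{e:\gamma_\lambda(e)<\delta(\alpha,\theta)\}}$ from Theorem \ref{Maintheorem}(1), which is proved earlier by exhibiting enough non-point spectral measure in that region (or, equivalently, by the $\theta$-a.e.\ fact plus upper-semicontinuity/lower-semicontinuity arguments transferring the spectrum to all $\theta$). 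Once Theorem \ref{Maintheorem} is in hand, Theorem \ref{Spectrumforalltheta} is the short case check described above; the work is entirely in that earlier theorem, and this statement is best viewed as a corollary recording the clean consequence.
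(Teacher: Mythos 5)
Your proposal is circular as written. In this paper the logical order is the reverse of what you assume: Theorem \ref{Spectrumforalltheta} is proved \emph{first}, independently, and is then an essential input to the proof of Theorem \ref{Maintheorem}. Indeed, Theorems \ref{maintheorem1} and \ref{maintheorem2} only identify the spectral \emph{type} on $\{e:\gamma_\lambda(e)<\delta\}$ and $\{e:\gamma_\lambda(e)>\delta\}$; to conclude that $\sigma_{sc}(h)$ actually equals $\overline{\{e:\gamma_\lambda(e)<\delta\}}$ one must already know that the spectrum fills the whole line, precisely because (as the introduction stresses, and as you yourself note in your last paragraph) density of the arithmetic support $A_{\lambda,\alpha,\theta}$ does not force the singular continuous spectrum to be dense there. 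So deriving Theorem \ref{Spectrumforalltheta} from Theorem \ref{Maintheorem}(1) begs the question, and your fallback for the exceptional measure-zero set of $\theta$ --- invoking $\sigma_{sc}=\overline{\{e:\gamma_\lambda(e)<\delta\}}$ --- is exactly the step that is unavailable at this point.

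The missing ingredient is a proof that $\sigma(h_{\lambda,\alpha,\theta})$ is literally independent of $\theta$ (not merely a.e.\ constant), after which the a.e.\ statement coming from the strictly increasing IDS upgrades to all $\theta$. Your second paragraph gestures at ``upper/lower semicontinuity arguments transferring the spectrum to all $\theta$,'' but for bounded continuous potentials this is done via strong (or norm) resolvent convergence of $h_{\theta+j_k\alpha}$ to $h_{\theta'}$, and for the Maryland model the potential is unbounded and discontinuous, so this convergence is exactly the nontrivial point. The paper supplies it through the explicit Carmona--Lacroix resolvent identity
\begin{equation*}
 G_z(\theta)=(I+e^{2\pi i \theta}U)\,(I+e^{2\pi i\theta}CU)^{-1}\,G_{0,z-i\lambda},\qquad \lambda\cdot\Im z<0,\ \|C\|<1,
\end{equation*}
which shows $\theta\mapsto G_z(\theta)$ is norm continuous; then $h_{\lambda,\alpha,\theta_1+j_k\alpha}\to h_{\lambda,\alpha,\theta_2}$ in norm resolvent sense along $\theta_1+j_k\alpha\to\theta_2$, the approximants all have spectrum $\sigma(h_{\lambda,\alpha,\theta_1})$, and Theorem VIII.24 of Reed--Simon yields $\sigma(h_{\lambda,\alpha,\theta_2})\subseteq\sigma(h_{\lambda,\alpha,\theta_1})$, hence equality by symmetry. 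Without this (or some substitute argument handling the unbounded, discontinuous potential), your proof does not close.
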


\par


The strategy in our proof of the main result is overall following that
 in Simon \cite {MR776654}, but we present novel ideas in both
localization and singular continuous parts.

From the localization side, our main contribution consists in a certain new
technique in handling the cohomological equation in the regime of very
small denominators: $\{e:\gamma_{\lambda}(e)<\beta(\alpha)\}$.

From the singular continuous side, we develop a version of Gordon-type
argument allowing to handle singular potentials in a sharp way. While
in the interests of brevity we present here an argument taylored to
the Maryland model, it is actually quite robust, and can be easily
modified to obtain similar results for a large class of
quasiperiodic-type models. Moreover, we believe our method has an even wider
applicability.

The rest of the paper is organized as follows:
 \par
 In \S2, we present a Gordon-type method for cocycles with
 singularities and use it   to  prove that
 $h_{\lambda,\alpha,\theta}$ has purely singular continuous spectrum
 on  $\{e:\gamma_{\lambda}(e) <\delta (\alpha, \theta) \}$.
 In \S3, we show that $h_{\lambda,\alpha,\theta}$  has  only pure point spectrum
 on  $\{e:\gamma_{\lambda}(e) >\delta (\alpha, \theta) \}$.
 In \S4, we prove that $
 \sigma(h_{\lambda,\alpha,\theta})=(-\infty,\infty)$ for all
 parameters, proving Theorem \ref{Spectrumforalltheta} and
  Theorem \ref{Maintheorem}. We also show that for any
  $\alpha\in\mathbb{R}\backslash\mathbb{Q},$ there exist phases with
  pure point spectrum.
In \S 5,  we turn to another aspect and 
 compute the  complexified Lyapunov exponent using Avila's technique \cite{avila2009global}. Besides being a new (and
 very simple) computation of $\gamma_\lambda(e)$, it gives another
 example of how regularity can coexist with positive Lyapunov exponent
 on the spectrum of an operator with singular potential.

\section{Singular continuous spectrum }
  In this section, we prove the following theorem.
 \begin{theorem}\label{maintheorem1}
Let
   $\delta (\alpha, \theta)$ be as in (\ref{Def.delta_1}),
 then $h_{\lambda,\alpha,\theta}$ has purely singular continuous spectrum
 on  $\{e:\gamma_{\lambda}(e) <\delta (\alpha, \theta) \}$. In particular,
 for almost every phase $\theta$ (only depends on $\alpha$, not on $\lambda$ or $e$), $h$
 has purely singular continuous spectrum
 on  $\{e:\gamma_{\lambda}(e) <\beta(\alpha)\}$.
\end{theorem}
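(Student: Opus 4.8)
The plan is to establish singular continuous spectrum on $\{e:\gamma_\lambda(e)<\delta(\alpha,\theta)\}$ by combining two ingredients: absence of point spectrum there (via a Gordon-type argument adapted to the singular potential), together with absence of absolutely continuous spectrum (which is already known from \cite{simon1989trace} since the potential is unbounded). The only real work is the first ingredient. The key point is that for a transfer-matrix/Gordon argument one needs to compare the dynamics over a long block $[0,q_n]$ with its translate by $q_n$, and for the Maryland cocycle the relevant object is $M_n(\theta)=\prod_{j=n-1}^{0}\begin{pmatrix} e-\lambda\tan\pi(\theta+j\alpha) & -1\\ 1 & 0\end{pmatrix}$. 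Because of the $\tan$ singularity, a naive periodicity comparison $\|M_{q_n}(\theta)-M_{q_n}(\theta+q_n\alpha)\|$ being small fails near the poles; so I would first factor out the singular contribution. Writing $\tan\pi x$ in terms of $\sin,\cos$, one sees that a product of $q_n$ Maryland transfer matrices equals $\frac{1}{\prod_{j}\cos\pi(\theta+j\alpha)}$ times a matrix with entries that are trigonometric polynomials (hence Lipschitz), and the prefactor, while it can be large, is the \emph{same} for $\theta$ and for $\theta+q_n\alpha$ up to an error controlled by $\|q_n(\theta-1/2)\|$ — this is exactly why the index $\delta(\alpha,\theta)$ in \eqref{Def.delta_1} involves $\ln\|q_n(\theta-1/2)\|$ rather than just $\ln q_{n+1}$ as in $\beta(\alpha)$.

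Concretely, I would proceed as follows. First, set up a quantitative Gordon lemma: if $u$ solves $h u=eu$ with $u$ not identically zero, normalized so that $|u(0)|^2+|u(-1)|^2=1$, then the vector $U(n)=(u(n),u(n-1))^T$ satisfies $U(n)=M_n U(0)$, and the Cayley–Hamilton identity gives $U(q_n)+U(-q_n)=\operatorname{tr}(M_{q_n}(\theta))\,U(0)$ whenever the potential is (approximately) $q_n$-periodic on the relevant block. The standard conclusion is that $\max(\|U(q_n)\|,\|U(2q_n)\|,\|U(-q_n)\|)\ge \tfrac14$, contradicting $\ell^2$-decay if this happens along a subsequence $q_n\to\infty$. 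Second, I must justify the approximate periodicity: I would show that $\|M_{q_n}(\theta+q_n\alpha)-M_{q_n}(\theta)\|\le C\exp\big(\gamma_\lambda(e)q_n + o(q_n)\big)\cdot\big(q_n\|q_n\alpha\|+\|q_n(\theta-1/2)\|^{-1}\cdot(\text{small})\big)$ — more precisely, after pulling out the cosine prefactors, the trigonometric-polynomial part is Lipschitz with constant growing like $e^{\gamma_\lambda q_n}$ (upper bound on transfer matrix norm), while the discrepancy in the prefactors is governed by how close $\prod\cos\pi(\theta+j\alpha)$ and $\prod\cos\pi(\theta+j\alpha+q_n\alpha)$ are, which I relate to $\|q_n(\theta-1/2)\|$ since the zeros of $\cos\pi x$ sit at $1/2+\mathbb{Z}$. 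Third, I combine: along a subsequence realizing the $\limsup$ in \eqref{Def.delta_1}, $\ln\|q_n(\theta-1/2)\|+\ln q_{n+1}\ge (\delta(\alpha,\theta)-\epsilon)q_n$, and using $\|q_n\alpha\|\le 1/q_{n+1}$ one gets that the periodization error times $e^{\gamma_\lambda(e)q_n}$ still goes to $0$ whenever $\gamma_\lambda(e)<\delta(\alpha,\theta)$. Hence Gordon applies and $e$ is not an eigenvalue; since this holds for every such $e$, there is no point spectrum on $\{e:\gamma_\lambda(e)<\delta(\alpha,\theta)\}$, and with $\sigma_{ac}=\emptyset$ from \cite{simon1989trace} the spectrum there is purely singular continuous. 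The "in particular" statement for a.e. $\theta$ is immediate since for a.e. $\theta$ one has $\|q_n(\theta-1/2)\|\ge q_n^{-C}$ infinitely often with the right $C$ so that $\delta(\alpha,\theta)=\beta(\alpha)$ (in fact $\delta(\alpha,\theta)\ge\beta(\alpha)$ always modulo the $\ln\|q_n(\theta-1/2)\|/q_n\to 0$ behaviour for typical $\theta$).

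The main obstacle, and where I expect the argument to require genuine care, is the bookkeeping around the $\tan$ singularity: one must control the cosine prefactors $\prod_{j=0}^{q_n-1}\cos\pi(\theta+j\alpha)$ both from below (so that dividing is legitimate and the transfer-matrix norm bound $e^{\gamma_\lambda q_n+o(q_n)}$ — which is the Lyapunov exponent upper bound — is not destroyed) and in the \emph{difference} between $\theta$ and the shifted phase $\theta+q_n\alpha$. Showing that this difference is governed precisely by $\|q_n(\theta-1/2)\|$, uniformly over the block, is the technical heart: it requires a telescoping/logarithmic estimate $\big|\sum_{j}\log\cos\pi(\theta+j\alpha)-\sum_j\log\cos\pi(\theta+j\alpha+q_n\alpha)\big|$ together with a lower bound on $\operatorname{dist}(\theta+j\alpha,1/2+\mathbb{Z})$ that does not destroy the estimate — here one uses that $\alpha\notin\mathbb{Q}$ and standard continued-fraction facts about the distribution of $\{j\alpha\}_{j<q_n}$, and one has to absorb the (possibly many) $j$'s where $\theta+j\alpha$ comes close to a pole into an $e^{o(q_n)}$ factor. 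This is also exactly the place where the new "Gordon-type argument for cocycles with singularities" advertised in the introduction does its work.
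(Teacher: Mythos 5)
Your overall strategy is the same as the paper's: a three-point Gordon argument along a subsequence $\tilde{q}_k$ of denominators realizing the $\limsup$ in (\ref{Def.delta_1}), after writing $A=D/\cos\pi\theta$ with $D$ analytic, bounding $\prod_j|\cos\pi(\theta+j\alpha)|$ from below via Lemma~\ref{Estimatecos} and bounding the \emph{minimal} cosine $|c_{j_0}|$ from below by $e^{(\delta-\varepsilon)\tilde{q}_k}/\tilde{q}_{k+1}$ using the arithmetic of $\delta(\alpha,\theta)$ (this is exactly Theorem~\ref{Thmestimatec_j}). But there is a genuine gap at the singular site, and it is precisely the point that the paper's ``Gordon-type argument for cocycles with singularities'' is built to overcome. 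Two of your quantitative premises fail. First, the uniform bound $\|M_{q_n}(\theta)\|\le e^{\gamma_\lambda(e)q_n+o(q_n)}$ is false for the Maryland cocycle: unique ergodicity gives (\ref{EstimateD}) only for the analytic part $D$, and the honest bound on the full block is $\|M_{q_n}(\theta)\|\lesssim e^{(\gamma+\varepsilon)q_n}/|c_{j_0}|\lesssim q_{n+1}e^{(\gamma-\delta+\varepsilon)q_n}$, where $q_{n+1}$ has \emph{no upper bound} in terms of $e^{Cq_n}$ along the relevant subsequence (take $q_{n+1}=e^{100q_n}$, $\|q_n(\theta-1/2)\|=e^{-99q_n}$, so $\delta=1$). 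Second, the one-step perturbation at the site $j_0$ closest to the pole has size $\sim\|q_n\alpha\|/c_{j_0}^2\sim 1/(c_{j_0}^2q_{n+1})$, which in the same example is of order $e^{98q_n}$. Putting these together, the term in your error bound coming from the singular site is of order $q_{n+1}e^{(\gamma-2\delta)q_n}$, which diverges whenever $q_{n+1}\gg e^{(2\delta-\gamma)q_n}$ --- a perfectly admissible situation. So the estimate you display does not go to zero in general, and the naive ``Lipschitz constant times transfer-matrix norm'' comparison of $M_{q_n}(\theta)$ with $M_{q_n}(\theta+q_n\alpha)$ does not close.

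The paper's resolution (Lemma~\ref{Transfermatrixperiodic} plus the $M_1+M_2$ splitting in Cases 2 and 3) is to leave the matrix at the singular site \emph{unperturbed} in the telescoping estimate (take $\eta^{j_0}=0$ there, so that only sites with $|c_j|\ge 1/(Cq_n)$ contribute, and the perturbation product $\prod_{j\ne j_0}(1+C/(|c_j|q_{n+1}))-1\lesssim q_n^2/q_{n+1}$ exactly cancels the $q_{n+1}$ coming from $\prod 1/|c_j|$), and to treat the leftover large perturbation $\varepsilon_{j_0}$ at $j_0$ separately, applied to the actual solution vector rather than estimated in operator norm. The key observation is that the eigenvalue equation forces $(e-\lambda\tan\pi(\theta+j_0\alpha-\tilde{q}_k\alpha))\,u_{j_0-\tilde{q}_k}=u_{j_0-\tilde{q}_k+1}+u_{j_0-\tilde{q}_k-1}=O(1)$ for a bounded ($\ell^2$) solution: the solution is automatically small exactly where the potential is large, which upgrades the a priori bound $O(1/(c_{j_0}^2\tilde{q}_{k+1}))$ on $\|\varepsilon_{j_0}\varphi(j_0-\tilde{q}_k+1)\|$ to $O(1/(|c_{j_0}|\tilde{q}_{k+1}))=O(e^{-(\delta-\varepsilon)\tilde{q}_k})$; this is then beaten by the norm $e^{(\gamma+\varepsilon)\tilde{q}_k}$ of the partial product that \emph{avoids} the singular site (Lemma~\ref{appendixLemB_1}), since $\gamma<\delta$. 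Without this mechanism the argument does not work. (Two minor further points: your ``in particular'' reasoning has the inequality backwards --- one always has $\delta(\alpha,\theta)\le\beta(\alpha)$ since $\|q_n(\theta-1/2)\|\le 1/2$, with equality for a.e.\ $\theta$; the paper deduces the a.e.\ statement by noting that $\limsup_k\ln\|\hat{q}_k(\theta-1/2)\|/\hat{q}_k\le\gamma-\beta<0$ forces $\beta(\theta-1/2)>0$, a null condition. Also, purity of the singular continuous spectrum on the set in question requires ruling out $\ell^2$, not merely polynomially bounded, solutions, which is indeed what the Gordon scheme does.)
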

We may assume $\delta (\alpha, \theta) >0$, otherwise  the set $\{e:\gamma_{\lambda}(e) <\delta (\alpha, \theta) \}$ is empty.
 If $\delta (\alpha,\theta)=\infty$, select $\delta$ to be any finite number bigger than $\gamma_{\lambda}(e)$.
Before giving the proof of Theorem \ref{maintheorem1}, some  preliminaries are necessary.

The one-step transfer-matrix of $hu=eu$ is given by

 \begin{equation*}
   A(\theta)=\left(
          \begin{array}{cc}
            e-\lambda\tan\pi \theta &  -1 \\
            1 & 0 \\
          \end{array}
        \right).
 \end{equation*}
So that $A(\theta+n \alpha) \left( \begin{array}{c}
                               u_n \\
                               u_{n-1}
                             \end{array}
                             \right)
= \left(\begin{array}{c}
                               u_{n+1} \\
                               u_{n}
                             \end{array} \right)$.
We will refer to the pair $(\alpha, A)$ as Maryland  cocycle
understood as a linear skew-product
$(\theta,w) \mapsto (\theta+\alpha,A(\theta) \cdot w)$, $\theta\in  \mathbb{R}/\mathbb{Z}, \theta\notin  1/2 + \alpha\mathbb{Z}+\mathbb{Z}$, $w \in
 \mathbb{R}^2$.  In general, for an invertible cocycle $(\alpha, B)$ we define
$B_n$ by $(\alpha,B)^n=(n\alpha,
B_n)\,n\in\mathbb{Z}$ so that for $n\geq 0,$

\begin{equation*}
     B_n(\theta) = B(\theta+(n-1)\alpha)B(\theta+(n-2)\alpha)\cdots B(\theta),
\end{equation*}
and   $B_{-n}(\theta)=B_{n}(\theta-n\alpha)^{-1}$.
\par
Given a cocycle $(\alpha,B)$,  the  Lyapunov exponent (LE) $L(\alpha,B)$ is defined by the
 following formula,
 \begin{equation}\label{Def.le}
   L(\alpha,B)=\lim_{n\rightarrow\infty} \frac{1}{n}\int_{\mathbb{T} } \ln \|B_n(\theta)\|d\theta,
 \end{equation}
Denote $\gamma _{\lambda}(e)= L(\alpha,A)$ (see Theorem \ref{Thmvaluele}).

 Because $\tan\pi \theta$ is singular at $\theta=\frac{1}{2}\mod
\mathbb{Z}$, we will rewrite the Maryland cocycle
as

\begin{equation*}
    \left(
                         \begin{array}{cc}
                           e- \lambda \tan\pi \theta&-1 \\
                           1 &0 \\
                         \end{array}
                       \right)=
                       \frac{1}{\cos \pi \theta}
                       \left(
                         \begin{array}{cc}
                           e\cos\pi \theta - \lambda \sin\pi \theta&-\cos\pi \theta \\
                           \cos\pi \theta&0 \\
                         \end{array}
                       \right).
 \end{equation*}
 Let  \begin{equation}\label{MMR}
 D(\theta) = \left(
                         \begin{array}{cc}
                           e\cos\pi \theta - \lambda \sin\pi \theta&-\cos\pi \theta \\
                           \cos\pi \theta&0 \\
                         \end{array}
                       \right),
                        \end{equation}
                       we have
                       \begin{equation}\label{MMCLE}
                        L(\alpha,A)=
                        L(\alpha,D)-\int_{\mathbb{T}}\ln|\cos\pi
                        \theta |d\theta.
                       \end{equation}
Notice that
\begin{equation}\label{ln2}
    \int_{\mathbb{T}}\ln|\cos\pi
                        \theta |d\theta=-\ln 2.
\end{equation}

 Thus $D(\theta)$ is the regular part and $\frac{1}{\cos\pi \theta}$ is
the singular part of the cocycle.
We now give the estimate of the two parts separately.
First, since $D(\theta)$ is analytic, we have that $\ln ||D_n(\theta)||$ is
a continuous subadditive cocycle, thus it is well known that, as a
corollary of unique ergodicity,
\begin{equation}\label{EstimateD}
  \lim_{n\rightarrow \infty}\frac{1}{n}\ln ||D_n(\theta)||\leq L(\alpha,D),
\end{equation}
uniformly in $\theta$.

In order to estimate the singular part we will need

\begin{lemma} (\text{Lemma } 9.7, \cite{avila2009ten})\label{Estimatecos}
Let $\alpha\in \mathbb{R}\backslash \mathbb{Q}$, $\theta\in\mathbb{R}$ and $0\leq \ell_0 \leq q_n-1$ be such that
$ |\cos\pi(\theta+\ell_0\alpha)|=\inf_{0\leq\ell\leq q_n-1}    |\cos\pi(\theta+\ell \alpha)|$, then for some absolute constant $C > 0$,
\begin{equation*}
    -C\ln q_n\leq \sum _{\ell=0,\ell\neq \ell_0}^{q_n-1} \ln|\cos\pi(\theta+\ell\alpha )|+(q_n-1)\ln2\leq  C\ln q_n.
\end{equation*}
\end{lemma}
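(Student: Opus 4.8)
Since this is Lemma~9.7 of \cite{avila2009ten}, one may simply invoke it; here is how I would prove it from scratch. The essential point is that a sum of $q_n$ logarithms of individual size $\sim\ln q_n$ collapses, after deletion of the minimal term, to $O(\ln q_n)$, and this collapse is exactly encoded by the cyclotomic identity $\prod_{j=1}^{q-1}(1-e^{2\pi i j/q})=q$, i.e.\ $\sum_{j=1}^{q-1}\ln|2\sin\pi(j/q)|=\ln q$. So the plan is to pass to complex exponentials: with $z=e^{2\pi i(\theta+1/2)}$ and $\omega=e^{2\pi i\alpha}$ one has $2|\cos\pi(\theta+\ell\alpha)|=|1-z\omega^\ell|$, and (because $\theta\notin 1/2+\alpha\mathbb{Z}+\mathbb{Z}$) every such factor is strictly positive, so
\[
\sum_{\ell\ne\ell_0}\ln|\cos\pi(\theta+\ell\alpha)|+(q_n-1)\ln 2
=\ln\prod_{\ell=0}^{q_n-1}|1-z\omega^\ell|-\ln\bigl(\min_{0\le\ell\le q_n-1}|1-z\omega^\ell|\bigr).
\]

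Next I would ``straighten'' the cocycle to a root of unity. Writing $\alpha=p_n/q_n+\eta$ with $|\eta|=\|q_n\alpha\|/q_n<1/(q_nq_{n+1})$ and $\zeta=e^{2\pi i/q_n}$, one has $\omega^\ell=\zeta^{\ell p_n}e^{2\pi i\ell\eta}$, so the $q_n$ numbers $z\omega^\ell$ are precisely the points $z\zeta^{j}$ ($j=0,\dots,q_n-1$), each rotated by an angle $\le 2\pi\|q_n\alpha\|<2\pi/q_{n+1}$. The cyclotomic identity gives $\prod_{j=0}^{q_n-1}|1-z\zeta^{j}|=|1-z^{q_n}|=2|\sin\pi q_n(\theta-1/2)|\asymp\|q_n(\theta-1/2)\|$. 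For the factors bounded away from the singular value $1\in\mathbb{C}$ the ratio $|1-z\omega^{\ell}|/|1-z\zeta^{j}|=1+O\bigl(\|q_n\alpha\|/|1-z\zeta^{j}|\bigr)$; since, by the best‑approximation (three distance) structure of the arithmetic progression $\{j/q_n\}$, the nontrivial distances $|1-z\zeta^{j}|$ run over values $\gtrsim k/q_n$, $k=1,2,\dots$, the total logarithmic contribution of all these ratios is $O\bigl(\|q_n\alpha\|\,q_n\ln q_n\bigr)=O(\ln q_n)$ using $\|q_n\alpha\|<1/q_n$. The only remaining issue is the factor(s) closest to $1\in\mathbb{C}$; here one uses the key geometric fact that the $q_n$ points $\theta+\ell\alpha$ are $>\tfrac1{2q_n}$‑separated on $\mathbb{R}/\mathbb{Z}$ (since $\|k\alpha\|\ge\|q_{n-1}\alpha\|>\tfrac1{q_n+q_{n-1}}$ for $0<|k|<q_n$), so \emph{at most one} of the $q_n$ values $|1-z\omega^\ell|$ can drop below $\tfrac1{q_n}$, and when it exists it is exactly the minimizer $\ell_0$. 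Consequently, in the quotient above the possibly very small factor $|1-z\omega^{\ell_0}|$ cancels against the corresponding near‑singular factor of $\prod_{\ell}|1-z\omega^\ell|$; what survives is comparable, up to a $q_n^{O(1)}$ error (absorbing the bulk ratio error and the exact quotient $2|\sin\pi q_n(\theta-1/2)|\big/\bigl(\|q_n(\theta-1/2)\|/q_n\bigr)\asymp q_n$, with the degenerate case $\|q_n(\theta-1/2)\|=0$ handled directly via $\prod_{m=1}^{q_n-1}(1-\zeta^m)=q_n$), to $\prod_{\ell\ne\ell_0}|1-z\omega^\ell|$. Taking logarithms yields the asserted two‑sided bound $-C\ln q_n\le\,\cdot\,\le C\ln q_n$ with an absolute constant $C$.

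The main obstacle is this perturbation analysis near the singularity together with obtaining an \emph{absolute} $C$ uniformly in $\theta$ and $n$. The difficulty is that when $q_{n+1}$ is not much larger than $q_n$ the rotation angles $\|q_n\alpha\|$ become comparable to the grid spacing $1/q_n$, so identifying $z\omega^\ell$ with a small displacement of a fixed $z\zeta^{j}$ is no longer literally a small perturbation of a single factor, and several factors may be simultaneously of size $\sim 1/q_n$. What rescues the argument is the $q_n$‑uniform separation: it guarantees that \emph{no two} of the $|1-z\omega^\ell|$ can be small at once, so removing the single minimal factor already reduces everything to the ``bulk'', where the cyclotomic identity together with the crude bound $|1-z\omega^\ell|\in[q_n^{-O(1)},2]$ for $\ell\ne\ell_0$ pins the product to within $q_n^{O(1)}$.
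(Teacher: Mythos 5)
The paper does not prove this lemma at all: it is imported verbatim (with $\cos$ in place of $\sin$, i.e.\ with $x=\theta+1/2$) from Lemma~9.7 of \cite{avila2009ten}, so your opening sentence --- simply invoke it --- already reproduces the paper's entire treatment. Your from-scratch sketch is the standard proof of that lemma and is essentially sound: the reduction to $\prod_{\ell\ne\ell_0}|1-z\omega^{\ell}|$, the cyclotomic identity $\prod_{j}|1-z\zeta^{j}|=|1-z^{q_n}|$, the separation $\|k\alpha\|\ge\|q_{n-1}\alpha\|>1/(2q_n)$ guaranteeing that at most one factor drops below $c/q_n$, and the bulk estimate $\sum_{k}\|q_n\alpha\|\,q_n/k=O(\ln q_n)$ are exactly the needed ingredients. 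One point should be made explicit: the minimizing index $\ell_0$ need not correspond, under the bijection $\ell\mapsto \ell p_n\bmod q_n$, to the grid point $z\zeta^{j^*}$ nearest the singularity, so the termwise ratio $|1-z\omega^{\ell}|/|1-z\zeta^{j_\ell}|$ is uncontrolled at the single index lying over $j^*$ (its denominator can be far smaller than $\|q_n\alpha\|$, while its numerator is bounded below by $c/q_n$). The fix is the one your last paragraph already points to: delete the minimal index on \emph{each} side separately, comparing $\prod_{\ell\ne\ell_0}|1-z\omega^{\ell}|$ with $\prod_{j\ne j^*}|1-z\zeta^{j}|\asymp q_n$, repair the pairing at the two exceptional indices, and bound the $O(1)$ remaining near-singular ratios crudely via $|1-z\omega^{\ell}|\in[cq_n^{-1},2]$ for $\ell\ne\ell_0$ and $|1-z\zeta^{j}|\in[cq_n^{-1},2]$ for $j\ne j^*$; each such ratio costs at most $C\ln q_n$ and there are boundedly many of them. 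With that bookkeeping spelled out, the argument is complete and yields an absolute constant $C$.
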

We will also use that the denominators of continued fraction
approximants of $\alpha$ satisfy
\begin{equation}\label{Approximate1}
\| k\alpha\|_{\mathbb{R}/\mathbb{Z}}\geq
||q_n\alpha||_{\mathbb{R}/\mathbb{Z}},  1\leq k <q_{n+1},
\end{equation}
and
\begin{equation}\label{Approximate2}
      \frac{1}{2q_{n+1}}\leq\Delta_n\triangleq \|q_n\alpha\|_{\mathbb{R}/\mathbb{Z}}\leq\frac{1}{q_{n+1}}.
\end{equation}
We are now ready to prove
\begin{theorem}\label{Thmestimatec_j}
 For any $\varepsilon>0$, there exists a subsequence $\tilde{q}_k$ of $q_n$ such that
  the following estimate holds
\begin{equation}\label{fur2}
 \prod_{j=0}^{\tilde{q}_k-1} |c_j|\geq \frac{e^{( \delta(\alpha,\theta)   -\ln2-\varepsilon)\tilde{q}_k}}{\tilde{q}_{k+1} },
\end{equation}
where $c_j = \cos\pi(\theta+j\alpha) $.

\end{theorem}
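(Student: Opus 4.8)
The plan is to isolate in $\prod_{j=0}^{\tilde q_k-1}|c_j|$ its single smallest factor, handle all the other factors by Lemma~\ref{Estimatecos}, and bound that smallest factor from below using precisely the arithmetic encoded in $\delta(\alpha,\theta)$. Fix $\varepsilon>0$; we may assume $\varepsilon<\delta(\alpha,\theta)$, since the right-hand side of (\ref{fur2}) is decreasing in $\varepsilon$. Write $x=\theta-1/2$. By the definition (\ref{Def.delta_1}) of $\delta(\alpha,\theta)$ as a $\limsup$, pick indices $n_k\uparrow\infty$ along which
\[
\frac{\ln\|q_{n_k}x\|_{\mathbb{R}/\mathbb{Z}}+\ln q_{n_k+1}}{q_{n_k}}\longrightarrow\delta(\alpha,\theta),
\]
and set $\tilde q_k:=q_{n_k}$, $\tilde q_{k+1}:=q_{n_k+1}$ (when $\delta(\alpha,\theta)=+\infty$, instead choose $n_k$ so that this ratio diverges and read $\delta(\alpha,\theta)$ below as an arbitrary large finite constant, per the convention fixed above the theorem). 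For each $k$ let $\ell_0=\ell_0(k)\in\{0,\dots,\tilde q_k-1\}$ attain $\inf_{0\le\ell\le\tilde q_k-1}|\cos\pi(\theta+\ell\alpha)|$, as in Lemma~\ref{Estimatecos}; that lemma gives
\[
\prod_{\substack{0\le j\le\tilde q_k-1\\ j\ne\ell_0}}|c_j|\ \ge\ e^{-C\ln\tilde q_k-(\tilde q_k-1)\ln2}\ =\ \frac{2\,\tilde q_k^{-C}}{2^{\tilde q_k}}.
\]
Thus it remains to bound the omitted factor $|c_{\ell_0}|$ from below.

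Since $t\mapsto|\sin\pi t|$ is increasing in $\|t\|_{\mathbb{R}/\mathbb{Z}}$, the index $\ell_0$ also minimizes $\|x+\ell\alpha\|_{\mathbb{R}/\mathbb{Z}}$ over $0\le\ell\le\tilde q_k-1$, and $|c_{\ell_0}|=|\sin\pi(x+\ell_0\alpha)|\ge 2\|x+\ell_0\alpha\|_{\mathbb{R}/\mathbb{Z}}=:2\eta_k$. To estimate $\eta_k$ from below, write $\tilde q_kx=\tilde q_k(x+\ell_0\alpha)-\ell_0(\tilde q_k\alpha)$; using $\ell_0<\tilde q_k$ and $\Delta_{n_k}=\|\tilde q_k\alpha\|_{\mathbb{R}/\mathbb{Z}}\le 1/\tilde q_{k+1}$ from (\ref{Approximate2}) gives
\[
\|\tilde q_kx\|_{\mathbb{R}/\mathbb{Z}}\ \le\ \tilde q_k\,\eta_k+\frac{\tilde q_k}{\tilde q_{k+1}}.
\]
Here the hypothesis $\delta(\alpha,\theta)>0$ is used: along the subsequence the ratio above is eventually $>\delta(\alpha,\theta)-\varepsilon/2>0$, so $\|\tilde q_kx\|_{\mathbb{R}/\mathbb{Z}}\,\tilde q_{k+1}\ge e^{(\delta(\alpha,\theta)-\varepsilon/2)\tilde q_k}\ge 2\tilde q_k$ for all large $k$; hence $\tilde q_k/\tilde q_{k+1}\le\tfrac12\|\tilde q_kx\|_{\mathbb{R}/\mathbb{Z}}$, and the last display yields $\eta_k\ge\|\tilde q_kx\|_{\mathbb{R}/\mathbb{Z}}/(2\tilde q_k)$, i.e.\ $|c_{\ell_0}|\ge\|\tilde q_kx\|_{\mathbb{R}/\mathbb{Z}}/\tilde q_k$.

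Multiplying the two lower bounds, for all large $k$,
\[
\prod_{j=0}^{\tilde q_k-1}|c_j|\ \ge\ \frac{2\,\|\tilde q_kx\|_{\mathbb{R}/\mathbb{Z}}}{\tilde q_k^{\,C+1}\,2^{\tilde q_k}}\ \ge\ \frac{2\,\tilde q_k^{-(C+1)}\,e^{(\delta(\alpha,\theta)-\varepsilon/2)\tilde q_k}}{\tilde q_{k+1}\,2^{\tilde q_k}}\ \ge\ \frac{e^{(\delta(\alpha,\theta)-\ln2-\varepsilon)\tilde q_k}}{\tilde q_{k+1}},
\]
where the middle step uses $\|\tilde q_kx\|_{\mathbb{R}/\mathbb{Z}}\,\tilde q_{k+1}\ge e^{(\delta(\alpha,\theta)-\varepsilon/2)\tilde q_k}$ and the last step absorbs the polynomial factor through $2\tilde q_k^{-(C+1)}\ge e^{-\varepsilon\tilde q_k/2}$, valid once $k$ is large. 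Discarding finitely many initial $k$ leaves a subsequence on which (\ref{fur2}) holds, as claimed. I expect the only genuine difficulty to be the lower bound on $\eta_k=\min_{0\le\ell\le\tilde q_k-1}\|(\theta-1/2)+\ell\alpha\|_{\mathbb{R}/\mathbb{Z}}$: a priori this minimal distance could be much smaller than $\|\tilde q_k(\theta-1/2)\|_{\mathbb{R}/\mathbb{Z}}/\tilde q_k$ (e.g.\ if $\theta-1/2$ is itself extremely well approximable by the orbit $\{\ell\alpha\}$), and what prevents this is exactly that, along the $\limsup$-realizing subsequence, $\|\tilde q_k(\theta-1/2)\|_{\mathbb{R}/\mathbb{Z}}$ is not too small relative to $1/\tilde q_{k+1}$ — a direct consequence of $\delta(\alpha,\theta)>0$. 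The remaining ingredients (Lemma~\ref{Estimatecos}, the best-approximation estimate (\ref{Approximate2}), and $|\sin\pi t|\ge 2\|t\|_{\mathbb{R}/\mathbb{Z}}$) are routine.
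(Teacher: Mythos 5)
Your proof is correct and follows essentially the same route as the paper's: isolate the smallest factor $|c_{j_0}|$, control the remaining product via Lemma~\ref{Estimatecos}, and lower-bound $|c_{j_0}|$ through the identity $\tilde q_k(\theta-1/2)=\tilde q_k(\theta-1/2+j_0\alpha)-j_0\tilde q_k\alpha$ together with (\ref{Approximate2}) and the choice of $\tilde q_k$ realizing the $\limsup$ in (\ref{Def.delta_1}). The only difference is cosmetic: the paper derives the bound on $|c_{j_0}|$ by contradiction, whereas you run the same inequality forwards.
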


 \begin{proof}

 By the definition of $\delta (\alpha,\theta)$ (\ref{Def.delta_1}), for any small enough  $\varepsilon>0$,  there  exists a subsequence $\tilde{q}_k$ of $q_n$ such that
  \begin{equation}\label{q_ktheta}
    ||\tilde{q}_k(\theta-1/2)||_{\mathbb{R}/\mathbb{Z}}\geq \frac{e^{(\delta - \varepsilon/8)\tilde{q}_k}}{\tilde{q}_{k+1} }.
  \end{equation}

   This implies
  \begin{equation}\label{q_k+1}
   \tilde{q}_{k+1}\geq e^{(\delta - \varepsilon/8)\tilde{q}_k}.
  \end{equation}
   Let $|c_{j_0}|$  be the smallest one of $|c_j|$,$\,j=0,1,\cdots, \tilde{q}_k-1$.
  By Lemma \ref{Estimatecos} one has
  \begin{equation*}
 \prod_{j=0}^{\tilde{q}_k-1} |c_j|\geq |c_{j_0}|e^{ -\tilde{q}_k(\ln2+\varepsilon/4)}.
\end{equation*}
 In order to prove (\ref{fur2}),  it suffices to prove that
 \begin{equation}\label{Def.c_0}
   |c_{j_0}|\geq \frac{e^{(\delta -\frac{3}{4}\varepsilon)\tilde{q}_k}}{\tilde{q}_{k+1} },
 \end{equation}
 for $k$ large enough.
 \par
 If the estimate (\ref{Def.c_0}) does not hold, i.e.,
 $|c_{j_0}|\leq \frac{e^{(\delta -\frac{3}{4}\varepsilon)\tilde{q}_k}}{\tilde{q}_{k+1} }$, for some $0\leq j_0\leq \tilde{q}_k-1$, then
 \begin{equation*}
     ||  \theta-1/2 +j_0\alpha||_{\mathbb{R}/\mathbb{Z}}<\frac{e^{(\delta -\frac{1}{2} \varepsilon)\tilde{q}_k}}{\tilde{q}_{k+1} }.
 \end{equation*}
 This implies
 \begin{eqnarray*}
     || \tilde{q}_k( \theta-1/2)||_{\mathbb{R}/\mathbb{Z}} &\leq&  ||  j_0 \tilde{q}_k\alpha||_{\mathbb{R}/\mathbb{Z}}+  \frac{e^{(\delta -\frac{1}{2} \varepsilon)\tilde{q}_k}}{\tilde{q}_{k+1} }\\
     &\leq& \frac{\tilde{q}_k}{\tilde{q}_{k+1}} +\frac{e^{(\delta-\frac{1}{2} \varepsilon)\tilde{q}_k}}{\tilde{q}_{k+1} } \\
     &\leq& \frac{e^{(\delta -\frac{1}{4}\varepsilon)\tilde{q}_k}}{\tilde{q}_{k+1} }
 \end{eqnarray*}
 The second inequality holds by (\ref{Approximate2}).
This   contradicts (\ref{q_ktheta}).

\end{proof}

 After this preparatory work we can move to showing absence of point
spectrum by a Gordon-type method.
Next we always assume $\varepsilon>0$ is sufficiently small. The
following simple Lemma has become the classics of the
field.\footnote{see \cite{gor2} for its history.}
\begin{lemma}(\cite{gor1,MR776654,cyconschrodinger})\label{Lemgordonidea1}
Let $B\in \text{SL}(2,\mathbb{R})$ and  $x$ be a unit vector in $\mathbb{C}^2$,  then $$\max{\{||B^2x||, ||Bx||,||B^{-1}x||\}}\geq \frac{1}{4}.$$
\end{lemma}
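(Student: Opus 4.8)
The plan is to reduce everything to the Cayley--Hamilton identity for $B$. Write $t = \operatorname{tr} B$. Since $\det B = 1$, Cayley--Hamilton gives $B^2 - tB + I = 0$, hence $B^2 = tB - I$; multiplying that relation by $B^{-1}$ gives $B^{-1} = tI - B$. Applying these to the unit vector $x$ yields the two vector identities $x = t\,Bx - B^2x$ and $B^{-1}x = t\,x - Bx$. Everything else is the triangle inequality.

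From the first identity, $1 = \|x\| \le |t|\,\|Bx\| + \|B^2x\|$. From the second, the reverse triangle inequality gives $\|B^{-1}x\| \ge |t|\,\|x\| - \|Bx\| = |t| - \|Bx\|$. These two estimates suffice, split according to the size of $|t|$. If $|t| \le 2$, then $1 \le |t|\,\|Bx\| + \|B^2x\| \le 3\max\{\|Bx\|,\|B^2x\|\}$, so $\max\{\|Bx\|,\|B^2x\|\} \ge 1/3 > 1/4$. If $|t| > 2$, then either $\|Bx\| \ge 1/4$ and there is nothing to prove, or $\|Bx\| < 1/4$, in which case $\|B^{-1}x\| \ge |t| - \|Bx\| > 2 - 1/4 > 1/4$. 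In all cases the maximum of the three quantities exceeds $1/4$.

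There is essentially no obstacle: the statement is elementary, and the only thing to get right is the case split and the bookkeeping of constants, which the bound $1/4$ meets comfortably (in fact $1/3$ works in the first case and the second case is even cruder). The one point worth noting is that passing from $\mathbb{R}^2$ to $\mathbb{C}^2$ costs nothing, since the identities $B^2 = tB - I$ and $B^{-1} = tI - B$ are purely algebraic consequences of $\det B = 1$ and hold verbatim for the action of $B$ on $\mathbb{C}^2$.
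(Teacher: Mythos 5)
Your proof is correct, and it is the standard argument for this classical lemma: Cayley--Hamilton gives $B^2=tB-I$ and $B^{-1}=tI-B$, and a case split on $|t|$ plus the triangle inequality finishes it. The paper does not reprove the lemma but cites \cite{gor1,MR776654,cyconschrodinger}, whose proofs follow essentially the same route (only the threshold in the trace dichotomy differs), so there is nothing to add.
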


It is usually used in conjunction with a perturbation estimate. We
will use
\begin{lemma}(\cite{MR776654} )\label{Lemgordonidea2}
Let $A^1,A^2,\cdots,A^n$ and  $B^1,B^2,\cdots,B^n$ be $2\times2$ matrices with $||\prod_{m=0}^{\ell-1}A^{j+m}||\leq C e^{d\ell}$ for some constant $C$ and $d$.
Then
\begin{equation*}
    ||(A^n+B^n)\cdots(A^1+B^1)-A^n\cdots A^1||\leq Ce^{dn} (\prod_{j=1}^n(1+Ce^{-d}||B^j||)-1).
\end{equation*}

\end{lemma}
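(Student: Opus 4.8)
The plan is to prove this estimate by expanding the product $(A^n+B^n)\cdots(A^1+B^1)$ fully into $2^n$ terms and grouping them according to which factors contribute a $B$. We may and do assume $C\geq 1$ (otherwise replace $C$ by $1$: the hypothesis $\|\prod_{m=0}^{\ell-1}A^{j+m}\|\leq Ce^{d\ell}$ still holds, and the right‑hand side of the claimed inequality only increases). Each term in the expansion is indexed by the subset $S=\{j_1>j_2>\cdots>j_s\}\subseteq\{1,\dots,n\}$ of positions at which the $B$‑factor is selected, and reading from left ($j=n$) to right ($j=1$) it equals the alternating word
$$\Big(A^n\cdots A^{j_1+1}\Big)\,B^{j_1}\,\Big(A^{j_1-1}\cdots A^{j_2+1}\Big)\,B^{j_2}\cdots B^{j_s}\,\Big(A^{j_s-1}\cdots A^1\Big),$$
i.e. the $s$ single factors $B^{j_1},\dots,B^{j_s}$ separated by (possibly empty) maximal blocks of consecutive $A$'s whose lengths $\ell_0,\dots,\ell_s\geq 0$ satisfy $\sum_{i=0}^s\ell_i=n-s$. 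The term for $S=\emptyset$ is exactly $A^n\cdots A^1$, so the quantity to be bounded is the norm of the sum of these terms over all nonempty $S$.

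Next I would bound each term by submultiplicativity together with the hypothesis. Every nonempty $A$‑block of length $\ell_i$ has norm at most $Ce^{d\ell_i}$, there are at most $s+1$ nonempty blocks, and the nonempty blocks have total length $n-s$, so (using $C\geq 1$) the product of the $A$‑blocks has norm at most $C^{s+1}e^{d(n-s)}$. Hence the term attached to a set $S$ with $|S|=s$ has norm at most
$$C^{s+1}e^{d(n-s)}\prod_{j\in S}\|B^j\|\;=\;Ce^{dn}\prod_{j\in S}\big(Ce^{-d}\|B^j\|\big).$$
Summing over all nonempty $S$ and invoking the elementary identity $\sum_{S\subseteq\{1,\dots,n\}}\prod_{j\in S}x_j=\prod_{j=1}^n(1+x_j)$ with $x_j=Ce^{-d}\|B^j\|$ yields
$$\Big\|(A^n+B^n)\cdots(A^1+B^1)-A^n\cdots A^1\Big\|\;\leq\;Ce^{dn}\Big(\prod_{j=1}^n\big(1+Ce^{-d}\|B^j\|\big)-1\Big),$$
which is precisely the assertion.

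As for the difficulty: the argument is essentially bookkeeping, and the only point requiring care is the counting of the $A$‑blocks. One must group consecutive $A$‑factors before applying the hypothesis — applying the single‑matrix bound $\|A^j\|\leq Ce^{d}$ factor by factor would cost a spurious $C^{n}$ and ruin the estimate — and one must note that empty blocks contribute the identity, which is why $C\geq 1$ is exactly what is needed to absorb them ($\|I\|=1\leq C=Ce^{d\cdot 0}$). This same loss of a power of $C$ per factor is what defeats a naive induction on $n$ that peels off the top factor $A^n+B^n$, so I would organize the proof around the full expansion rather than a recursion.
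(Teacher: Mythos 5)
Your proof is correct and is the standard argument: the paper does not prove this lemma but quotes it from Simon \cite{MR776654}, where it is established by exactly this binomial expansion, grouping the $2^n$ terms by the positions of the $B$-factors and bounding each maximal block of consecutive $A$'s by the hypothesis. (Your reduction to $C\ge 1$ is harmless, since for $C<1$ the stated inequality already fails at $n=1$, so the lemma implicitly assumes $C\ge 1$.)
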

\par
Classical method uses rational approximation to rule out the $\ell^2$ solutions. Here we use the Diophantine properties  of $\alpha$ directly
to show the transfer matrix  behaves like a periodic one. Our final
preparation is
\par
\begin{lemma}\label{Transfermatrixperiodic}
Suppose for any $j$ either  $ \eta^j=0 $ or $||\eta^j||\leq\frac{C}{c_j^2\tilde{q}_{k+1}}$ and $|c_j|\geq \frac{1}{C\tilde{q}_k}$. Then
 for $\varepsilon>0$, the following estimates hold if $k$ is sufficiently large:
\begin{equation}\label{Es1a}
||A_{\tilde{q}_k}(\theta)-\prod_{j=\tilde{q}_k-1}^{0} (A(\theta+j\alpha)+\eta^j)||\leq Ce^{-(\delta(\alpha,\theta)- \gamma_\lambda(e)-\varepsilon)\tilde{q}_k   },
\end{equation}
\begin{equation}\label{Es2a}
||A_{\tilde{q}_k}^{-1}(\theta)-\prod_{j=0}^{\tilde{q}_k-1}(A(\theta+j\alpha)^{-1}+\eta^j)||\leq Ce^{-(\delta(\alpha,\theta)- \gamma_\lambda(e)-\varepsilon)\tilde{q}_k   },
\end{equation}

\end{lemma}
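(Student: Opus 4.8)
The plan is to factor out the singular scalar part of the Maryland cocycle and reduce the two estimates to one perturbation estimate for the \emph{analytic} cocycle $D$, so that the uniform bound (\ref{EstimateD}) and the classical perturbation Lemma~\ref{Lemgordonidea2} apply with no blow-up coming from the near-singular factor $1/c_{j_0}$. Write $c_j=\cos\pi(\theta+j\alpha)$ and let $j_0\in\{0,\dots,\tilde{q}_k-1\}$ be an index at which $|c_j|$ is smallest; we will use that $|c_{j_0}|\ge e^{(\delta-\frac{3}{4}\varepsilon)\tilde{q}_k}/\tilde{q}_{k+1}$ (this is (\ref{Def.c_0}), from the proof of Theorem~\ref{Thmestimatec_j}) and that $\prod_{j\ne j_0}|c_j|^{-1}\le 2^{\tilde{q}_k}\tilde{q}_k^{C}$ (Lemma~\ref{Estimatecos}). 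By (\ref{MMR}), $A(\theta+j\alpha)=c_j^{-1}D(\theta+j\alpha)$, hence $A(\theta+j\alpha)+\eta^j=c_j^{-1}(D(\theta+j\alpha)+c_j\eta^j)$, and pulling the scalars out of the product,
\begin{equation*}
A_{\tilde{q}_k}(\theta)-\prod_{j=\tilde{q}_k-1}^{0}\bigl(A(\theta+j\alpha)+\eta^j\bigr)=\Bigl(\prod_{j=0}^{\tilde{q}_k-1}c_j^{-1}\Bigr)\Bigl(D_{\tilde{q}_k}(\theta)-\prod_{j=\tilde{q}_k-1}^{0}\bigl(D(\theta+j\alpha)+c_j\eta^j\bigr)\Bigr),
\end{equation*}
so it suffices to bound the second bracket and multiply by $\prod_j|c_j|^{-1}$.

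Since $D$ is analytic and $L(\alpha,D)=\gamma_\lambda(e)-\ln 2$ by (\ref{MMCLE})--(\ref{ln2}), the uniform estimate (\ref{EstimateD}) together with boundedness of $D$ yields, with $d:=\gamma_\lambda(e)-\ln 2+\varepsilon/8$, a constant $C$ independent of $k$ such that $\|D_\ell(\theta+j\alpha)\|\le Ce^{d\ell}$ for all $j,\ell$. I would then apply Lemma~\ref{Lemgordonidea2} to the matrices $D(\theta+j\alpha)$ with perturbations $c_j\eta^j$ and, linearizing $\prod(1+x_j)-1\le 2\sum x_j$, bound the second bracket by $Ce^{d\tilde{q}_k}\sum_j|c_j|\,\|\eta^j\|$. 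The sum is controlled directly from the hypotheses: whenever $\eta^j\ne0$ one has $|c_j|\,\|\eta^j\|\le C/(|c_j|\tilde{q}_{k+1})$, and if moreover $j\ne j_0$ then $|c_j|^{-1}\le C\tilde{q}_k$; since there are at most $\tilde{q}_k$ indices with $\eta^j\ne0$ and, by (\ref{q_k+1}), $\tilde{q}_{k+1}\ge e^{(\delta-\varepsilon/8)\tilde{q}_k}$,
\begin{equation*}
\sum_{j=0}^{\tilde{q}_k-1}|c_j|\,\|\eta^j\|\le \frac{C}{|c_{j_0}|\tilde{q}_{k+1}}\mathbf{1}_{\{\eta^{j_0}\ne0\}}+\frac{C\tilde{q}_k^2}{\tilde{q}_{k+1}}\longrightarrow 0,
\end{equation*}
so the application of Lemma~\ref{Lemgordonidea2} is legitimate for large $k$.

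Multiplying back by $\prod_j|c_j|^{-1}=|c_{j_0}|^{-1}\prod_{j\ne j_0}|c_j|^{-1}\le |c_{j_0}|^{-1}2^{\tilde{q}_k}\tilde{q}_k^{C}$ and collecting exponents, the left-hand side of (\ref{Es1a}) is at most a polynomial in $\tilde{q}_k$ times $2^{\tilde{q}_k}e^{d\tilde{q}_k}$ times $|c_{j_0}|^{-1}\tilde{q}_{k+1}^{-1}$, with a further factor $|c_{j_0}|^{-1}$ in the term carrying $\mathbf{1}_{\{\eta^{j_0}\ne0\}}$. Here $2^{\tilde{q}_k}e^{d\tilde{q}_k}=e^{(\gamma_\lambda(e)+\varepsilon/8)\tilde{q}_k}$ --- the $-\ln 2$ built into $d$ being exactly cancelled by the $2^{\tilde{q}_k}$ supplied by Lemma~\ref{Estimatecos} --- while $|c_{j_0}|^{-1}\tilde{q}_{k+1}^{-1}\le e^{-(\delta-\varepsilon)\tilde{q}_k}$ (from the lower bound on $|c_{j_0}|$ when $\eta^{j_0}=0$, and from $|c_{j_0}|\ge 1/(C\tilde{q}_k)$ together with (\ref{q_k+1}) when $\eta^{j_0}\ne0$, in which case the extra $|c_{j_0}|^{-1}$ is at most $C\tilde{q}_k$). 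Hence the left side of (\ref{Es1a}) is $\le Ce^{(\gamma_\lambda(e)-\delta+\frac{9}{8}\varepsilon)\tilde{q}_k}\le Ce^{-(\delta-\gamma_\lambda(e)-2\varepsilon)\tilde{q}_k}$ for large $k$; since $\varepsilon>0$ was arbitrary, this is (\ref{Es1a}). Estimate (\ref{Es2a}) follows by the identical scheme: write $A(\theta+j\alpha)^{-1}=c_j^{-1}\hat D(\theta+j\alpha)$ with $\hat D:=\operatorname{adj}D$ (again analytic and bounded), note that $\|\hat D(\theta+j\alpha)\cdots\hat D(\theta+(j+\ell-1)\alpha)\|=\|\operatorname{adj}(D_\ell(\theta+j\alpha))\|=\|D_\ell(\theta+j\alpha)\|\le Ce^{d\ell}$, and repeat the argument with the two products taken in the reversed order.

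The one place where care is genuinely needed --- the main obstacle --- is the exponential bookkeeping of the last step: the factor $|c_{j_0}|^{-1}$ is \emph{not} subexponential, as it carries a full $\tilde{q}_{k+1}$, so it must always be matched with a compensating $\tilde{q}_{k+1}^{-1}$. This matching works precisely because $j_0$ is chosen to be the minimizing index, so that Lemma~\ref{Estimatecos} leaves only the harmless $2^{\tilde{q}_k}$ for the remaining product (which exactly cancels the $-\ln 2$ in $L(\alpha,D)$), and because the hypothesis supplies $|c_j|\ge 1/(C\tilde{q}_k)$ for every $j$ with $\eta^j\ne0$, so that the $\eta^{j_0}$-contribution either vanishes or costs only a polynomial in $\tilde{q}_k$. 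Everything else is routine Gordon-type estimation.
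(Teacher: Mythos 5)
Your proposal is correct and follows essentially the same route as the paper: factor out the scalars $c_j^{-1}$, apply Lemma \ref{Lemgordonidea2} to the analytic cocycle $D$ with perturbations $c_j\eta^j$, bound $\prod_j|c_j|^{-1}$ via Lemma \ref{Estimatecos} together with the lower bound (\ref{Def.c_0}) on $|c_{j_0}|$ (the paper cites this in the packaged form (\ref{fur2})), and collect exponents using $L(\alpha,D)=\gamma_\lambda(e)-\ln 2$. The only additions are cosmetic: you spell out the adjugate argument for (\ref{Es2a}), which the paper dismisses as ``similar.''
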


\begin{proof} We only prove (\ref{Es1a}) with the case $|c_j|\geq \frac{1}{C\tilde{q}_k}$ and
$||\eta^j||\leq\frac{C}{c_j^2\tilde{q}_{k+1}}$ for all  $j$.   For the other cases, the proof is  similar. Below,    $C$ is a large constant and may  depend   on $e,\lambda,\alpha$.
\par

 We have
 \begin{eqnarray}
 \nonumber
 A_{\tilde{q}_k}(\theta)-\prod_{j=\tilde{q}_k-1}^{0}
 (A(\theta+j\alpha)+\eta^j)
   &=&  \prod_{j=\tilde{q}_k-1}^{0}(\frac{D^j  }{c_j}+\eta^j) -\prod_{j=\tilde{q}_k-1}^{0}\frac{D^j }{c_j}\\
     &=&  ( \prod_{j=0}^{\tilde{q}_k-1}\frac{1}{c_j})(  \prod_{j=\tilde{q}_k-1}^{0}(D^j+O( \frac{1 }{c_{j }\tilde{q}_{k+1}})) -\prod_{j=\tilde{q}_k-1}^{0}D^j ),\label{Estimateeta_j}
 \end{eqnarray}
 where $D^j=D(\theta+j\alpha)$.
 \par
 Denote  $L:=L(\alpha,D)$.   Applying Lemma \ref{Lemgordonidea2} to (\ref{Estimateeta_j}), one has
\begin{eqnarray}
  \nonumber
  || A_{\tilde{q}_k}(\theta-\tilde{q}_k\alpha) -A_{\tilde{q}_k}(\theta)|| &\leq& ( \prod_{j=0}^{\tilde{q}_k-1}\frac{1}{|c_j|})e^{(L+\varepsilon)\tilde{q}_k}
     (\prod_{j=0}^{\tilde{q}_k-1}(1+ \frac{C }{ |c_{j }|\tilde{q}_{k+1}}) -1)\\
      \nonumber
     &\leq& C  \frac{\tilde{q}_{k+1} }{e^{(\delta -\ln2-\varepsilon)\tilde{q}_k}}  e^{(L+ \varepsilon)\tilde{q}_k}
     ((1+ C\frac{ \tilde{q}_k }{  \tilde{q}_{k+1}})^{\tilde{q}_k} -1)
     \\ \nonumber
      &\leq& Ce^{-(\delta(\alpha,\theta)- \gamma_\lambda(e)-4 \varepsilon)\tilde{q}_k   }¡£
 \end{eqnarray}
 The first inequality holds by (\ref{EstimateD}), the second holds by
 (\ref{fur2}) and our
 assumption on the minimum of $|c_j|$, the third holds by  (\ref{MMCLE}), (\ref{ln2}) and (\ref{q_k+1}).
 \end{proof}

\textbf{Proof of Theorem \ref{maintheorem1}.}
We start with the proof  of the first part, i.e. that  $h_{\lambda,\alpha,\theta}$ has purely singular continuous spectrum
 on  $\{e:\gamma_{\lambda}(e) <\delta (\alpha, \theta) \}$. We know
 $\sigma_{ac}(h)=\emptyset$ so
assume $h_\theta$ has an  $\ell^2$ solution $u_n$. Without loss of generality, assume the vector  $ \left(\begin{array}{c}
                                                                                            u_0 \\
                                                                                            u_{-1}
                                                                                          \end{array}\right)
$
is unit.
\par
Let $\varphi(n)=\left(\begin{array}{c}
        u_n\\
       u_{n-1}
     \end{array}\right)
 $,
 then we have
 \begin{equation}\label{varphi_n}
    \varphi(n)=A_{n}(\theta) \varphi(0),
 \end{equation}
 and
 \begin{equation}\label{varphi_-n}
    \varphi(-n)=A_{n}^{-1} (\theta-n\alpha)\varphi(0).
 \end{equation}

For simplicity, denote   $\varphi :=\varphi(0)$.
 Applying Lemma \ref{Lemgordonidea1}, for any $k$, we have that  either $||A_{\tilde{q}_k}(\theta)\varphi||\geq \frac{1}{4}$, or $||A_{\tilde{q}_k}^{-1}(\theta)\varphi||\geq \frac{1}{4}$,
 or $||A_{\tilde{q}_k}^2(\theta)\varphi||\geq \frac{1}{4}$.
 \par
 Case  1: $||A_{\tilde{q}_k}(\theta)\varphi||\geq \frac{1}{4}$. Then  $||\varphi(\tilde{q}_k) ||\geq\frac{1}{4}$ by (\ref{varphi_n}).
 \par
 Case 2:
 $||A_{\tilde{q}_k}^{-1}(\theta)\varphi||\geq \frac{1}{4}$.  Let   $|c_{j_0}|$  be the smallest one of $|c_j|$, $j=0,1,\cdots, \tilde{q}_k-1$.
 Then  one has
 \begin{equation}\label{jneq j_o}
    |c_{j }|\geq \frac{1}{C\tilde{q}_k} \text{ for } j\neq j_0.
 \end{equation}

We now consider the difference between $A_{\tilde{q}_k}^{-1}(\theta)\varphi$ and $A_{\tilde{q}_k}^{-1}(\theta-\tilde{q}_k\alpha)\varphi $.
\par
Let
\begin{equation*}
     A_{\tilde{q}_k}^{-1}(\theta) -A_{\tilde{q}_k}^{-1}(\theta-\tilde{q}_k\alpha) =M_1 +M_2,
\end{equation*}
where

 \begin{eqnarray*}
    M_1&=&  A_{\tilde{q}_k}^{-1}(\theta)-A^{-1}(\theta-\tilde{q}_k\alpha)A^{-1}(\theta+\alpha-\tilde{q}_k\alpha)  \cdots   A^{-1}(\theta+(j_0-1)\alpha-\tilde{q}_k\alpha) \\
     & &   A^{-1}(\theta+j_0\alpha )   A^{-1}(\theta+(j_0+1)\alpha-\tilde{q}_k\alpha)\cdots
    A^{-1}(\theta+(\tilde{q}_k-1)\alpha-\tilde{q}_k\alpha)
 \end{eqnarray*}
 and
 \begin{eqnarray*}
   M_2 &=& A^{-1}(\theta-\tilde{q}_k\alpha)A^{-1}(\theta+\alpha-\tilde{q}_k\alpha)
      \cdots   A^{-1}(\theta+(j_0-1)\alpha-\tilde{q}_k\alpha)\\
    & &  \varepsilon_{j_0}  A^{-1}(\theta+(j_0+1)\alpha-\tilde{q}_k\alpha)\cdots
    A^{-1}(\theta+(\tilde{q}_k-1)\alpha-\tilde{q}_k\alpha),
 \end{eqnarray*}
 where $ \varepsilon_{j_0} =A^{-1}(\theta+j_0\alpha)-A^{-1}(\theta+j_0\alpha-\tilde{q}_k\alpha)$.
 \par
 Let
\begin{equation*}
\eta_j=\frac{D(\theta+j\alpha-\tilde{q}_k\alpha)}{\cos\pi(\theta+j\alpha-\tilde{q}_k\alpha)}-\frac{D(\theta+j\alpha)}{\cos\pi(\theta+j\alpha )},
\end{equation*}
 where   $D$ is given by (\ref{MMR}).
 It is easy to see that
\begin{equation*}
||D(\theta+j\alpha-\tilde{q}_k\alpha)-D(\theta+j\alpha)||\leq  \frac{C }{ \tilde{q}_{k+1}}
\end{equation*}
and
\begin{equation*}
   |\cos\pi(\theta+j\alpha-\tilde{q}_k\alpha)-\cos\pi(\theta+j\alpha )|\leq \frac{C }{ \tilde{q}_{k+1}},
\end{equation*}
 for all $|j|\leq \tilde{q}_k$, since $||\tilde{q}_k\alpha||_{\mathbb{R}/\mathbb{Z}}\leq \frac{1 }{ \tilde{q}_{k+1}}$ by (\ref{Approximate2}).
 Combining with (\ref{Def.c_0}), one has
 \begin{equation}\label{eta_j}
     ||\eta_j||=   
     O( \frac{1 }{c_j^2  \tilde{q}_{k+1}}).
 \end{equation}

Thus one has
 \begin{equation*}
     \varepsilon_{j_0} =\left(
                          \begin{array}{cc}
                           0& 0 \\
                            0 & O(\frac{1}{ c_{j_0}^2 \tilde{q}_{k+1} } ) \\
                          \end{array}
                        \right)
      .
 \end{equation*}
 By Lemma \ref{Transfermatrixperiodic} with $\eta_{j_0}=0$ (using (\ref{Es2a}) and (\ref{eta_j})),  we have
 \begin{equation*}
|| M_1\varphi ||\leq  Ce^{-(\delta(\alpha,\theta)- \gamma_\lambda(e)-  \varepsilon)\tilde{q}_k   }.
 \end{equation*}
  We will show that
   \begin{equation}\label{M_2}
|| M_2\varphi ||\leq   \frac{1}{16}.
 \end{equation}
This would imply that for $k$ large enough
 \begin{equation*}
    || A_{\tilde{q}_k}^{-1}(\theta)\varphi -A_{\tilde{q}_k}^{-1}(\theta-\tilde{q}_k\alpha)\varphi||\leq  \frac{1}{8}.
\end{equation*}
and therefore (using (\ref{varphi_-n}))
\begin{equation*}
   ||\varphi(-\tilde{q}_k )||=  ||  A_{\tilde{q}_k}^{-1}(\theta-\tilde{q}_k\alpha)\varphi||\geq  \frac{1}{8}.
\end{equation*}
\par
Let us now prove (\ref{M_2}).
We have
\begin{equation*}
    M_2\varphi=  A^{-1}(\theta-\tilde{q}_k\alpha)A^{-1}(\theta+\alpha-\tilde{q}_k\alpha)
      \cdots    A^{-1}(\theta+(j_0-1)\alpha-\tilde{q}_k\alpha)  \varepsilon_{j_0} \varphi(j_0-\tilde{q}_k+1),
 \end{equation*}
 and
 \begin{equation*}
    \left(\begin{array}{c}
                                                                                            u_{j_0-\tilde{q}_k} \\
                                                                                            u_{j_0-\tilde{q}_k-1}
                                                                                          \end{array}\right)=
                                                                                          \left(
                                                                                            \begin{array}{cc}
                                                                                              0 & 1 \\
                                                                                              -1 &  e-\lambda \tan\pi(\theta+j_0\alpha -\tilde{q}_k\alpha) \\
                                                                                            \end{array}
                                                                                          \right)
                                                                                          \left(\begin{array}{c}
                                                                                            u_{j_0-\tilde{q}_k+1} \\
                                                                                            u_{j_0-\tilde{q}_k }
                                                                                          \end{array}\right) .
 \end{equation*}
Since  $\{u_n\}\in \ell^2(\mathbb{Z})$,
 one has
 \begin{equation}\label{varepsilon_j_0111}
     (e-\lambda \tan\pi(\theta+j_0\alpha -\tilde{q}_k\alpha)) u_{j_0-\tilde{q}_k }=O(1) .
 \end{equation}
 Clearly,
 \begin{equation}\label{varepsilon_j_01}
   || \varepsilon_{j_0} \varphi(j_0-\tilde{q}_k+1)||=O(\frac{u_{j_0-\tilde{q}_k }}{|c_{j_0}^2|\tilde{q}_{k+1} } ).
 \end{equation}
 If $ c_{j_0}$ is small enough, i.e., $ |c_{j_0}|\leq  \frac{1}{C} $,  then
 we have
 \begin{equation*}
   | c_{j_0} (e-\lambda \tan\pi(\theta+j_0\alpha -\tilde{q}_k\alpha))|\geq \frac{1}{C}.
 \end{equation*}
Combining with   (\ref{varepsilon_j_0111}) and (\ref{varepsilon_j_01}),  one has
 \begin{equation}\label{varepsilon_j_02}
   || \varepsilon_{j_0} \varphi(j_0-\tilde{q}_k+1)||\leq \frac{C}{|c_{j_0} |\tilde{q}_{k+1} }.
 \end{equation}
 If  $ |c_{j_0}|\geq  \frac{1}{C} $, then(\ref{varepsilon_j_01}) also implies (\ref{varepsilon_j_02}).
 \par
 It is easy to see that (Appendix \ref{appendixLemB_1})
 \begin{equation*}
    ||A^{-1}(\theta-\tilde{q}_k\alpha)A^{-1}(\theta+\alpha-\tilde{q}_k\alpha)
      \cdots    A^{-1}(\theta+(j_0-1)\alpha-\tilde{q}_k\alpha)||\leq Ce^{\tilde{q}_k(\gamma_\lambda(e)+\varepsilon)}.
 \end{equation*}
 Thus by (\ref{Def.c_0}), one has
 \begin{eqnarray*}
    || M_2\varphi|| &\leq& e^{\tilde{q}_k(\gamma_\lambda(e)+\varepsilon)}\frac{C}{|c_{j_0} |\tilde{q}_{k+1} }  \\
     &\leq& Ce^{-(\delta(\alpha,\theta)- \gamma_\lambda(e)-2\varepsilon)\tilde{q}_k   }
 \end{eqnarray*}
  This implies (\ref{M_2}).
  \par
 Case 3: if  $||A_{\tilde{q}_k}(\theta)\varphi||< \frac{1}{4}$ and
 $||A_{\tilde{q}_k}^2(\theta)\varphi||\geq \frac{1}{4}$, using
 (\ref{Es1a}) and similarly as in the proof of case 2, we also have
 \begin{eqnarray}
 \nonumber
   ||A_{2\tilde{q}_k}(\theta)\varphi- A_{ \tilde{q}_k}^2(\theta)\varphi|| &=& ||(A_{\tilde{q}_k}(\theta+\tilde{q}_k\alpha)-A_{\tilde{q}_k}(\theta) )
    A_{ \tilde{q}_k} (\theta)\varphi ||\\
    \nonumber
    &\leq&  Ce^{-(\delta(\alpha,\theta)- \gamma_\lambda(e)-\varepsilon)\tilde{q}_k   }\leq1/8,
 \end{eqnarray}
This   yields that $||\varphi(\tilde{q}_{2k})||\geq\frac{1}{8}$.

Therefore, we have that, in either case,  $\max
\{\varphi(\tilde{q}_{k}),\varphi(\tilde{-q}_{k}),\varphi(\tilde{q}_{2k})\}\geq
\frac{1}{8}$ which contradicts the  fact $\{u_n\}\in \ell^2(\mathbb{Z})$.
\par
Now we are in a position to prove the second  part of Theorem \ref{maintheorem1}.
By  the definition of $ \beta(\alpha) $ (\ref{Def.beta}),  there  exists a subsequence $\hat{q}_k$ of $q_n$ such that,
  \begin{equation*}
  \lim_{k\rightarrow\infty}\frac{\ln \hat{q}_{k+1}}{\hat{q}_k} =\beta(\alpha).
 \end{equation*}
 Thus in order to prove the second part, it suffices to prove that for almost every $\theta$,
 \begin{equation*}
     \limsup_{k\rightarrow\infty}\frac{\ln||\hat{q}_k(\theta-1/2)||_{\mathbb{R}/\mathbb{Z}}}{\hat{q}_k}>  \gamma_\lambda(e)-\beta(\alpha) .
 \end{equation*}
 Suppose for some $\theta$,
  \begin{equation*}
     \limsup_{k\rightarrow\infty}\frac{\ln||\hat{q}_k(\theta-1/2)||_{\mathbb{R}/\mathbb{Z}}}{\hat{q}_k}\leq \gamma_\lambda(e)-\beta(\alpha) .
 \end{equation*}
 This implies $\beta(\theta-1/2)\geq \beta(\alpha)-\gamma_\lambda(e)  >0$.
  Notice however that  $ \{\theta\in \mathbb{R}: \beta(\theta-1/2)>0\}$ is a measure zero set.
$\qed$
\section{The pure point spectrum}
We will first prove the following
\begin{theorem}\label{maintheorem2}
With $\delta (\alpha, \theta)$ as in (\ref{Def.delta_1}),
$h_{\lambda,\alpha,\theta}$  has only pure  point spectrum
 (and the corresponding
eigenfunctions are
  exponentially decaying)
 on  $\{e:\gamma_{\lambda}(e) >\delta (\alpha, \theta) \}$.
\end{theorem}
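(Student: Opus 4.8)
The plan is to combine Schnol's theorem with the exact-solvability structure of the Maryland model. First I would reduce the statement to a property of the difference equation $h_{\lambda,\alpha,\theta}u=eu$: by Schnol's theorem $\mu_{\lambda,\alpha,\theta}$ gives full weight to the set of $e$ admitting a polynomially bounded generalized eigenfunction, and a Borel measure carried by a countable set is automatically pure point, so it suffices to prove, for every $e$ with $\gamma_{\lambda}(e)>\delta(\alpha,\theta)$, the two facts: \textbf{(A)} if $e\notin Q_{\lambda,\alpha,\theta}$, the equation has no polynomially bounded solution; \textbf{(B)} if $e\in Q_{\lambda,\alpha,\theta}$, it has an $\ell^2$ (indeed exponentially decaying) solution. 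Granting these, $\mu_{\lambda,\alpha,\theta}$ restricted to $\{e:\gamma_{\lambda}(e)>\delta(\alpha,\theta)\}$ is carried by the countable set $\{e:\gamma_{\lambda}(e)>\delta(\alpha,\theta)\}\cap Q_{\lambda,\alpha,\theta}$, hence is pure point there with exponentially decaying eigenfunctions, and $\sigma_{ac}=\emptyset$ is already known.

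To analyse the equation I would use the classical Cayley-transform identity $\lambda\tan\pi x=-i\lambda\,(e^{2\pi i x}-1)(e^{2\pi i x}+1)^{-1}$, which, after passing to the Fourier symbol $\psi$ of $u$ (so that $u_n=\hat\psi(n)$), turns $h_{\lambda,\alpha,\theta}u=eu$ into the first-order cohomological equation on $\mathbb{R}/\mathbb{Z}$
\begin{equation*}
(2\cos 2\pi x-e-i\lambda)\,\psi(x)=-e^{-2\pi i\theta}\,(2\cos 2\pi(x-\alpha)-e+i\lambda)\,\psi(x-\alpha),
\end{equation*}
i.e. $\psi(x)=G_e(x)\,\psi(x-\alpha)$ with $G_e$ explicit, analytic and non-vanishing near $\mathbb{R}/\mathbb{Z}$ (here $\lambda\neq 0$ is essential, as it pushes the zeros of $2\cos 2\pi x-e\mp i\lambda$ off the real axis) and with $\int_{\mathbb{T}}\ln|G_e|=0$. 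Telescoping this relation produces two Weyl-type solutions $u^{\pm}$ normalized at $\pm\infty$: each is a product of factors $G_e(\theta+j\alpha)^{\mp 1}$, whose exponential rate is, by the explicit Lyapunov-exponent formula (\ref{LEBiaoda}), exactly $\gamma_{\lambda}(e)$, multiplied by a product of reciprocals of the cosines $c_j=\cos\pi(\theta+j\alpha)$ that is controlled by Lemma \ref{Estimatecos} and Theorem \ref{Thmestimatec_j}. The Wronskian of $u^+$ and $u^-$ is a non-vanishing continuous function of $e$ off the quantization set and vanishes exactly when $k_{\lambda}(e)\in\theta-1/2+\alpha\mathbb{Z}+\mathbb{Z}$; this is where the IDS formula (\ref{IDS}) enters, $k_{\lambda}(e)$ being precisely the winding/rotation data of $G_e$ along $\mathbb{R}/\mathbb{Z}$.

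With this in hand, (A) and (B) follow. If $e\notin Q_{\lambda,\alpha,\theta}$ then $u^+$ and $u^-$ are not proportional; moreover, when $\gamma_{\lambda}(e)>\delta(\alpha,\theta)$ the resonant spikes of $u^-$ at $+\infty$ are outpaced by its $\gamma_{\lambda}(e)$-growth, so $u^-$ grows faster than polynomially at $+\infty$ while $u^+$ decays there, hence no solution is polynomially bounded on all of $\mathbb{Z}$. If $e\in Q_{\lambda,\alpha,\theta}$ and $\gamma_{\lambda}(e)>\delta(\alpha,\theta)$, then $u:=u^+$ is proportional to $u^-$, and the point is to show $u\in\ell^2(\mathbb{Z})$, i.e. that the exponential decay of the $G_e$-product at rate $\gamma_{\lambda}(e)$ beats the resonant spikes of the product of $1/c_j$. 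A spike at scale $q_k$ comes from some $\theta+\ell\alpha$ ($\ell\lesssim q_k$) close to $1/2$ and, exactly as in Theorem \ref{Thmestimatec_j} read in reverse, has size at most of order $q_{k+1}\,\|q_k(\theta-1/2)\|_{\mathbb{R}/\mathbb{Z}}^{-1}$ over a window of length $\asymp q_k$, hence is dominated by $e^{\gamma_{\lambda}(e)q_k}$ precisely when $\gamma_{\lambda}(e)>\limsup_k\frac{\ln q_{k+1}+\ln\|q_k(\theta-1/2)\|_{\mathbb{R}/\mathbb{Z}}}{q_k}=\delta(\alpha,\theta)$; summing over all scales then yields $|u_n|\le Ce^{-(\gamma_{\lambda}(e)-\delta(\alpha,\theta)-\varepsilon)|n|}$.

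The genuine obstacle is this last decay estimate, precisely in the small-denominator regime $\delta(\alpha,\theta)<\gamma_{\lambda}(e)<\beta(\alpha)$ left untouched by \cite{MR776654,pastur1992spectra}. There the partial products $\prod_{j=0}^{N-1}G_e(\theta+j\alpha)$ cannot be controlled term by term: their logarithms fluctuate on every scale $q_k$ by as much as $\sim\beta(\alpha)q_k$, and one must show that these fluctuations, together with the resonant cosine spikes, still average out against the $\gamma_{\lambda}(e)$-decay. This requires a careful multi-scale bookkeeping of which resonances occur at each scale $q_k$, of the (bounded, by Lemma \ref{Estimatecos}) contribution of the non-resonant cosines between them, and of how the quantities $\|q_k(\theta-1/2)\|_{\mathbb{R}/\mathbb{Z}}$ at different scales interact; it is exactly this quantitative handling of the cohomological equation with very small denominators that is the technical heart of the argument and where the new input lies. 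The remaining ingredients — Schnol's theorem, the Cayley/Fourier reduction, identifying the product rate with $\gamma_{\lambda}$ via (\ref{LEBiaoda}) and the winding with $k_{\lambda}$ via (\ref{IDS}), and the regime $\gamma_{\lambda}(e)>\beta(\alpha)$ itself — are, while not trivial, by now standard, essentially as in Simon \cite{MR776654}.
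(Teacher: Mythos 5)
Your skeleton is the right one and matches the paper's: reduce via a Schnol-type argument to polynomially bounded solutions, pass through the Cayley/Fourier transform to a scalar cohomological equation, prove that a polynomially bounded solution forces the quantization condition, and prove that the quantization condition plus $\gamma_{\lambda}(e)>\delta(\alpha,\theta)$ yields an exponentially decaying eigenfunction. But both substantive steps are left unproved, and the devices you propose for them do not work as stated. For step (B), your ``Weyl solutions $u^{\pm}$ given by scalar products of $G_e(\theta+j\alpha)^{\mp 1}$ times cosine reciprocals'' is not a valid construction: solutions of a second-order difference equation are governed by the $2\times 2$ transfer matrices, and a scalar-product formula for a decaying solution only emerges \emph{after} one has reduced the cocycle, i.e.\ after solving the cohomological equation --- which is exactly what is at stake. (Also, since $\int_{\mathbb{T}}\ln|G_e|=0$, the product of the $G_e$'s does not carry the rate $\gamma_{\lambda}(e)$; that rate enters only through the width of analyticity of $\zeta$, i.e.\ through $\zeta_n\sim e^{-\gamma_{\lambda}(e)|n|}\sin(\pi n k_{\lambda}(e))$ as in (\ref{Valuezeta_n}).) The actual mechanism, which you describe only as ``multi-scale bookkeeping,'' is a specific cancellation: for $k=q_n$ the numerator $\zeta_{q_n}$ contains the factor $\|q_nk_{\lambda}(e)\|_{\mathbb{R}/\mathbb{Z}}$, which under the quantization condition is essentially $\|q_n(\theta-1/2)\|_{\mathbb{R}/\mathbb{Z}}\le e^{(\delta+\varepsilon)q_n}/q_{n+1}$, and this beats the small denominator $\|q_n\alpha\|_{\mathbb{R}/\mathbb{Z}}\ge 1/(2q_{n+1})$ precisely because $\gamma_{\lambda}(e)>\delta$. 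Your accounting even has the reciprocal $\|q_k(\theta-1/2)\|^{-1}$ where $\|q_k(\theta-1/2)\|$ itself must appear (note $e^{\delta q_k}\approx q_{k+1}\|q_k(\theta-1/2)\|$, not $q_{k+1}\|q_k(\theta-1/2)\|^{-1}$), so the threshold you compute is not $\delta(\alpha,\theta)$.

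For step (A), asserting that the ``Wronskian of $u^{+}$ and $u^{-}$ vanishes exactly on the quantization set'' presupposes both the existence of Weyl solutions with controlled growth at every $e$ in the regime and the characterization of when they are proportional; this is precisely the content of the necessity of the quantization condition and cannot be taken as given. The paper's route here is genuinely different and is an essential missing ingredient in your proposal: one shows (Lemma \ref{Twocondition} and Proposition \ref{Proquantizationcondition}) that for a suitable sequence $t_k=\ell q_n$ the Birkhoff sums $\sum_{j<t_k}\zeta(x+j\alpha)-t_k\zeta_0$ tend to zero uniformly on a strip, so that the distributional solution $\hat c$ is almost periodic along $t_k\alpha$ and forces $\|t_k(k_{\lambda}(e)-1/2-\theta)\|\to 0$; one then divides by an auxiliary analytic solution $\hat c_1$ of the zero-mean cohomological equation to upgrade this to $k_{\lambda}(e)\in\theta-1/2+\alpha\mathbb{Z}+\mathbb{Z}$. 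Without these two arguments the proof does not close, precisely in the regime $\delta(\alpha,\theta)<\gamma_{\lambda}(e)<\beta(\alpha)$ that the theorem is about.
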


 Following \cite{MR776654} we will use the special algebraic structure
 of the Maryland model.
First, we use Cayley transform and Fourier transform to give  another form  of
the eigenvalue problem.
\par
Introduce two new operators, 
\begin{equation}
\nonumber
    B_1=\lambda^{-1}(e-\Delta), B_2=\text{mult. by } \tan\pi(\theta+n\alpha),
\end{equation}
where $(\Delta u)_n=u_{n+1}+u_{n-1}$.
\par

Let
\begin{equation*}
    \mathcal{P}=\{\{u_n\}_{n\in \mathbb{Z}}: |u_n|\leq C(1+|n|^{C} )\text{ for some constant }C \}.
\end{equation*}

\begin{lemma}(\cite{cyconschrodinger,MR776654})\label{TRF1}
If   a vector $u\in  \mathcal{P} $ ($u\in\ell^2(\mathbb{Z})$ or is exponentially decaying)  satisfies  $B_2u=B_1u$, then   $c=(1+iB_1)u\in  \mathcal{P} $ ($c\in\ell^2(\mathbb{Z})$ or is exponentially decaying) and
\begin{equation}\label{Clay}
    (1-iB_2)(1+iB_2)^{-1}c=(1-iB_1))(1+iB_1)^{-1}c.
\end{equation}
The converse is also true. More precisely, if $c\in  \mathcal{P} $ ($c\in\ell^2(\mathbb{Z})$ or is exponentially decaying) and satisfies (\ref{Clay}), then
the vector $ u=(1+iB_1)^{-1}c \in \mathcal{P}$ ($u\in\ell^2(\mathbb{Z})$ or is exponentially decaying) and $B_1u=B_2u$.
\end{lemma}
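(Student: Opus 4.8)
The plan is to carry out, at the level of the three solution classes involved ($\mathcal{P}$, $\ell^2(\mathbb{Z})$, and ``exponentially decaying''), exactly the Cayley transform that the notation of the lemma already advertises, and then to bookkeep which operators preserve which class. Recall first that, for $\theta\notin 1/2+\alpha\mathbb{Z}+\mathbb{Z}$, each entry $\tan\pi(\theta+n\alpha)$ is finite, so $B_2u$ is an honest sequence for any sequence $u$, and $h_{\lambda,\alpha,\theta}u=eu$ is literally the identity $B_1u=B_2u$. The two Cayley transforms behave very differently. Since $B_2$ is multiplication by the \emph{real} sequence $\tan\pi(\theta+n\alpha)$ and $1\pm i\tan\beta=e^{\pm i\beta}/\cos\beta$, the operators $(1\pm iB_2)^{-1}$ are multiplication by the bounded sequences $\cos\pi(\theta+n\alpha)\,e^{\mp i\pi(\theta+n\alpha)}$, and $(1-iB_2)(1+iB_2)^{-1}$ is multiplication by the unimodular sequence $e^{-2\pi i(\theta+n\alpha)}$; in particular $1\pm iB_2$, $(1+iB_2)^{-1}$, and $(1-iB_2)(1+iB_2)^{-1}$ each map each of the three classes into itself. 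On the other side, $B_1=\lambda^{-1}(e-\Delta)$ is a bounded tridiagonal operator, so $1\pm iB_1$ visibly preserve the same three classes; the single substantive point, needed only for the converse, is that $(1+iB_1)^{-1}$ does too.

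For the forward direction I would argue as follows. Suppose $u$ lies in one of the three classes and $B_1u=B_2u$. Put $c=(1+iB_1)u$; since $1+iB_1$ preserves the class, $c$ lies in the same class. Working entrywise, $(1+iB_2)u=u+iB_2u=u+iB_1u=c$, and applying the bounded multiplication operator $(1+iB_2)^{-1}$ gives $(1+iB_2)^{-1}c=u$. Therefore
\[
(1-iB_2)(1+iB_2)^{-1}c=(1-iB_2)u=u-iB_2u=u-iB_1u=(1-iB_1)u=(1-iB_1)(1+iB_1)^{-1}c,
\]
the last step being just $(1+iB_1)^{-1}c=u$; this is (\ref{Clay}), and we have already noted $c$ is in the required class.

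For the converse, suppose $c$ lies in one of the three classes and satisfies (\ref{Clay}). Set $u=(1+iB_1)^{-1}c$ and $v=(1+iB_2)^{-1}c$. Then $v$ is automatically in the same class as $c$ (bounded multiplication operator), and $u$ is as well once we invoke the localization fact above. By construction $(1+iB_1)u=c=(1+iB_2)v$, while (\ref{Clay}), read with these substitutions, says $(1-iB_1)u=(1-iB_2)v$. Adding the two relations gives $2u=2v$, hence $u=v$; subtracting them gives $iB_1u=iB_2v=iB_2u$, i.e. $B_1u=B_2u$. Every manipulation here is a legitimate identity of sequences, since $u$, $v$, $B_1u$, $B_2v$ are honest sequences — indeed $(B_2v)_n=\sin\pi(\theta+n\alpha)\,e^{-i\pi(\theta+n\alpha)}c_n$, bounded by $|c_n|$.

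The only non-algebraic ingredient, and hence the main obstacle, is the claim that $(1+iB_1)^{-1}$ maps $\mathcal{P}$ into $\mathcal{P}$ and exponentially decaying sequences into exponentially decaying ones (on $\ell^2(\mathbb{Z})$ it is trivial, $1+iB_1$ being a bounded operator with bounded inverse). I would obtain this from exponential localization of the matrix elements of $(1+iB_1)^{-1}$: either via the Combes--Thomas estimate for the resolvent $(B_1-i)^{-1}$ of the bounded self-adjoint operator $B_1$ at the point $i$, which lies at distance $1$ from $\mathrm{spec}(B_1)\subset\mathbb{R}$; or, equivalently, by noting that under the Fourier transform $(1+iB_1)^{-1}$ is the Laurent operator with symbol $1/g$, where $g(x)=1+i\lambda^{-1}(e-2\cos 2\pi x)$ satisfies $\mathrm{Re}\,g\equiv 1$, hence is analytic and zero-free on a complex neighborhood of $\mathbb{R}/\mathbb{Z}$, so the Fourier coefficients of $1/g$ decay exponentially. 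Either way $(1+iB_1)^{-1}$ acts as convolution with an exponentially decaying kernel, and convolution with such a kernel preserves polynomial bounds (with the same degree) and preserves exponential decay of any smaller rate; this closes the converse. Everything else is routine, which is why the result is attributed to \cite{cyconschrodinger,MR776654}.
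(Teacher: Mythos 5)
Your proof is correct and follows essentially the standard Cayley-transform argument of \cite{MR776654,cyconschrodinger} that the paper cites rather than reproves: the algebra is exact, and the one genuinely analytic point --- that $(1+iB_1)^{-1}$ acts by convolution with an exponentially decaying kernel because its symbol $1+i\lambda^{-1}(e-2\cos 2\pi x)$ has real part $1$ and is therefore zero-free on a complex neighborhood of the torus --- is handled properly. The only (inconsequential) overstatement is the parenthetical claim that $1\pm iB_2$ preserve the three classes: these are unbounded multiplication operators since $\cos\pi(\theta+n\alpha)$ is not bounded below, but your argument never uses this, relying only on the boundedness of $(1\pm iB_2)^{-1}$ and on the identity $(1+iB_2)u=(1+iB_1)u$ forced by the hypothesis $B_2u=B_1u$.
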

It is easy to see that operator $ (1-iB_2)(1+iB_2)^{-1}$ (this is called Cayley transform of $B_2$) is multiplication by
\begin{equation*}
    e^{-2\pi i(n\alpha+\theta) }
\end{equation*}
and (\ref{Clay}) becomes
\begin{equation}\label{TRF2}
    \left(\frac{1-iB_1}{1+iB_1 }  c\right)_n=  e^{-2\pi i(n\alpha+\theta) }c_n.
\end{equation}
Define a Fourier Transform (in distributional sense) of a sequence $f_n \in \mathcal{P}$ by
\begin{equation}
\nonumber
    \hat{f}(x)=\sum_n e^{-2\pi inx}f_n.
\end{equation}

Then equation (\ref{TRF2}) becomes
\begin{equation}\label{TRF3}
    q(x)\hat{c}(x)=e^{-2\pi i\theta}\hat{c}(x+  \alpha),
\end{equation}
where $q(x)=-\frac{2\cos2\pi x-e-i\lambda}{2\cos 2\pi x-e+i\lambda}$.
\begin{lemma}(\cite{MR776654})\label{LemValuezeta}
Let $q(x)=e^{-2\pi i\zeta(x)}$ with $-1/2<\zeta(x)<1/2 $, then
\begin{equation*}
    \zeta(x)=\sum_n\zeta_ne^{-2\pi inx },
\end{equation*}
where \begin{eqnarray}
         \zeta _0&=&  k_{\lambda}(e)-\frac{1}{2}  , \label{Valuezeta_0}\\
          \zeta _n &=&  (-1)^n\frac{1}{n\pi}e^{-\gamma_{\lambda}(e)|n|}\sin(\pi n k_{\gamma}(e)), n\neq 0,\label{Valuezeta_n}
      \end{eqnarray}
   where $k_{\lambda}(e)$ is the IDS.
\end{lemma}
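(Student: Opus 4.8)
The plan is to identify $\zeta$ in closed form and then expand it in Fourier series. Write $z=z(x)=2\cos 2\pi x-e+i\lambda$, so that $\overline{z}=2\cos 2\pi x-e-i\lambda$ and
$q(x)=-\overline{z}/z=-e^{-2i\arg z}=e^{i(\pi-2\arg z)}$, where, because $\lambda>0$, the point $z(x)$ stays in the open upper half--plane and $\arg z\in(0,\pi)$ is a well--defined real--analytic function. Hence $-2\pi i\zeta(x)\equiv i(\pi-2\arg z)\pmod{2\pi i}$, and since $\frac{1}{\pi}\arg z-\frac12\in(-\frac12,\frac12)$, the normalization $\zeta\in(-\frac12,\frac12)$ forces
\[
\zeta(x)=-\tfrac12+\tfrac1\pi\arg\bigl(2\cos 2\pi x-e+i\lambda\bigr).
\]
Thus the Lemma reduces to computing the Fourier coefficients of $x\mapsto\arg(2\cos 2\pi x-e+i\lambda)$. (That such a $\zeta$ exists, and is the one meant in the statement, is exactly the above; $\lambda>0$ is used here decisively.)

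Next, factor. Let $\widetilde a$ be the root of $t^{2}-(e+i\lambda)t+1=0$ with $|\widetilde a|>1$ and write $\overline{\widetilde a}=Re^{-i\phi}$ with $R>1$, $\phi\in(0,\pi)$; then $\overline{\widetilde a}^{-1}=R^{-1}e^{i\phi}$. With $w=e^{2\pi ix}$ one has
$2\cos 2\pi x-e+i\lambda=w+w^{-1}-(e-i\lambda)=w^{-1}(w-\overline{\widetilde a})(w-\overline{\widetilde a}^{-1})$.
Taking logarithms, the two contributions $\pm\log w$ cancel, and expanding $\log(1-w/\overline{\widetilde a})$ and $\log(1-\overline{\widetilde a}^{-1}/w)$ in geometric series (the arguments have modulus $<1$ on $|w|=1$) and taking imaginary parts gives, for $0<x<1$,
\[
\arg\bigl(2\cos 2\pi x-e+i\lambda\bigr)=(\pi-\phi)-2\sum_{n\ge1}\frac{R^{-n}\sin(n\phi)}{n}\cos 2\pi nx,
\]
the constant $\pi-\phi=\operatorname{Im}\log(-\overline{\widetilde a})$ and the branch being fixed by continuity and the range $(0,\pi)$. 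Consequently $\zeta_{0}=\tfrac12-\tfrac\phi\pi$, and, writing $\cos 2\pi nx=\tfrac12(e^{2\pi inx}+e^{-2\pi inx})$, $\zeta_{n}=\zeta_{-n}=-\dfrac{R^{-|n|}\sin(|n|\phi)}{\pi|n|}$ for $n\ne0$.

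It remains to identify $R$ and $\phi$. From $\widetilde a+\widetilde a^{-1}=e+i\lambda$ one gets $(R+R^{-1})\cos\phi=e$ and $(R-R^{-1})\sin\phi=\lambda$, hence $\dfrac{e^{2}}{(R+R^{-1})^{2}}+\dfrac{\lambda^{2}}{(R-R^{-1})^{2}}=1$. Squaring \eqref{LEBiaoda} and factoring the discriminant as $((2-e)^{2}+\lambda^{2})((2+e)^{2}+\lambda^{2})$ shows this is equivalent to $R+R^{-1}=2\cosh\gamma_{\lambda}(e)$, i.e.\ $R=e^{\gamma_{\lambda}(e)}$, so $\cos\phi=\dfrac{e}{2\cosh\gamma_{\lambda}(e)}$ and $\sin\phi=\dfrac{\lambda}{2\sinh\gamma_{\lambda}(e)}$. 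On the other hand \eqref{IDS} rewritten reads $\cot\pi k_{\lambda}(e)=-\dfrac{e}{|\lambda|}\tanh\gamma_{\lambda}(e)$; combined with the last two relations, $\lambda>0$, and $k_{\lambda}(e)\in(0,1)$ (which fixes quadrants), this yields $\cos\phi=-\cos\pi k_{\lambda}(e)$, $\sin\phi=\sin\pi k_{\lambda}(e)$, hence $\phi=\pi(1-k_{\lambda}(e))$. Substituting, $\zeta_{0}=\tfrac12-\tfrac\phi\pi=k_{\lambda}(e)-\tfrac12$, which is \eqref{Valuezeta_0}; and using $\sin\bigl(n\pi(1-k_{\lambda}(e))\bigr)=(-1)^{n+1}\sin(\pi n k_{\lambda}(e))$,
\[
\zeta_{n}=-\frac{e^{-\gamma_{\lambda}(e)|n|}\sin(n\phi)}{\pi n}=(-1)^{n}\frac1{n\pi}\,e^{-\gamma_{\lambda}(e)|n|}\sin\bigl(\pi n k_{\lambda}(e)\bigr),\qquad n\ne0,
\]
which is \eqref{Valuezeta_n}.

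The main obstacle is the branch/quadrant bookkeeping in the first and last paragraphs: one must be careful that $2\cos 2\pi x-e+i\lambda$ stays in the open upper half--plane (using $\lambda>0$), that the correct root $\widetilde a$ lies off the unit circle with $\phi\in(0,\pi)$, and that the additive constant comes out exactly $k_{\lambda}(e)-\tfrac12$ rather than shifted by an integer; and, on the algebraic side, the reconciliation of $(R,\phi)$ with the transcendental closed forms \eqref{LEBiaoda} and \eqref{IDS}. The remaining manipulations (geometric--series summation and trigonometric identities) are routine.
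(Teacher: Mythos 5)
The paper does not prove this lemma---it is quoted from Simon \cite{MR776654}---and your argument is a correct reconstruction of the classical computation there: factor $2\cos 2\pi x - e + i\lambda$ through the roots of $t^{2}-(e-i\lambda)t+1=0$, expand the logarithm in geometric series, and match $(R,\phi)=(e^{\gamma_{\lambda}(e)},\pi(1-k_{\lambda}(e)))$ against \eqref{LEBiaoda} and \eqref{IDS}. The branch and quadrant checks you flag all go through (the sum of the principal branches differs from the true argument by a multiple of $2\pi$ lying in $(-2\pi,2\pi)$, hence vanishes; and your reading of \eqref{IDS} as $\arctan\bigl((e/|\lambda|)\tanh\gamma_{\lambda}(e)\bigr)$ is the correct one, as the $\lambda\to 0$ comparison with the free IDS confirms), so the proof is complete.
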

First, assume $e$ satisfies the {\it quantization condition } (Proposition
\ref{Proquantizationcondition}).  Let us show that
$e$ is an eigenvalue   with  exponentially decaying eigenfunction in the regime $\{e:\gamma_{\lambda}(e) >\delta (\alpha, \theta)$\}.
\begin{lemma}\label{Firstpart}
If $e$ satisfies $k_\lambda(e)\in \theta-1/2+ \alpha\mathbb{Z}+\mathbb{Z}$ and $\gamma_{\lambda}(e) >\delta (\alpha, \theta)$, then $e$ is an eigenvalue of operator $h_{\lambda,\alpha,\theta}$ and the corresponding
eigenfunction is
  exponentially decaying.
\end{lemma}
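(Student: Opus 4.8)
The plan is to build the eigenfunction explicitly at the quantization point, following the Cayley--Fourier scheme already set up. By the converse part of Lemma \ref{TRF1} it suffices to produce a nonzero, exponentially decaying sequence $c=(c_n)$ whose Fourier transform $\hat c$ solves the scalar equation (\ref{TRF3}), $q(x)\hat c(x)=e^{-2\pi i\theta}\hat c(x+\alpha)$; then $u=(1+iB_1)^{-1}c$ is a nonzero exponentially decaying solution of $B_1u=B_2u$, i.e. of $h_{\lambda,\alpha,\theta}u=eu$. So everything reduces to finding a good solution of (\ref{TRF3}).

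To construct $\hat c$, write $q=e^{-2\pi i\zeta}$ as in Lemma \ref{LemValuezeta}, and use the quantization hypothesis $k_\lambda(e)=\theta-\tfrac12+m\alpha+\ell$ with $m,\ell\in\mathbb{Z}$. I would look for $\hat c$ of the form $\hat c(x)=e^{-2\pi imx}e^{2\pi i\varphi(x)}$, the factor $e^{-2\pi imx}$ being there to absorb the $m\alpha$ hidden in $\zeta_0=k_\lambda(e)-\tfrac12$ (see (\ref{Valuezeta_0})): plugging this into (\ref{TRF3}) and using $\zeta_0-\theta-m\alpha=\ell-1\in\mathbb{Z}$, the equation collapses to the cohomological equation $\varphi(x+\alpha)-\varphi(x)=\zeta_0-\zeta(x)$, whose right-hand side has zero average. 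Its (formal) mean-zero solution is $\varphi_0=0$, $\varphi_n=\zeta_n/(1-e^{-2\pi in\alpha})$ for $n\neq0$, and it is real-valued because $\overline{\zeta_n}=\zeta_{-n}$. With this choice $\hat c$ solves (\ref{TRF3}) and is not identically zero (indeed $|\hat c|$ is $\alpha$-invariant, hence a positive constant unless $\hat c\equiv0$), so it only remains to control the decay of the Fourier coefficients $c_n=\int_0^1 e^{2\pi i(n-m)x}e^{2\pi i\varphi(x)}\,dx$.

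From (\ref{Valuezeta_n}), $|\varphi_n|$ is comparable to $\|nk_\lambda(e)\|_{\mathbb{R}/\mathbb{Z}}\,e^{-\gamma_\lambda(e)|n|}\big/\big(|n|\,\|n\alpha\|_{\mathbb{R}/\mathbb{Z}}\big)$, and $\|nk_\lambda(e)\|_{\mathbb{R}/\mathbb{Z}}$ differs from $\|n(\theta-\tfrac12)\|_{\mathbb{R}/\mathbb{Z}}$ only by $O(|m|\,\|n\alpha\|_{\mathbb{R}/\mathbb{Z}})$. When $\gamma_\lambda(e)>\beta(\alpha)$ this is Simon's situation: $q_{n+1}\le e^{(\beta(\alpha)+\varepsilon)q_n}$ dominates the denominators $\|q_n\alpha\|_{\mathbb{R}/\mathbb{Z}}^{-1}$ (via (\ref{Approximate1}),(\ref{Approximate2})), so $\sum_n|\varphi_n|e^{2\pi\rho|n|}<\infty$ for some $\rho>0$; hence $\hat c$ extends to a bounded analytic function on a horizontal strip and $c_n$ decays exponentially by Paley--Wiener, finishing this regime. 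The main obstacle is the remaining range $\delta(\alpha,\theta)<\gamma_\lambda(e)\le\beta(\alpha)$: there the denominators $\|q_k\alpha\|_{\mathbb{R}/\mathbb{Z}}\asymp q_{k+1}^{-1}$ can outrun $e^{-\gamma_\lambda(e)q_k}$, so the formal series for $\varphi$ need not even converge and $\hat c$ is not analytic in any strip. This is precisely the regime of very small denominators advertised in the introduction, and I expect to handle it by a new treatment of the cohomological equation: split $\varphi$ into a ``non-resonant'' part, analytic and bounded on a strip of width roughly $\gamma_\lambda(e)/2\pi$, and ``resonant'' pieces supported on the frequencies near the multiples of each $q_k$; the resonant piece at scale $q_k$ is small because its numerator carries the factor $\|q_k(\theta-\tfrac12)\|_{\mathbb{R}/\mathbb{Z}}$, which the definition (\ref{Def.delta_1}) of $\delta$ bounds by $e^{(\delta(\alpha,\theta)+\varepsilon)q_k}q_{k+1}^{-1}$, and one then estimates the contribution of each scale to $c_n$ separately rather than demanding a single strip of analyticity. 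Propagating these bounds across all scales, the exponential decay of $c_n$ survives exactly as long as $\gamma_\lambda(e)>\delta(\alpha,\theta)$, which is the content of the hypothesis; organizing this multi-scale bookkeeping so that the resonant corrections still assemble into a genuine function $\hat c$ with exponentially decaying Fourier coefficients is the technical heart of the proof.
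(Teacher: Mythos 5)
You set up the right framework (Cayley transform, Fourier transform, the ansatz $\hat c(x)=e^{-2\pi i m x}e^{2\pi i\varphi(x)}$, the cohomological equation $\varphi_n=\zeta_n/(1-e^{-2\pi i n\alpha})$), and you correctly identify the key cancellation: at a resonant index the numerator $\zeta_{q_n}$ carries the factor $\sin(\pi q_n k_\lambda(e))$, hence $\|q_n k_\lambda(e)\|_{\mathbb{R}/\mathbb{Z}}\le \|q_n(\theta-1/2)\|_{\mathbb{R}/\mathbb{Z}}+|m|\,\|q_n\alpha\|_{\mathbb{R}/\mathbb{Z}}$, with the first term bounded by $e^{(\delta+\varepsilon)q_n}/q_{n+1}$ via the definition of $\delta$. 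But you then stop short of the conclusion this bound already gives, and instead assert that in the regime $\delta(\alpha,\theta)<\gamma_\lambda(e)\le\beta(\alpha)$ ``the formal series for $\varphi$ need not even converge and $\hat c$ is not analytic in any strip,'' proposing an unexecuted multi-scale resonant/non-resonant decomposition as the ``technical heart.'' That assertion is false under the lemma's hypotheses, and the multi-scale machinery is both unnecessary and missing. Plugging your own numerator bound into $\varphi_{q_n}=\zeta_{q_n}/(1-e^{-2\pi i q_n\alpha})$ together with $\|q_n\alpha\|_{\mathbb{R}/\mathbb{Z}}\ge 1/(2q_{n+1})$ gives directly
\begin{equation*}
|\varphi_{q_n}|\le C\,\frac{e^{-\gamma_\lambda(e)q_n}\big(e^{(\delta+\varepsilon)q_n}/q_{n+1}+|m|/q_{n+1}\big)}{1/(2q_{n+1})}\le C e^{-(\gamma_\lambda(e)-\delta-\varepsilon)q_n}+C|m|e^{-\gamma_\lambda(e)q_n},
\end{equation*}
which is exponentially small in $q_n$ precisely because $\gamma_\lambda(e)>\delta(\alpha,\theta)$. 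The factor $q_{n+1}$ in the denominator is cancelled by the $q_{n+1}$ in the numerator; there is no divergence to repair.

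What remains — and what your sketch does not supply — is the verification that the \emph{non-resonant} coefficients are also uniformly exponentially small, so that $\psi$ is genuinely analytic on a strip and $\hat c$ has exponentially decaying Fourier coefficients without any scale-by-scale reassembly. This requires the standard continued-fraction case analysis: for $q_n<|k|<q_{n+1}$ with $|k|\ge 2\ln q_{n+1}/\gamma_\lambda(e)$ one uses $\|k\alpha\|\ge\|q_n\alpha\|\ge 1/(2q_{n+1})$ and absorbs $q_{n+1}$ into $e^{-\gamma_\lambda(e)|k|/2}$; for $|k|=\ell q_n+k_0$ with $0<k_0<q_n$ and $|k|$ small one gets $\|k\alpha\|\ge\Delta_{n-1}-\ell\Delta_n\ge 1/(Cq_n)$ from (\ref{Approximate1})--(\ref{Approximate2}); and for $|k|=\ell q_n$ one repeats the resonant estimate with the extra factor $\ell$. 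Once all coefficients are exponentially small, $\hat c$ is analytic on $\mathbb{T}$, $c_n$ decays exponentially, and Lemma \ref{TRF1} transports this to the eigenfunction $u$. So the gap is not that a new multi-scale idea is needed, but that you did not carry your own key estimate through to its immediate consequence and did not treat the non-resonant frequencies.
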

\begin{proof}
It suffices to prove that  under the condition of Lemma \ref{Firstpart},  the following equation
 \begin{equation}\label{Mainequation}
   e^{-2\pi i\zeta(x)} \hat{c}(x)=e^{-2\pi i\theta}\hat{c}(x+  \alpha),
\end{equation}
has an analytic (on $\mathbb{T}$) solution. Indeed, if $\hat{c}(x)$ is analytic on $\mathbb{T}$, then  one has
\begin{equation*}
   | c_n|\leq C e^{-t |n|},
\end{equation*}
for some $t>0$, where $c_n$ is the Fourier coefficient of $ \hat{c}(x)$.  This implies  $c\in \ell^2(\mathbb{Z})$ is
 exponentially decaying. This also implies   $e$ is an eigenvalue and $u\in \ell^2(\mathbb{Z})$ is
 exponentially decaying by Lemma \ref{TRF1}.
 \par
 Assume $k_\lambda(e)=\theta-1/2+ m \alpha $ for some $m\in\mathbb{Z}$.
 We will prove that equation (\ref{Mainequation}) has an  analytic solution  of the form $\hat{c}(x)=e^{-2\pi i(mx+\psi(x))}$, where $\psi(x)$ is real analytic on $\mathbb{T}$.
  Substituting $\hat{c}(x)=e^{-2\pi i(mx+\psi(x))}$ into   (\ref{Mainequation}) and comparing the Fourier coefficients,
one has
 \begin{equation}\label{Harmonicequation}
    \psi_k=\frac{\zeta _k}{ e^{- 2\pi  i k\alpha} -1}, k\neq 0.
 \end{equation}
 Thus in order to prove this lemma, we only need to prove that $\psi_k$
 is exponentially decaying.
 \par
 By the definition of $\delta (\alpha,\theta)$
, for any $\varepsilon>0$, we have
 \begin{equation}\label{q_ntheta}
    ||q_n(\theta-1/2)||_{\mathbb{R}/\mathbb{Z}}\leq \frac{e^{(\delta  +\varepsilon)q_n}}{q_{n+1}},
 \end{equation}
 for $n$ large enough. Below, we always suppose $n$ is sufficient large.
 \par
 Thus
 \begin{equation*}
    ||q_nk_{\gamma}(e)||_{\mathbb{R}/\mathbb{Z}}\leq \frac{e^{(\delta +\varepsilon)q_n}}{q_{n+1}}+|m|\;||q_n\alpha||_{\mathbb{R}/\mathbb{Z}}.
 \end{equation*}
 \par
 If $|k|=q_n$, by (\ref{Valuezeta_n}) one has
 \begin{equation*}
    | \zeta_{k}  | \leq  Ce^{-\gamma_{\lambda}(e)q_n}|| q_n k_{\gamma}(e))||_{\mathbb{R}/\mathbb{Z}}.
 \end{equation*}
 Then
 \begin{eqnarray*}
   |\psi_{k}| &\leq& C\frac{e^{-\gamma_{\lambda}(e)q_n}|| q_n k_{\gamma}(e))||_{\mathbb{R}/\mathbb{Z}}}{|| q_n \alpha||_{\mathbb{R}/\mathbb{Z}}}  \\
     &\leq&  C e^{-(\gamma_{\lambda}(e)-\delta -\varepsilon)q_n}  +C|m|   e^{-\gamma_{\lambda}(e)q_n} .
 \end{eqnarray*}
 In this case,   $ |\psi_{k}|$ is exponentially decaying because   $ \gamma_{\lambda}(e) >\delta (\alpha, \theta)  $.
 \par
 If $q_n<|k|<q_{n+1}$ and $|k|\geq 2\frac{\ln q_{n+1}}{\gamma_\lambda(e)}$, by (\ref{Harmonicequation}) one has
 \begin{eqnarray*}
   |\psi_{k}| &\leq & C \frac{e^{-\gamma_{\lambda}(e)|k|}}{|| q_n \alpha||_{\mathbb{R}/\mathbb{Z}}} \\
     &\leq& C  e^{-\frac{\gamma_{\lambda}(e)}{2}|k|}.
 \end{eqnarray*}
  \par
 If $q_n<|k|<q_{n+1}$ and $|k|\leq 2\frac{\ln q_{n+1}}{\gamma_\lambda(e)}$, let $|k|=\ell q_n+k_0$ with $|\ell|\leq2\frac{\ln q_{n+1}}{q_n\gamma_\lambda(e)} $
 and $0\leq k_0<q_n$.
 Assume $k_0\neq 0$, then by (\ref{Approximate1}) and (\ref{Approximate2}), we have
   \begin{eqnarray*}
                                          ||k\alpha||_{\mathbb{R}/\mathbb{Z}} &\geq&  \Delta_{n-1}-\ell \Delta_{n} \\
                                           &\geq& \frac{1}{C q_n},
                                       \end{eqnarray*}
  Clearly,  $ \psi_{k}$ is exponentially decaying.
  \par
  Assume $k_0=0$, then  $|k|=\ell q_n $ with $|\ell|\leq2\frac{\ln q_{n+1}}{q_n\gamma_\lambda(e)} $. Applying the
  same proof as in the  first case,
  we also have $  \psi_{k} $ is exponentially decaying.
\end{proof}
\par

It is well known that in order to  prove Theorem \ref{maintheorem2},  it suffices to prove that any polynomially  bounded solution
  (i.e., $u\in\mathcal{P}$) of (\ref{MME}) is  exponentially decaying.
  By  the Cayley  transform (Lemma \ref{TRF1})  and Fourier transform, we only need to prove that
   if the   equation (\ref{Mainequation})
 has a solution $\hat{c}(x)$ such that $\hat{c}(x)$ is a Fourier transform of a sequence $c_n\in \mathcal{P}$, then
 $c_n$ is  exponentially decaying (or $\hat{c}(x)$ is analytic on $\mathbb{T}$).
Applying Lemma \ref{Firstpart},  it suffices to show that
   $ k_\lambda(e)\in \theta-1/2+ \alpha\mathbb{Z}+\mathbb{Z}$.
The next two results will establish this fact.
\par
\begin{lemma}\label{Twocondition}
Suppose
the  equation
 $
   e^{-2\pi i\zeta(x)} \hat{c}(x)=e^{-2\pi i\theta}\hat{c}(x+  \alpha),
$
 has a solution $\hat{c}(x)$ such that $\hat{c}(x)$ is a Fourier transform of a sequence $c_n\in \mathcal{P}$, and there exists
 a sequence $t_k\in \mathbb{N}^+$ and $t>0$ that satisfying the following two conditions:
\par
(1)$\lim_{k\rightarrow\infty}||t_k\alpha||_{\mathbb{R}/\mathbb{Z}}=0$;
\par
(2)$\sum_{j\neq 0}\sup_{k}|\frac{1-e^{-2\pi ijt_k\alpha}}{1-e^{-2\pi ij\alpha}}  {\zeta} _j |e^{ t|j|}<\infty$.
\par
Then we have
\begin{equation*}
    \lim_{k\rightarrow\infty}||t_k(k_{\lambda}(e)-1/2-\theta)||_{\mathbb{R}/\mathbb{Z}}=0.
\end{equation*}
\end{lemma}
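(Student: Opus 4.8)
The plan is to iterate the functional equation $t_k$ times and read the conclusion off the resulting twisted relation. Since $q(x)=e^{-2\pi i\zeta(x)}=-\tfrac{2\cos 2\pi x-e-i\lambda}{2\cos 2\pi x-e+i\lambda}$ is real-analytic on $\mathbb{T}$ (its denominator has imaginary part $\lambda>0$, hence never vanishes), multiplying the distribution $\hat c$ by $q$ and translating it are legitimate operations, and iterating $e^{-2\pi i\zeta(x)}\hat c(x)=e^{-2\pi i\theta}\hat c(x+\alpha)$ yields, for every $t\in\mathbb{N}^+$,
\[
\hat c(x+t\alpha)=e^{2\pi i\theta t}\,e^{-2\pi i S_t(x)}\,\hat c(x),\qquad S_t(x):=\sum_{\ell=0}^{t-1}\zeta(x+\ell\alpha).
\]
Using the Fourier expansion of $\zeta$ from Lemma \ref{LemValuezeta}, one computes $S_t(x)=t(k_\lambda(e)-1/2)+R_t(x)$ with $R_t(x)=\sum_{j\neq 0}\zeta_j\frac{1-e^{-2\pi i jt\alpha}}{1-e^{-2\pi i j\alpha}}e^{-2\pi i jx}$. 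Hence, taking $t=t_k$, the equation becomes $\hat c(x+t_k\alpha)=\lambda_k\,g_k(x)\,\hat c(x)$, where $\lambda_k:=e^{2\pi i t_k(\theta-k_\lambda(e)+1/2)}$ and $g_k:=e^{-2\pi i R_{t_k}}$; and the desired conclusion $\|t_k(k_\lambda(e)-1/2-\theta)\|_{\mathbb{R}/\mathbb{Z}}\to 0$ is precisely the statement $\lambda_k\to 1$.

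The next step is to use hypotheses (1)--(2) to show $g_k\to 1$ in a strong sense. Set $r_j^{(k)}=\zeta_j\frac{1-e^{-2\pi i jt_k\alpha}}{1-e^{-2\pi i j\alpha}}$, so hypothesis (2) reads $\sum_{j\neq 0}\bigl(\sup_k|r_j^{(k)}|\bigr)e^{t|j|}<\infty$; in particular $R_{t_k}$ extends holomorphically to the strip $|\operatorname{Im}x|\le t/(4\pi)$ with a bound uniform in $k$. Hypothesis (1) forces $e^{-2\pi i jt_k\alpha}\to 1$ for every fixed $j$, hence $r_j^{(k)}\to 0$, and dominated convergence over $j$ (against the summable majorant $(\sup_k|r_j^{(k)}|)e^{t|j|}$) gives $\sup_{|\operatorname{Im}x|\le t/(4\pi)}|R_{t_k}(x)|\to 0$; by Cauchy estimates $R_{t_k}\to 0$, hence $g_k\to 1$, in $C^\infty(\mathbb{T})$.

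Finally I would extract $\lambda_k\to 1$ by duality. We may assume $\hat c\not\equiv 0$, so there is $\phi\in C^\infty(\mathbb{T})$ with $\langle\hat c,\phi\rangle\neq 0$. Because $\|t_k\alpha\|_{\mathbb{R}/\mathbb{Z}}\to 0$ and translation is continuous on $C^\infty(\mathbb{T})$, we have $\langle\hat c(\cdot+t_k\alpha),\phi\rangle=\langle\hat c,\phi(\cdot-t_k\alpha)\rangle\to\langle\hat c,\phi\rangle$. On the other hand, the twisted relation gives $\langle\hat c(\cdot+t_k\alpha),\phi\rangle=\lambda_k\langle\hat c,g_k\phi\rangle$, and since $g_k\phi\to\phi$ in $C^\infty(\mathbb{T})$ (continuity of multiplication) we get $\langle\hat c,g_k\phi\rangle\to\langle\hat c,\phi\rangle\neq 0$. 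Dividing, $\lambda_k=\langle\hat c(\cdot+t_k\alpha),\phi\rangle/\langle\hat c,g_k\phi\rangle\to 1$, as required. The only delicate points are the legitimacy of iterating the equation at the distributional level (secured by analyticity of $q$) and the uniform-in-$k$ analyticity of $R_{t_k}$ together with $R_{t_k}\to 0$, which is exactly what hypotheses (1)--(2) are designed to provide; everything else is routine continuity of distributional operations.
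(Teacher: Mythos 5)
Your proposal is correct and follows essentially the same route as the paper: iterate the equation $t_k$ times, split the Birkhoff sum $\sum_{\ell=0}^{t_k-1}\zeta(x+\ell\alpha)$ into $t_k\zeta_0+R_{t_k}(x)$ with $\zeta_0=k_\lambda(e)-1/2$, use condition (2) to get $R_{t_k}\to 0$ uniformly on a strip and condition (1) to get $\hat c(\cdot+t_k\alpha)\to\hat c$ distributionally, and conclude $e^{2\pi i t_k(\theta-\zeta_0)}\to 1$. Your test-function pairing merely makes explicit the paper's terse ``combining'' step (and your strip width $t/(4\pi)$ is in fact the careful choice), so there is nothing further to add.
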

\begin{proof}
Clearly, one has
 \begin{equation}\label{new1}
   e^{-2\pi i\sum_{j=0}^{t_k-1}\zeta(x+j\alpha)}\hat{c}(x)=e^{-2\pi it_k\theta}\hat{c}(x+ t_k  \alpha).
\end{equation}
The first condition implies that $ \lim_{k\rightarrow\infty}\hat{c}(x+ t_k  \alpha)=\hat{c}(x)$ in the distributional sense.
 The second condition implies that
 \begin{equation*}
    \lim_{k\rightarrow\infty}\sum_{j=0}^{t_k-1}\zeta(x+j\alpha)- t_k {\zeta} _0 =0,
\end{equation*}
for $x$ in the  strip $ \{z:|\Im z|\leq t/2\},$ uniformly.
\par
Combining with (\ref{new1}), we have
\begin{equation*}
    \lim_{k\rightarrow\infty}||t_k(  {\zeta} _0-\theta)||_{\mathbb{R}/\mathbb{Z}}=0.
\end{equation*}
Using (\ref{Valuezeta_0}), we obtain this lemma.
\end{proof}
\par
\begin{proposition}\label{Proquantizationcondition}
If  $\gamma_{\lambda}(e) >\delta (\alpha, \theta)$ and   equation (\ref{MME}) has a solution $u\in \mathcal{P}$,
then   $e$  satisfies
\begin{equation}\label{quantizationcondition}
     k_{\lambda}(e)\in \theta+1/2+\alpha\mathbb{Z}+\mathbb{Z} .
\end{equation}
\end{proposition}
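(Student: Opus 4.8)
The plan is to run the polynomially bounded solution through the Cayley/Fourier reduction already set up above, use Lemma \ref{Twocondition} to extract an \emph{approximate} quantization condition, and then upgrade this to the exact one with the help of $\gamma_\lambda(e)>\delta(\alpha,\theta)$. By Lemma \ref{TRF1} and the Fourier transform, the solution $u\in\mathcal P$ of (\ref{MME}) produces a nonzero $c=\{c_n\}\in\mathcal P$ whose distributional Fourier transform $\hat c$ solves the cohomological equation (\ref{Mainequation}), $e^{-2\pi i\zeta(x)}\hat c(x)=e^{-2\pi i\theta}\hat c(x+\alpha)$; it therefore suffices to prove $\beta:=\zeta_0-\theta=k_\lambda(e)-\tfrac12-\theta\in\alpha\mathbb Z+\mathbb Z$.

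I would apply Lemma \ref{Twocondition} with $t_k=q_k$ (the continued-fraction denominators of $\alpha$) and a fixed $t\in(0,\gamma_\lambda(e))$. Condition (1) is automatic since $\|q_n\alpha\|\to0$. For condition (2) I would combine the explicit decay $|\zeta_j|\le\frac{1}{\pi|j|}e^{-\gamma_\lambda(e)|j|}$ read off from (\ref{Valuezeta_n}) with the elementary estimate
\[
\sup_{n}\ \frac{|1-e^{-2\pi i j q_n\alpha}|}{|1-e^{-2\pi i j\alpha}|}\ \le\ C\,|j|\qquad (j\ne0),
\]
which follows from (\ref{Approximate1}), (\ref{Approximate2}) and the inequalities $\|jq_n\alpha\|\le|j|\,\|q_n\alpha\|$ and $\|jq_n\alpha\|\le q_n\|j\alpha\|$ (split the supremum according to whether $|j|<q_n$). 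The series in condition (2) is then dominated by $C\sum_{j\ne0}e^{-(\gamma_\lambda(e)-t)|j|}<\infty$, and Lemma \ref{Twocondition} yields the approximate quantization condition $\|q_n\beta\|\to0$.

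The hard part is upgrading $\|q_n\beta\|\to0$ to $\beta\in\alpha\mathbb Z+\mathbb Z$; this is where $\gamma_\lambda(e)>\delta(\alpha,\theta)$ is essential, since for sufficiently Liouville $\alpha$ there are $\beta\notin\alpha\mathbb Z+\mathbb Z$ with $\|q_n\beta\|\to0$. The mechanism I would aim for: since $|e^{-2\pi i\zeta(x)}|=1$, equation (\ref{Mainequation}) gives $|\hat c(x+\alpha)|=|\hat c(x)|$, so once $\hat c$ is known to be regular enough, unique ergodicity of the rotation forces $|\hat c|$ to be a nonzero constant, hence $\hat c(x)=\mathrm{const}\cdot e^{-2\pi i\phi(x)}$, and the winding-number identity $m\alpha\equiv\zeta_0-\theta\pmod1$ (with $m$ the degree of $\phi$) yields $\beta\in\alpha\mathbb Z+\mathbb Z$. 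To make $\hat c$ that regular one must bootstrap it from $c\in\mathcal P$ to a continuous (then smooth) function, and this is the cohomological small-divisor problem: iterating (\ref{Mainequation}) gives $\hat c(x+q_n\alpha)=e^{2\pi i(q_n\theta-q_n\zeta_0-R_n(x))}\hat c(x)$ with $R_n(x)=\sum_{\ell\ne0}\zeta_\ell\frac{1-e^{-2\pi i\ell q_n\alpha}}{1-e^{-2\pi i\ell\alpha}}e^{-2\pi i\ell x}$ and $\|R_n\|_\infty\to0$, and one must combine this with $\|q_n\beta\|\to0$ and with the reformulation $\|q_n(\theta-\tfrac12)\|\,q_{n+1}\le e^{(\delta(\alpha,\theta)+\varepsilon)q_n}$ of the hypothesis (using $\theta-\tfrac12\equiv k_\lambda(e)-\beta\pmod1$, via the Ostrowski expansion of $\beta$ with respect to $\alpha$) to control the divisors $(1-e^{-2\pi i\ell\alpha})^{-1}$ against the decay $e^{-\gamma_\lambda(e)|\ell|}$ of $\zeta$. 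Carrying out this quantitative argument in the regime $\{e:\gamma_\lambda(e)<\beta(\alpha)\}$ of very small denominators is the genuine obstacle and is the paper's new localization input.
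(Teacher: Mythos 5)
Your reduction to the cohomological equation and your verification of condition (2) of Lemma \ref{Twocondition} for $t_k=q_k$ are fine, but the proof as proposed has a genuine gap exactly where you flag it: you obtain only the qualitative statement $\|q_n\beta\|_{\mathbb{R}/\mathbb{Z}}\to 0$ for $\beta=k_\lambda(e)-\tfrac12-\theta$, and the upgrade to $\beta\in\alpha\mathbb{Z}+\mathbb{Z}$ is left as an acknowledged ``obstacle.'' The mechanism you sketch for that upgrade (bootstrap $\hat c$ to a continuous unimodular function, then read off a winding number) is not viable here: a priori $\hat c$ is only a distribution of finite order (for $c\in\mathcal P$ it can be a measure or worse), so $|\hat c|$ is not defined, and making $\hat c$ continuous essentially presupposes the quantization condition you are trying to prove. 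That route is what the paper uses in Proposition \ref{exponentiallydecayingeigenfunctions}, but there the eigenfunction is assumed exponentially decaying so $\hat c$ is analytic from the start.

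The paper closes the gap with two ideas you are missing. First, it applies Lemma \ref{Twocondition} not with $t_k=q_n$ but with all multiples $t_k=\ell q_n$, $1\le\ell\le\ell_n=\min\{\lfloor q_n^2/\|q_nk_\lambda(e)\|\rfloor,\lfloor 2q_{n+1}/q_n\rfloor\}$ (the verification of condition (2) for this richer sequence is Lemma \ref{A.1}). Since $\|\ell q_n\beta\|\to 0$ uniformly over $\ell\le\ell_n$, this converts your qualitative conclusion into the quantitative bound $\|q_n\beta\|\le 1/\ell_n$, hence $\|q_nk_\lambda(e)\|\le Cq_n/q_{n+1}+Ce^{(\delta+\varepsilon)q_n}/q_{n+1}$. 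Second, with this bound and $\gamma_\lambda(e)>\delta(\alpha,\theta)$ one can solve the \emph{zero-mean} cohomological equation $e^{-2\pi i\tilde\zeta(x)}\hat c_1(x)=\hat c_1(x+\alpha)$, $\tilde\zeta=\zeta-\zeta_0$, with an \emph{analytic} nonvanishing $\hat c_1$ (Lemma \ref{A.2}, the same small-divisor analysis as in Lemma \ref{Firstpart}). Dividing, $d=\hat c/\hat c_1$ is a nonzero distribution satisfying the constant-multiplier equation $d(x+\alpha)=e^{-2\pi i\beta}d(x)$, and comparing Fourier coefficients (equivalently, testing against $j_n$ with $\|j_n\alpha\|\to 0$) forces $\beta\in\alpha\mathbb{Z}+\mathbb{Z}$. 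This division trick is precisely what lets one avoid regularizing $\hat c$ itself; without it, and without the quantitative rate from the enlarged sequence $t_k$ (which is needed to construct $\hat c_1$ in the first place), your argument does not reach the conclusion.
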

\begin{proof}
Let
\begin{equation*}
\{t_k\}_{k=1}^{\infty}=\{q_n, 2q_n, 3q_n,\cdots, \ell_{ n} q_n\} _{n=1}^{\infty} ,
\end{equation*}
   with $   \ell_{ n} = \min\{ \lfloor\frac{ q_n^2}{||q_nk_{\lambda}(e)||_{\mathbb{R}/\mathbb{Z}}}\rfloor, \lfloor \frac{2q_{n+1} }{q_n}  \rfloor\} $, where $\lfloor m\rfloor$ denotes the smallest integer not exceeding $m$.
   \par
   By the fact that
 $  \ell_{ n}\leq \frac{2q_{n+1} }{q_n} $, one has  that $t_k$ satisfies the first  condition of Lemma \ref{Twocondition}.
Combining with that  $\ell_{ n}  \leq  \frac{
  q_n^2}{||q_nk_{\lambda}(e)||_{\mathbb{R}/\mathbb{Z}}}$ and using an
argument similar to the  proof of Lemma \ref{Firstpart}, we have that $t_k$ satisfies the second condition of
Lemma \ref{Twocondition} (see Appendix \ref{A.1}).
 \par
 By (\ref{TRF3}), one has that the  equation
 $
   e^{-2\pi i\zeta(x)} \hat{c}(x)=e^{-2\pi i\theta}\hat{c}(x+  \alpha),
$
 has a solution $\hat{c}(x)$ such that $\hat{c}(x)$ is a Fourier transform of a sequence $c=(I+iB_1)u\in \mathcal{P}$.
 \par
 Applying Lemma \ref{Twocondition},
 we have
\begin{equation*}
    \lim_{k\rightarrow\infty}||t_k(k_{\lambda}(e)-\frac{1}{2}-\theta)||_{\mathbb{R}/\mathbb{Z}}=0.
\end{equation*}
This implies that
\begin{eqnarray*}
  ||q_n(k_{\lambda}(e)-\frac{1}{2}-\theta)||_{\mathbb{R}/\mathbb{Z}} &\leq&  \frac{1}{\ell_n} \\
    &\leq&  \max \{C  \frac{||q_n k_{\lambda}(e)||_{\mathbb{R}/\mathbb{Z}} }{q_n^2}, C\frac{q_n}{q_{n+1} } \}.
\end{eqnarray*}
Combining with   (\ref{q_ntheta}), we have
 \begin{equation*}
    ||q_n k_{\lambda}(e)  ||_{\mathbb{R}/\mathbb{Z}} \leq\max \{C  \frac{||q_nk_{\lambda}(e)||_{\mathbb{R}/\mathbb{Z}} }{q_n^2}+\frac{e^{(\delta +\varepsilon)q_n}}{q_{n+1}}, C\frac{q_n}{q_{n+1} }+\frac{e^{(\delta +\varepsilon)q_n}}{q_{n+1}}\}.
\end{equation*}
In particular,
 \begin{equation*}
    ||q_n k_{\lambda}(e)  ||_{\mathbb{R}/\mathbb{Z}} \leq C\frac{q_n}{q_{n+1} }+C\frac{e^{(\delta +\varepsilon)q_n}}{q_{n+1}} .
\end{equation*}

\par
By an argument similar to that in the proof of Lemma  \ref{Firstpart}, we obtain that the following  equation
\begin{equation*}
   e^{-2\pi i\tilde{\zeta}(x)} \hat{c}_1(x)= \hat{c}_1(x+  \alpha),
\end{equation*}
has an analytic solution $\hat{c}_1(x)$, where $\tilde{\zeta}(x)=\zeta(x)-\zeta_0 $ (Appendix \ref{A.2}).
\par
Let $d(x)=\frac{\hat{c}(x)}{\hat{c}_1(x)}$ (well defined in the
distributional sense since $\hat{c}_1(x)$ is analytic). It obeys
\begin{equation*}
    d(x+  \alpha )=e^{-2\pi i(k_{\lambda}(e)-\frac{1}{2}-\theta)}d(x).
\end{equation*}
This implies that
\begin{equation*}
    d(x+  j_n\alpha )=e^{-2\pi ij_n(k_{\lambda}(e)-\frac{1}{2}-\theta)}d(x).
\end{equation*}
Thus,
 for any sequence $j_n\in \mathbb{Z}$ such that $ \lim_{n\rightarrow\infty}||j_n \alpha  ||_{\mathbb{R}/\mathbb{Z}}=0$,
 we must have
 \begin{equation*}
  \lim_{n\rightarrow\infty}||j_n (k_{\lambda}(e)-\frac{1}{2}-\theta) ||_{\mathbb{R}/\mathbb{Z}}=0.
 \end{equation*}
 This implies $k_\lambda (e)-\frac{1}{2}-\theta=m\alpha \mod \mathbb{Z}$ for some $m\in \mathbb{Z}$.
 \end{proof} 
 \par
From above discussion  one has  that  the eigenfunctions
 corresponding to the eigenvalues  in the regime  $\{e:\gamma_{\lambda}(e) >\delta (\alpha, \theta)  \}$ are exponentially decaying.
  Although the (up to) two points $\{e:\gamma_{\lambda}(e)=\delta
  (\alpha, \theta)  \}$ may or may not be in the point spectrum, we can
  prove that
  the  corresponding eigenfunctions (if they exist) 
  can not be exponentially decaying.
  \begin{proposition}\label{exponentiallydecayingeigenfunctions}
  If $e\in \sigma_{p}(h_{\lambda,\alpha,\theta})$ and the  corresponding eigenfunction is exponentially decaying, then
  we must have $\gamma_{\lambda}(e) >\delta (\alpha, \theta)  $.
  \end{proposition}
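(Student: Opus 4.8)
The plan is to exploit the extra rigidity that exponential decay of the eigenfunction confers on the Cayley--Fourier transform $\hat c$: it forces the conjugating function $\psi$ (see Lemma~\ref{Firstpart}) to be analytic on a strip of \emph{fixed} positive width, after which a direct look at the $q_n$-th Fourier coefficient of the cohomological equation (\ref{Harmonicequation}) produces the strict gap $\delta(\alpha,\theta)<\gamma_\lambda(e)$.

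First I would set up the functional equation as above. If $u\in\ell^2(\mathbb{Z})$ is an exponentially decaying eigenfunction of $h_{\lambda,\alpha,\theta}$ at energy $e$, then $B_1u=B_2u$ by the definitions of $B_1,B_2$, so by Lemma~\ref{TRF1} the vector $c=(1+iB_1)u$ is a nonzero (as $u\neq0$ and $1+iB_1$ is injective) exponentially decaying sequence. Hence $\hat c(x)=\sum_n c_ne^{-2\pi inx}$ extends to a function analytic and not identically zero on some strip $\{|\Im z|<t_0\}$, $t_0>0$, satisfying $e^{-2\pi i\zeta(x)}\hat c(x)=e^{-2\pi i\theta}\hat c(x+\alpha)$ by (\ref{TRF3}). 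Since $|\zeta_n|$ decay exponentially (Lemma~\ref{LemValuezeta}), the symbol $q(x)=e^{-2\pi i\zeta(x)}$ is analytic and zero-free on a strip, so after shrinking $t_0$ the functional equation gives $\hat c(z)=0\iff\hat c(z+\alpha)=0$ throughout the strip; were $\hat c$ to vanish at some $z_0$, it would vanish at $z_0+n\alpha$ for all $n\in\mathbb{Z}$, and since $\hat c$ is $1$-periodic and $\alpha$ is irrational these zeros would have an accumulation point, contradicting $\hat c\not\equiv0$. Therefore $\hat c$ is zero-free on a strip $\{|\Im z|<t\}$ with $t>0$, and we may write $\hat c(x)=e^{-2\pi i(mx+\psi(x))}$ for a unique $m\in\mathbb{Z}$ (the winding number of $\hat c$) with $\psi$ analytic on $\{|\Im z|<t\}$; in particular $|\psi_k|\le Ce^{-\kappa|k|}$ for some fixed $\kappa>0$ independent of $k$.

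Substituting $\hat c(x)=e^{-2\pi i(mx+\psi(x))}$ into the functional equation and using that a continuous $\mathbb{Z}$-valued function is constant, we obtain $\zeta(x)+\psi(x)-\psi(x+\alpha)=\theta+m\alpha+\ell$ for some $\ell\in\mathbb{Z}$. Matching Fourier coefficients gives, exactly as in the proof of Lemma~\ref{Firstpart}, the quantization identity $k_\lambda(e)-\tfrac12=\zeta_0=\theta+m\alpha+\ell$ from the constant term and $\psi_k=\zeta_k/(e^{-2\pi ik\alpha}-1)$ for $k\neq0$, i.e., (\ref{Harmonicequation}). Now take $k=q_n$: analyticity gives $|\psi_{q_n}|\le Ce^{-\kappa q_n}$, while $|e^{-2\pi iq_n\alpha}-1|\le 2\pi||q_n\alpha||_{\mathbb{R}/\mathbb{Z}}\le 2\pi/q_{n+1}$ together with (\ref{Valuezeta_n}) and $\sin\pi y\ge 2y$ on $[0,\tfrac12]$ give
\begin{equation*}
|\zeta_{q_n}|=\frac{1}{\pi q_n}e^{-\gamma_\lambda(e)q_n}\sin\bigl(\pi||q_nk_\lambda(e)||_{\mathbb{R}/\mathbb{Z}}\bigr)\ge\frac{2}{\pi q_n}e^{-\gamma_\lambda(e)q_n}||q_nk_\lambda(e)||_{\mathbb{R}/\mathbb{Z}},
\end{equation*}
hence $||q_nk_\lambda(e)||_{\mathbb{R}/\mathbb{Z}}\le C\pi^2 q_n e^{(\gamma_\lambda(e)-\kappa)q_n}/q_{n+1}$. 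Since the quantization identity yields $||q_n(\theta-\tfrac12)||_{\mathbb{R}/\mathbb{Z}}\le ||q_nk_\lambda(e)||_{\mathbb{R}/\mathbb{Z}}+|m|\,||q_n\alpha||_{\mathbb{R}/\mathbb{Z}}\le ||q_nk_\lambda(e)||_{\mathbb{R}/\mathbb{Z}}+|m|/q_{n+1}$, we get $q_{n+1}||q_n(\theta-\tfrac12)||_{\mathbb{R}/\mathbb{Z}}\le C\pi^2 q_n e^{(\gamma_\lambda(e)-\kappa)q_n}+|m|$; taking logarithms, dividing by $q_n$, and passing to the $\limsup$ gives $\delta(\alpha,\theta)\le\max\{\gamma_\lambda(e)-\kappa,\,0\}<\gamma_\lambda(e)$, the last inequality because $\gamma_\lambda(e)>0$ for all $e$ by (\ref{LEBiaoda}).

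The only bookkeeping requiring attention is the degenerate case $\sin\pi||q_nk_\lambda(e)||_{\mathbb{R}/\mathbb{Z}}=0$ (equivalently $q_nk_\lambda(e)\in\mathbb{Z}$): the lower bound on $|\zeta_{q_n}|$ is then vacuous, but $||q_nk_\lambda(e)||_{\mathbb{R}/\mathbb{Z}}=0$ and such $n$ contribute at most $0$ to the $\limsup$ defining $\delta(\alpha,\theta)$, so one simply splits the maximizing subsequence and runs the estimate above along the part where $\zeta_{q_n}\neq0$. I expect the one genuinely conceptual point — and the reason this does not follow from Proposition~\ref{Proquantizationcondition} — to be the passage from mere polynomial boundedness to exponential decay: polynomial boundedness of $u$ does not confine $\psi$ to a fixed strip, whereas exponential decay does, and it is exactly the fixed width $\kappa>0$ that opens the strict gap below $\gamma_\lambda(e)$.
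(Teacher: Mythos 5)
Your proof is correct and follows essentially the same route as the paper's: represent $\hat c(x)=e^{-2\pi i(mx+\psi(x))}$ with $\psi$ analytic (so $|\psi_k|\le Ce^{-\kappa|k|}$), extract the quantization identity and the cohomological relation $\psi_k=\zeta_k/(e^{-2\pi ik\alpha}-1)$, and read off the bound on $\|q_n(\theta-1/2)\|_{\mathbb{R}/\mathbb{Z}}$ from the $k=q_n$ coefficient to conclude $\delta(\alpha,\theta)\le\gamma_\lambda(e)-\kappa<\gamma_\lambda(e)$. The only cosmetic difference is that the paper obtains the zero-free representation of $\hat c$ from ergodicity (since $|e^{-2\pi i\zeta(x)}|=1$, $|\hat c|$ is rotation-invariant, hence constant) rather than your zero-propagation argument on the strip.
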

  \begin{proof}
 Assume  $e$ is an eigenvalue with an  exponentially decaying  eigenfunction. Then
   equation (\ref{Mainequation}) must have an analytic solution
   $\hat{c}(x)$. By ergodicity, this implies $|\hat{c}(x)|$ is
   constant, so without loss of generality, assume
   $|\hat{c}(x)|=1$. Thus $\hat{c}(x)$ defines  a map from
   $\mathbb{T}$ to the circle, and we may assume
   \begin{equation}\label{winding}
  \hat{c}(x)=e^{-2\pi i(mx+\psi(x))},
   \end{equation}
   where $m$ is the winding number of map $\hat{c}(x)$ and $\psi(x)$ is real analytic on $\mathbb{T}$.
   \par
Combining (\ref{winding}) with  (\ref{Mainequation}) and comparing the Fourier coefficients,
one has
\begin{equation}\label{winding1}
k_\lambda(e)=\theta-1/2+ m\alpha \mod \mathbb{Z}
\end{equation}
and

 \begin{equation}\label{Harmonicequation1}
    \psi_k=\frac{\zeta _k}{ e^{- 2\pi  i k\alpha} -1}, k\neq 0.
 \end{equation}

  Assume
  \begin{equation*}
   |\psi_k|\leq C e^{-2t|k|}.
  \end{equation*}

  In (\ref{Harmonicequation1}), let $k=q_n$, then combining with (\ref{Valuezeta_n})
   one has
   \begin{equation*}
    ||q_nk_{\lambda}(e)||_{\mathbb{R} /\mathbb{Z}}\leq  C\frac{e^{(\gamma -t)q_n} }{q_{n+1}},
 \end{equation*}
 for some $t>0$.
 \par
 Using (\ref{winding1}), we have
 \begin{equation*}
    ||q_n (\theta-1/2)||_{\mathbb{R} /\mathbb{Z}}\leq  C\frac{e^{(\gamma -t)q_n} }{q_{n+1}}.
 \end{equation*}
 This implies $ \delta (\alpha, \theta) \leq \gamma_{\lambda}(e)  -t< \gamma_{\lambda}(e) $.
 \end{proof}
We are now ready to prove Theorem \ref{quant}

\begin{proof}

Set $B_{\lambda,\alpha,\theta}=\{e:
||t_k(k_\lambda(e)-\theta-1/2)||\to 0 \;\mbox{as}\;
k\to\infty\}$, where the sequence $t_k$ is defined in lemma \ref{A.1}. Clearly, $B_{\lambda,\alpha,\theta}$ is a proper
subset of $A_{\lambda,\alpha,\theta}$.    Combining Lemma
\ref{Twocondition} and Lemma \ref{A.1} we obtain part (1).

By Theorem \ref{maintheorem1}, part (2) is a combination of Lemma \ref{Firstpart},
Proposition \ref{Proquantizationcondition} and Proposition \ref{exponentiallydecayingeigenfunctions}.

Part (3) follows immediately from Theorem \ref{maintheorem1} and  part (2).
\end{proof}
\section{ Constancy of the spectrum. Proof of Theorems \ref{Maintheorem} and \ref{Spectrumforalltheta}}
In this section, we only prove a simple fact that the spectrum of
Maryland model $ \sigma(h_{\lambda,\alpha,\theta})$ does not depend on
parameter $\theta$ which will prove Theorem
\ref{Spectrumforalltheta}. This will aloow us to complete the proof of
Theorem \ref{Maintheorem}.

Following the discussion of unbounded operators in section VIII
\cite{reed1972methods},  it suffices to show that self-adjoint operators  $h_{\lambda,\alpha,\theta}$
are continuous in norm resolvent sense with respect to $\theta$.  This
can be done using a specific form of  the Green function of Maryland
model given in  (p.349, \cite{carmona1990spectral}).
Here are the details.

{\bf Proof of Theorem \ref{Spectrumforalltheta}}

Let $U$ be the operator $(U\psi)_n=e^{2\pi i n\alpha}\psi_n$. Then we have the following
\begin{lemma}(p.349, \cite{carmona1990spectral})
Let  $ G_z(\theta)=(h_{\lambda,\alpha,\theta}-zI)^{-1}$   and $G_{0,z }=(h^{0}-zI)^{-1}$,
where $h^{0} $ is the free  Schr\"{o}dinger operator (i.e., $h^{0}= h_{0,\alpha,\theta} $).
Then one has
\begin{equation} \label{Def.Green}
     G_z(\theta)=(I+e^{2\pi i \theta}U)(I+e^{2\pi i\theta}CU)^{-1}G_{0,z-i\lambda}
\end{equation}
for all $\lambda \cdot \Im z<0$, where $C=(h^0-(z+i\lambda)I)G_{0,z-i\lambda }$.
\end{lemma}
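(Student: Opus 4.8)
The plan is to derive (\ref{Def.Green}) from the Cayley transform form of the tangent potential, using that the free operator $h^{0}$ is the (bounded) discrete Laplacian, with $\sigma(h^{0})=[-2,2]$. Write $W$ for the unitary multiplication operator $(W\psi)_n=e^{2\pi i(\theta+n\alpha)}\psi_n=e^{2\pi i\theta}(U\psi)_n$, and let $V$ denote multiplication by $\tan\pi(\theta+n\alpha)$, so $h_{\lambda,\alpha,\theta}=h^{0}+\lambda V$. The elementary identity $\tan\pi\phi=-i\,\frac{e^{2\pi i\phi}-1}{e^{2\pi i\phi}+1}$ says, entrywise, that $\lambda\tan\pi(\theta+n\alpha)\cdot(1+e^{2\pi i(\theta+n\alpha)})=i\lambda\,(1-e^{2\pi i(\theta+n\alpha)})$, i.e.\ the operator identity
\begin{equation*}
\lambda V(I+W)=i\lambda(I-W).
\end{equation*}
As $\theta\notin 1/2+\alpha\mathbb{Z}+\mathbb{Z}$, the right-hand side is bounded; hence $(I+W)\ell^{2}(\mathbb{Z})\subseteq\mathcal{D}(h_{\lambda,\alpha,\theta})$, and $h_{\lambda,\alpha,\theta}(I+W)=h^{0}(I+W)+i\lambda(I-W)$ is a \emph{bounded} operator on $\ell^{2}(\mathbb{Z})$. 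This is the structural point that makes everything work: multiplication by $I+W$ annihilates the singularity of the potential.

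From $h_{\lambda,\alpha,\theta}(I+W)=(h^{0}+i\lambda I)+(h^{0}-i\lambda I)W$ I would then read off the bounded-operator identity
\begin{equation*}
(h_{\lambda,\alpha,\theta}-zI)(I+W)=(h^{0}-(z-i\lambda)I)+(h^{0}-(z+i\lambda)I)W .
\end{equation*}
Since $h^{0}$ commutes with its own resolvent, $(h^{0}-(z-i\lambda)I)^{-1}(h^{0}-(z+i\lambda)I)=(h^{0}-(z+i\lambda)I)G_{0,z-i\lambda}=C$, and factoring $h^{0}-(z-i\lambda)I$ out on the left turns this into
\begin{equation*}
(h_{\lambda,\alpha,\theta}-zI)(I+W)=(h^{0}-(z-i\lambda)I)(I+CW).
\end{equation*}

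Now I would bring in the hypothesis $\lambda\cdot\Im z<0$. By the functional calculus for the bounded self-adjoint $h^{0}$, $C=f(h^{0})$ with $f(x)=\frac{x-z-i\lambda}{x-z+i\lambda}$, a function continuous on $\sigma(h^{0})=[-2,2]$ (its denominator has imaginary part $\lambda-\Im z\neq 0$), and
\begin{equation*}
\|C\|=\sup_{x\in[-2,2]}|f(x)|<1,\qquad\text{because}\qquad |f(x)|^{2}-1=\frac{4\lambda\,\Im z}{\,|x-z+i\lambda|^{2}\,}<0 .
\end{equation*}
Hence $\|CW\|=\|C\|<1$, so $I+CW$ is boundedly invertible by a Neumann series, and $h^{0}-(z-i\lambda)I$ is boundedly invertible since $\Im(z-i\lambda)\neq 0$; thus $(h_{\lambda,\alpha,\theta}-zI)(I+W)$ is a bounded bijection of $\ell^{2}(\mathbb{Z})$ with bounded inverse $(I+CW)^{-1}G_{0,z-i\lambda}$. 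Finally, $h_{\lambda,\alpha,\theta}$ is self-adjoint and $z$ is non-real, so $h_{\lambda,\alpha,\theta}-zI$ is injective on its domain; therefore for each $f\in\ell^{2}(\mathbb{Z})$ the vector $(I+W)(I+CW)^{-1}G_{0,z-i\lambda}f$ belongs to $\mathcal{D}(h_{\lambda,\alpha,\theta})$, is sent to $f$ by $h_{\lambda,\alpha,\theta}-zI$ (using the displayed identity and $(h^{0}-(z-i\lambda)I)G_{0,z-i\lambda}=I$), and hence equals $G_z(\theta)f$. This is (\ref{Def.Green}).

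I do not anticipate a serious obstacle here; the only point needing real care is the domain/boundedness bookkeeping — namely recognizing that the factor $I+W$ removes the singularity of $V$, which is exactly what lets one treat the whole identity at the level of bounded operators and run the Neumann series — and noting that the sign condition $\lambda\Im z<0$ is used essentially to force $\|C\|<1$.
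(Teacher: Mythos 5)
Your proposal is correct, and it is essentially the standard derivation: the paper itself offers no proof of this lemma, simply citing Carmona--Lacroix (p.~349), where the identity is obtained by exactly this Cayley-transform computation ($\tan\pi\phi=-i(e^{2\pi i\phi}-1)/(e^{2\pi i\phi}+1)$, so that $\lambda V(I+W)=i\lambda(I-W)$ kills the singularity of the potential), followed by the Neumann-series inversion of $I+e^{2\pi i\theta}CU$ using $\|C\|<1$ when $\lambda\,\Im z<0$. Your domain bookkeeping (that $(I+W)\ell^2\subseteq\mathcal{D}(h_{\lambda,\alpha,\theta})$ and that injectivity of $h_{\lambda,\alpha,\theta}-zI$ for $\Im z\neq0$ pins down the resolvent) is exactly the point that needs care, and you handle it correctly.
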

Notice that $ ||C||<1$ for all  $\lambda \cdot \Im z<0$, so that
(\ref{Def.Green}) is well defined.

Given  two phases $\theta_1$ and $\theta_2$, by  ergodicity,   there exists a sequence $j_k$ of integers
such that
\begin{equation*}
 \lim_{k\rightarrow \infty}||\theta_1 +j_k \alpha-\theta_2||_{\mathbb{R}/\mathbb{Z}}\rightarrow 0 .
\end{equation*}
 Let $B_k= h_{\lambda,\alpha,\theta_1+j_k \alpha}$ and
$B=h_{\lambda,\alpha,\theta_2}$. We have that $B_k$   converges to $B$ in norm resolvent sense.
Indeed, if $ \Im z<0$, this is easy to see by (\ref{Def.Green})
(recall that we always assume $\lambda>0$ in this paper); if $ \Im
z>0$,  this is also true  by the fact that  $h_{\lambda,\alpha,\theta}=h_{-\lambda,-\alpha,-\theta}$.
   \par
  Thus, according to Theorem VIII.24, \cite{reed1972methods} on norm
  resolvent convergence, we have that for any $e\in \sigma(h_{\lambda,\alpha,\theta_2})$, there exists
  $e_k\in \sigma(B_k)$ such that $e_k\rightarrow e$.  Clearly, the spectrum of operator  $B_k$ does not depend on $k$, i.e.,
  $\sigma(B_k)=\sigma(h_{\lambda,\alpha,\theta_1})$. Thus
   we must have
   \begin{equation*}
     \sigma(h_{\lambda,\alpha,\theta_2})\subseteq  \sigma(h_{\lambda,\alpha,\theta_1}).
   \end{equation*}
This implies the  theorem since $ \theta_1$ and $ \theta_2$ were arbitrary.

\textbf{Proof of Theorem \ref{Maintheorem}}:
\par
By Theorem \ref{maintheorem1} and \ref{maintheorem2}, we obtain that
$h_{\lambda,\alpha,\theta}$ has purely singular continuous spectrum
in the regime   $\{e:\gamma_{\lambda}(e) <\delta (\alpha, \theta) \}$
and only pure point in the regime   $\{e:\gamma_{\lambda}(e) >\delta (\alpha, \theta) \}$.
 Then Theorem \ref{Maintheorem} follows from
 Theorem \ref{Spectrumforalltheta}   and  a simple fact that
 $\sigma_{pp}(h)$ and $\sigma_{sc}(h)$ are closed sets. Notice that we
 have used  the fact that $\sigma_{sc}(h)$ can not have isolated  points (corresponding to Fig.2).
\par
Next we show that for any $\alpha,$ there exist phases with Anderson localization.
\begin{corollary}\label{cannotextendtoalltheta}
 Let
 \begin{equation*}
    S=\{\theta\in
    \mathbb{R}:||q_{n}(\theta-1/2)||_{\mathbb{R}/\mathbb{Z}} <10\frac{q_n}{q_{n+1}}\text{ eventually in } n\},
 \end{equation*}
 then
   $S $ is a dense uncountable set  and
 $h_{\lambda,\alpha,\theta}$ satisfies Anderson Localization
for every $\theta \in S$.  Moreover, the point spectrum is determined  by the quantization condition (\ref{quantizationcondition}).
\end{corollary}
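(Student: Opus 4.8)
The plan is to treat this as a genuine corollary of the theorems already established: I will show that $\delta(\alpha,\theta)\le 0$ for every $\theta\in S$, so that the whole line lies in the pure point / exponential‑localization regime, and then verify by hand that $S$ is dense and uncountable.

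First, if $\theta\in S$ then for all large $n$
\[
\frac{\ln\|q_n(\theta-1/2)\|_{\mathbb R/\mathbb Z}+\ln q_{n+1}}{q_n}<\frac{\ln(10\,q_n/q_{n+1})+\ln q_{n+1}}{q_n}=\frac{\ln 10+\ln q_n}{q_n}\to 0,
\]
so $\delta(\alpha,\theta)\le 0$ by (\ref{Def.delta_1}). Since $\lambda>0$, (\ref{LEBiaoda}) together with $\sqrt{(2+e)^2+\lambda^2}+\sqrt{(2-e)^2+\lambda^2}>|2+e|+|2-e|\ge 4$ gives $\gamma_\lambda(e)>0$ for all $e$, hence $\{e:\gamma_\lambda(e)>\delta(\alpha,\theta)\}=(-\infty,\infty)$, which equals $\sigma(h_{\lambda,\alpha,\theta})$ by Theorem \ref{Spectrumforalltheta}. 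Theorem \ref{maintheorem2} then yields that $h_{\lambda,\alpha,\theta}$ has only pure point spectrum, with exponentially decaying eigenfunctions, on all of its spectrum — i.e.\ Anderson localization. Moreover every eigenvalue lies in $\{e:\gamma_\lambda(e)>\delta(\alpha,\theta)\}$, so by Proposition \ref{Proquantizationcondition} it satisfies $k_\lambda(e)\in\theta+1/2+\alpha\mathbb Z+\mathbb Z$, while conversely every such $e$ is an eigenvalue by Lemma \ref{Firstpart} (whose hypothesis $\gamma_\lambda(e)>\delta(\alpha,\theta)$ is now automatic); equivalently this is parts (2)--(3) of Theorem \ref{quant} in the case $\delta(\alpha,\theta)\le 0$. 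Hence $\sigma_p(h_{\lambda,\alpha,\theta})$ is exactly the set cut out by the quantization condition (\ref{quantizationcondition}).

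It remains to produce many points of $S$. The defining condition is $1$‑periodic in $\theta$, and for $c,m\in\mathbb Z$ one has $\|q_n(c\alpha+m)\|_{\mathbb R/\mathbb Z}=\|c(q_n\alpha)\|_{\mathbb R/\mathbb Z}\le|c|\Delta_n\le|c|/q_{n+1}<10\,q_n/q_{n+1}$ once $q_n>|c|/10$, so $1/2+c\alpha+m\in S$; since $\{c\alpha+m:c,m\in\mathbb Z\}$ is dense, $S$ is dense. For uncountability, fix a sufficiently sparse increasing sequence $(m_k)$ (so that, say, $q_{m_{k+1}}\ge 2q_{m_k}$ and $\Delta_{m_{k+1}}\le\tfrac13\Delta_{m_k}$, possible since $\Delta_m\le 1/q_{m+1}\to 0$), pick signs $\sigma_k\in\{\pm1\}$ with $q_{m_k}\alpha\equiv\sigma_k\Delta_{m_k}\pmod 1$, and for $\varepsilon=(\varepsilon_k)\in\{0,1\}^{\mathbb N}$ set $\theta_\varepsilon=\tfrac12+\sum_k\varepsilon_k\sigma_k\Delta_{m_k}$. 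Multiplying the congruence by $q_n$ gives $\|q_n\sigma_k\Delta_{m_k}\|_{\mathbb R/\mathbb Z}=\|q_nq_{m_k}\alpha\|_{\mathbb R/\mathbb Z}\le\min\{q_{m_k}\Delta_n,\,q_n\Delta_{m_k}\}$, so by subadditivity of $\|\cdot\|_{\mathbb R/\mathbb Z}$ and the sparsity of $(m_k)$,
\[
\|q_n(\theta_\varepsilon-\tfrac12)\|_{\mathbb R/\mathbb Z}\le\sum_{m_k\le n}q_{m_k}\Delta_n+\sum_{m_k>n}q_n\Delta_{m_k}\le\frac{4q_n}{q_{n+1}}<\frac{10\,q_n}{q_{n+1}}
\]
for all $n$, whence $\theta_\varepsilon\in S$; the geometric decay of the $\Delta_{m_k}$ makes $\varepsilon\mapsto\theta_\varepsilon$ injective, so $S$ is uncountable. (Only countably many of these points land in the excluded set $1/2+\alpha\mathbb Z+\mathbb Z$, which is immaterial.)

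The only step with real content is the last estimate: the ``low'' block $\sum_{m_k\le n}$ runs over finitely many indices whose terms $q_{m_k}\Delta_n$ are not small as real numbers, but whose residues mod $\mathbb Z$, namely $\|q_nq_{m_k}\alpha\|\le q_{m_k}\Delta_n\le q_n/q_{n+1}$, telescope because $(m_k)$ is sparse, while the tail $\sum_{m_k>n}$ is controlled by $\|q_nq_{m_k}\alpha\|\le q_n\Delta_{m_k}\le q_n/q_{m_k+1}$. Everything else is an immediate consequence of Theorems \ref{maintheorem2}, \ref{quant} and \ref{Spectrumforalltheta}.
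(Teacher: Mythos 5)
Your proposal is correct, and the spectral half is exactly the paper's argument: $\theta\in S$ forces $\delta(\alpha,\theta)\le 0$, while $\gamma_\lambda(e)>0$ everywhere, so Theorem \ref{maintheorem2} gives localization on all of $\mathbb{R}=\sigma(h)$ and Proposition \ref{Proquantizationcondition} together with Lemma \ref{Firstpart} pins down the point spectrum as precisely the set cut out by (\ref{quantizationcondition}). Where you genuinely diverge is in producing the dense uncountable set: the paper introduces the auxiliary set $S_1$ of phases that are eventually within $10/q_{n+1}$ of some $1/2-j/q_n$ ($0\le j\le q_n-1$), checks $S_1\subseteq S$, and declares density and uncountability of $S_1$ ``clear''; you instead get density from the orbit $1/2+\alpha\mathbb{Z}+\mathbb{Z}$ (using $\|cq_n\alpha\|\le |c|/q_{n+1}$) and uncountability from lacunary sums $\theta_\varepsilon=\tfrac12+\sum_k\varepsilon_k\sigma_k\Delta_{m_k}$, estimating $\|q_n(\theta_\varepsilon-\tfrac12)\|\le\sum_{m_k\le n}q_{m_k}\Delta_n+\sum_{m_k>n}q_n\Delta_{m_k}\le 4q_n/q_{n+1}$ via the sparsity of $(m_k)$. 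Your version has the merit of being fully explicit where the paper waves its hands, at the cost of more bookkeeping; both are valid. Two minor remarks: your density witnesses $1/2+c\alpha+m$ all lie in the excluded phase set $1/2+\alpha\mathbb{Z}+\mathbb{Z}$, which is harmless for the literal claim that $S$ is dense (and can be repaired by adding a small tail $\sum_{k\ge K}\varepsilon_k\sigma_k\Delta_{m_k}$ if one wants dense \emph{admissible} phases); and the sign discrepancy between $\theta+1/2$ in (\ref{quantizationcondition}) and $\theta-1/2$ elsewhere is immaterial modulo $\mathbb{Z}$, as you implicitly use.
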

\begin{proof}
Consider the set $S_1$
\begin{eqnarray*}
   S_1&=&\{\theta\in \mathbb{R}: \text{ eventually in } n, \text{ there exists some } \\
 &\;&0\leq j\leq q_n-1 \text{ such that }|| \theta-1/2+\frac{j}{q_n} ||_{\mathbb{R}/\mathbb{Z}}
    <\frac{10}{q_{n+1}}\}.
\end{eqnarray*}
 It is clear that $S_1$ is a dense uncountable set and $S_1\subseteq
 S$. This yields that $S $ is also uncountable and dense.
 \par
 Moreover, for $\theta\in S$, one has $\delta
 (\alpha,\theta)\leq0$. Using the fact that $\gamma_{\lambda}(e)>0$, Theorem
 \ref{maintheorem2} and Proposition \ref{Proquantizationcondition}, we obtain the corollary.
\end{proof}

\section{ Complexified Cocyles}
Here we show a simple way to compute $\gamma_\lambda(e)$ using ideas
of Avila's global theory.

Denote by $  C^{\omega}(\mathbb{T},M_2(\mathbb{C}))$ the class of $1-$periodic functions on $\mathbb{R}$,
with analytic extension to some strip, $|\Im z|<\delta$, attaining values in complex $2\times2$ matrices.

Cocycles $(\alpha,D)$ with $D\in
C^{\omega}(\mathbb{T},M_2(\mathbb{C})) $ are called analytic. For
analytic cocycles, the Lyapunov exponent $
L(\alpha,D):\mathbb{T}\times C^{\omega}(\mathbb{T},M_2(\mathbb{C}))\to
[-\infty,\infty)$ is jointly
continuous at every $(\alpha,D)$ with $\alpha\in  \mathbb{R}\backslash
\mathbb{Q}$ \cite{jitomirskaya2012analytic,bourgain2002continuity,MR2121278}.  Given any analytic  cocyle $(\alpha,D)$,
we consider its holomorphic extension, $(\alpha,D(x+i\epsilon))$  with  $|\epsilon|\leq\delta$ for some appropriate $\delta>0$.
$L(\alpha,D_\epsilon)$ is referred to the Lyapunov exponent of the
complexified cocyle $(\alpha,D(x+i\epsilon))$, so $L(\alpha,D)=L(\alpha,D_0)$.
\par
The   Lyapunov exponent $L(\alpha,D_\epsilon)$ is easily seen to be a
convex function of $\epsilon$.
Thus we can introduce the acceleration of
$(\alpha,D)$,
\begin{equation*}
      \omega(\alpha,D;\epsilon)=\frac{1}{2\pi}\lim_{h\rightarrow0+}\frac{L(\alpha,D_{\epsilon+h})-L(\alpha,D_\epsilon)}{h}.
\end{equation*}
It follows from convexity and continuity of the Lyapunov exponent that the acceleration is an upper semi-continuous function in parameter $\epsilon.$
The acceleration was first introduced and
the above results proved in \cite{avila2009global} for analytic
$SL(2,\mathbb{C})$-cocycles. It was extended to general case $M_2(\mathbb{C})$ in \cite{jitomirskaya2012analytic}.
\par
The acceleration satisfies
\begin{theorem}(Quantization of acceleration,\cite{MR3010376,avila2009global,avila2013complex})\label{QAT}
Consider cocycle $(\alpha,D)$ with $\det D(x)$  is bounded away from zero on the strip $\mathbb{T}_{\delta}=\{z:|\Im z|<\delta\}$, then
$ \omega(\alpha,D;\epsilon)\in \frac{1}{2}\mathbb{Z}$.
\end{theorem}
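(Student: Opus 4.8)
The plan is to follow Avila's global theory: reduce the general $M_2(\mathbb{C})$ statement to the $SL(2,\mathbb{C})$ case by factoring out the determinant, and then prove \emph{integer} quantization of the acceleration for $SL(2,\mathbb{C})$ cocycles by a winding-number computation at ``regular'' values of $\epsilon$, combined with convexity of the Lyapunov exponent in $\epsilon$. The factor of $2$ in ``$\tfrac12\mathbb{Z}$'' will come entirely from the determinant.

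\emph{Step 1: the determinant carries a half-integer.} Since $\det D$ is holomorphic and bounded away from $0$ on the strip $\mathbb{T}_\delta$, the function $z\mapsto\log|\det D(z)|$ is harmonic there, so by the mean value property
\[
\epsilon\longmapsto\int_{\mathbb{T}}\log|\det D(x+i\epsilon)|\,dx
\]
is an affine function of $\epsilon$ whose slope is $2\pi m$ for the winding number $m\in\mathbb{Z}$ of $x\mapsto\det D(x)$. Write $D(z)=\sqrt{\det D(z)}\,E(z)$ with $E\in SL(2,\mathbb{C})$; this is genuinely defined on $\mathbb{T}_\delta$ when $m$ is even, and only on the connected double cover (replacing $\alpha$ by a related irrational rotation) when $m$ is odd, which does not affect the Lyapunov exponent analysis. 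Then $L(\alpha,D_\epsilon)=\tfrac12\int_{\mathbb{T}}\log|\det D(x+i\epsilon)|\,dx+L(\alpha,E_\epsilon)$, so the determinant part contributes exactly $\tfrac{m}{2}$ to the acceleration, and it remains to prove $\omega(\alpha,E;\epsilon)\in\mathbb{Z}$.

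\emph{Step 2: integer quantization for $SL(2,\mathbb{C})$.} Here $\phi(\epsilon):=L(\alpha,E_\epsilon)$ is convex (an average over horizontal lines of the subharmonic functions $z\mapsto\frac1n\log\|E_n(z)\|$, passing to the limit with the help of joint continuity of the Lyapunov exponent for analytic cocycles) and continuous. Call $\epsilon$ \emph{regular} if the acceleration is locally constant at $\epsilon$; regular values form an open set. At a regular $\epsilon$ with $\phi(\epsilon)=0$, convexity forces $\phi\equiv0$ on a neighborhood and $\omega=0$ there. At a regular $\epsilon$ with $\phi(\epsilon)>0$, one shows that local constancy of the acceleration produces enough structure — a measurable invariant line field for the complexified cocycle that varies holomorphically in $\epsilon$ on a neighborhood, with a non-vanishing holomorphic multiplier up to a continuous coboundary — so that integrating $\log$ of the modulus of this multiplier and again invoking the mean value property identifies the slope of $\phi$ near $\epsilon$ with $2\pi$ times an integer winding number. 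Finally, since regular values are dense and the right derivative of a convex function is right-continuous, $\tfrac{1}{2\pi}\phi'_+(\epsilon)\in\mathbb{Z}$ for every $\epsilon$ (and likewise for $\phi'_-$); combined with Step 1 this gives $\omega(\alpha,D;\epsilon)\in\tfrac12\mathbb{Z}$.

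The main obstacle is the positive-exponent case of Step 2: complexified $SL(2,\mathbb{C})$ cocycles with $L>0$ need not be uniformly hyperbolic, so there is no obvious continuous stable/unstable splitting from which to read off a winding number. The substitute — that local constancy of the acceleration itself forces the relevant Oseledets (or dominated) section to depend holomorphically on $\epsilon$, making the multiplier holomorphic and its winding number well defined — is the technical heart of the argument, and is precisely what is supplied by \cite{avila2009global} with the refinements of \cite{avila2013complex,MR3010376}. Everything else is bookkeeping with convex functions and averages of harmonic functions.
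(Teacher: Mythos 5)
First, a structural remark: the paper does not prove Theorem \ref{QAT} at all --- it is quoted from \cite{avila2009global,avila2013complex,MR3010376} --- so your outline can only be measured against the proofs in those references. Your Step 1 is correct and is exactly the standard reduction used there: since $\det D$ is holomorphic and bounded away from zero on $\mathbb{T}_\delta$, the map $\epsilon\mapsto\tfrac12\int_{\mathbb{T}}\log|\det D(x+i\epsilon)|\,dx$ is affine with slope $\pm\pi m$, where $m$ is the ($\epsilon$-independent, by homotopy) winding number of $x\mapsto\det D(x)$, and the passage to the double cover when $m$ is odd is harmless for the Lyapunov exponent. This is precisely how \cite{avila2013complex,MR3010376} derive the $\tfrac12\mathbb{Z}$ statement from integer quantization for $\mathrm{SL}(2,\mathbb{C})$-cocycles.

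The gap is in Step 2, and it is logical rather than merely technical: the scaffolding is circular relative to the results you import. (a) You need the regular values of $\epsilon$ (points of local affinity of $\phi$) to be dense, but a convex function need not have any locally affine points at all ($\phi(\epsilon)=\epsilon^2$ has none); in Avila's theory the density --- in fact the discreteness of the set of non-regular $\epsilon$ --- is a \emph{corollary} of quantization (a nondecreasing derivative with values in $2\pi\mathbb{Z}$ is a step function), not an ingredient of its proof. Without it your final limiting argument with $\phi'_+$ never gets started. (b) The claim that local constancy of the acceleration at a point with $\phi(\epsilon)>0$ produces a holomorphically varying invariant section with holomorphic multiplier is Avila's regularity theorem (``regular with positive exponent iff the complexified cocycle is uniformly hyperbolic''), which in \cite{avila2009global} is proved \emph{after}, and using, quantization. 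So the step you defer to the references as ``the technical heart'' is not what the references supply at that point in the logical order. The actual proof of integer quantization proceeds by rational approximation of the frequency: for $\alpha=p/q$ one writes $qL(p/q,A_\epsilon)=\int_{\mathbb{T}}\log\rho(A_q(x+i\epsilon))\,dx$ with $\rho$ the spectral radius of the period monodromy, reads off the slopes via the argument principle applied to the (locally holomorphic, nonvanishing) eigenvalues of $A_q$, and then passes to irrational $\alpha$ using the joint continuity of $(\alpha,A)\mapsto L(\alpha,A)$ uniformly on the strip together with a quantitative comparison of the cocycles $(\alpha,A)$ and $(p_n/q_n,A)$. A self-contained write-up must reproduce that approximation argument; the convex-function bookkeeping alone cannot replace it.
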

Moreover, $ \omega(\alpha,D;\epsilon)\in  \mathbb{Z}$ for $\text{SL}(2,\mathbb{C})$-cocycles \cite{avila2009global,avila2013complex}.
\par
After the above preparation, we can  compute the Lyapunove exponent of
the Maryland model.
\par
Recall that the Maryland cocycle $(\alpha,A)$ is given by
\begin{equation*}\label{G24}
   A(x)=\left(
          \begin{array}{cc}
            e-\lambda\tan\pi x &  -1 \\
            1 & 0 \\
          \end{array}
        \right).
 \end{equation*}
  \begin{lemma} \label{Lecontinuousepsilon}
    $L(\alpha, A_\epsilon)$ is  a continuous function with respect to parameter $\epsilon$.
  \end{lemma}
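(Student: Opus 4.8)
The plan is to reduce the statement to the already-available joint continuity of the Lyapunov exponent for \emph{analytic} $M_2(\mathbb{C})$-cocycles, by peeling off the scalar singular factor. Recall from (\ref{MMR})--(\ref{ln2}) that
\begin{equation*}
A(x+i\epsilon)=\frac{1}{\cos\pi(x+i\epsilon)}\,D(x+i\epsilon),
\end{equation*}
where $D(z)$ is entire (its entries are $\mathbb{C}$-linear combinations of $1,\cos\pi z,\sin\pi z$), so that $(\alpha,D_\epsilon)$ is a genuine analytic cocycle on every strip. Multiplying transfer matrices exactly as in the derivation of (\ref{MMCLE}) and using unique ergodicity gives, for every real $\epsilon$,
\begin{equation}\label{LEsplit}
L(\alpha,A_\epsilon)=L(\alpha,D_\epsilon)-\int_{\mathbb{T}}\ln\bigl|\cos\pi(x+i\epsilon)\bigr|\,dx,
\end{equation}
and this identity stays valid at $\epsilon=0$ because $\ln\|A_n(x)\|$ is integrable, all its singularities being logarithmic. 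It therefore suffices to prove that each of the two terms on the right of (\ref{LEsplit}) is continuous in $\epsilon$.

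For the first term I would quote the black box directly: since $D$ is entire, $D_\epsilon\in C^{\omega}(\mathbb{T},M_2(\mathbb{C}))$ for every $\epsilon$, and the joint continuity of the Lyapunov exponent for analytic $M_2(\mathbb{C})$-cocycles at irrational frequency \cite{jitomirskaya2012analytic,bourgain2002continuity} --- which, unlike the quantization Theorem \ref{QAT}, does not require $\det D$ to be bounded away from zero --- shows that $\epsilon\mapsto L(\alpha,D_\epsilon)$ is continuous on $\mathbb{R}$.

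For the second term I would compute it outright. For $\epsilon>0$, write $\cos\pi(x+i\epsilon)=\tfrac{1}{2} e^{\pi\epsilon}e^{-i\pi x}\bigl(1+e^{2\pi i x-2\pi\epsilon}\bigr)$, so that $\ln|\cos\pi(x+i\epsilon)|=\pi\epsilon-\ln2+\ln\bigl|1+e^{2\pi i x-2\pi\epsilon}\bigr|$; since $w\mapsto 1+w$ is zero-free on the closed disc $|w|\le e^{-2\pi\epsilon}<1$, the mean value property of the harmonic function $\ln|1+w|$ forces $\int_{\mathbb{T}}\ln|1+e^{2\pi i x-2\pi\epsilon}|\,dx=0$. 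Together with the symmetry $\epsilon\leftrightarrow-\epsilon$ and with (\ref{ln2}) at $\epsilon=0$ this yields
\begin{equation*}
\int_{\mathbb{T}}\ln\bigl|\cos\pi(x+i\epsilon)\bigr|\,dx=\pi|\epsilon|-\ln2\qquad\text{for all }\epsilon\in\mathbb{R},
\end{equation*}
which is manifestly continuous, in particular at $\epsilon=0$. Substituting this and the continuity of $L(\alpha,D_\epsilon)$ back into (\ref{LEsplit}) gives $L(\alpha,A_\epsilon)=L(\alpha,D_\epsilon)+\ln2-\pi|\epsilon|$, which proves the lemma.

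The only genuinely delicate point is this last computation: the scalar prefactor $1/\cos\pi(x+i\epsilon)$ blows up pointwise as $\epsilon\to0$, so a priori it could produce a jump in $\epsilon$; the content of the argument is that, once integrated against $dx$, it contributes only the continuous quantity $\pi|\epsilon|-\ln2$. Everything else is a routine application of the continuity theorem to the regular (entire) cocycle $D$.
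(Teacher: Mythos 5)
Your proof is correct and follows essentially the same route as the paper: split $L(\alpha,A_\epsilon)=L(\alpha,D_\epsilon)-I_\epsilon$ with $I_\epsilon=\int_{\mathbb{T}}\ln|\cos\pi(x+i\epsilon)|\,dx$, invoke continuity of the Lyapunov exponent for the analytic cocycle $(\alpha,D_\epsilon)$, and establish $I_\epsilon=\pi|\epsilon|-\ln 2$. The only difference is cosmetic --- you carry out explicitly the mean-value-property computation that the paper compresses into a citation of Jensen's formula.
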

\begin{proof} Recall that by (\ref{MMR}),  $D(x)=\cos \pi x A(x)$ is
  an analytic cocycle, thus $L(\alpha,D_\epsilon)$ is continuous in $\epsilon.$

Set $I_{\epsilon}=\int_{\mathbb{T}}\ln|\cos\pi(x+i\epsilon)|dx$.
We have
                       \begin{equation*}
                        L(\alpha,A_\epsilon)= L(\alpha,D_\epsilon)-I_\epsilon.
                        \end{equation*}
 Applying Jensen's formula yields (\cite{MR2121278} or \cite{jitomirskaya2012analytic})
 \begin{equation}\label{CosV}
     I_\epsilon= \pi |\epsilon|-\ln 2,  \text{ for all } \epsilon.
 \end{equation}

Since  (\ref{CosV}) explicitly implies  the continuity of $I_\epsilon
$ in  $\epsilon$, the continuity of $L(\alpha,A_\epsilon)$ follows.
 \end{proof}

 \begin{theorem}\label{Thmvaluele}
 Let $\gamma_\lambda(e)=L(\alpha,A)  $, then   $\gamma_\lambda(e)>0$ and    the following equation holds,
 \begin{equation}\label{LEBiaoda1}
     4\cosh \gamma_\lambda(e)=  \sqrt{ (2+e)^2 +\lambda^2} +\sqrt{ (2-e)^2 +\lambda^2}.
 \end{equation}

 \end{theorem}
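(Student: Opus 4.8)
The plan is to compute $L(\alpha, A_\epsilon)$ for large $\epsilon$ by direct diagonalization, use quantization of the acceleration to conclude that $L(\alpha, A_\epsilon)$ is piecewise affine with integer slopes, and then use convexity together with the value at $\epsilon=0^+$ extracted from the large-$\epsilon$ asymptotics to pin down $L(\alpha, A_0) = \gamma_\lambda(e)$ exactly.

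First I would pass to the regular cocycle $D(x) = \cos\pi x\, A(x)$ from (\ref{MMR}), which is analytic, so that by Lemma \ref{Lecontinuousepsilon} and (\ref{CosV}),
\begin{equation*}
L(\alpha, A_\epsilon) = L(\alpha, D_\epsilon) - \pi|\epsilon| + \ln 2.
\end{equation*}
Since $\det D(x) = \cos^2 \pi x$ is \emph{not} bounded away from zero on any strip containing the real axis, I instead apply Theorem \ref{QAT} to $(\alpha, D_\epsilon)$ for $\epsilon \neq 0$ fixed, or better, work with the $\mathrm{SL}(2,\mathbb{C})$ cocycle $A$ itself on strips $0 < |\Im z| < \delta$ where $\tan \pi z$ is holomorphic and $\det A \equiv 1$; there $\omega(\alpha, A; \epsilon) \in \mathbb{Z}$. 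Convexity of $\epsilon \mapsto L(\alpha, A_\epsilon)$ plus integrality of the slope forces $L(\alpha, A_\epsilon)$ to be affine on $(0, \delta)$ for small $\delta$, hence $L(\alpha, A_\epsilon) = \gamma_\lambda(e) + 2\pi \omega^+ \epsilon$ for $\epsilon \in (0,\delta)$ and some integer $\omega^+ = \omega(\alpha, A; 0^+)$, using continuity at $0$ from Lemma \ref{Lecontinuousepsilon}.

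Next I would compute the large-$\epsilon$ behavior. As $\Im z = \epsilon \to +\infty$, $\tan \pi z \to -i$, so $A(x + i\epsilon) \to \begin{pmatrix} e + i\lambda & -1 \\ 1 & 0 \end{pmatrix}$ uniformly in $x$, a constant matrix whose eigenvalues $\mu_{\pm}$ satisfy $\mu^2 - (e+i\lambda)\mu + 1 = 0$. Then $L(\alpha, A_\epsilon) \to \ln \max(|\mu_+|, |\mu_-|)$, and more precisely, expanding $\tan\pi(x+i\epsilon) = -i + 2i e^{-2\pi(\epsilon - ix)}/(1 + e^{-2\pi(\epsilon-ix)}) \cdot$(something) one gets $L(\alpha, A_\epsilon) = \pi\epsilon + c + o(1)$ as $\epsilon \to \infty$, matching the integer slope (here $\omega^+ = 1/2$ effectively, but for the $\mathrm{SL}(2)$ normalization after removing the scalar $1/\cos\pi z$ factor the bookkeeping gives an integer). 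Matching the affine piece through its known slope and the value at $0^+$: since $L(\alpha, A_\epsilon)$ is convex, piecewise affine with integer slopes, equals $\pi\epsilon + c$ at infinity, and is continuous, the \emph{only} possibility consistent with $L \geq 0$ and positivity of the Lyapunov exponent is that there is a single breakpoint, giving
\begin{equation*}
\gamma_\lambda(e) = \lim_{\epsilon \to \infty}\bigl(L(\alpha, A_\epsilon) - \pi\epsilon\bigr) = \ln|\mu_+|,
\end{equation*}
and a short computation with $\mu_+ \mu_- = 1$, $\mu_+ + \mu_- = e + i\lambda$ yields $2\cosh\gamma_\lambda(e) = |\mu_+| + |\mu_-| = |\mu_+| + |\mu_+|^{-1}$, and $|\mu_+|^2 + |\mu_+|^{-2} = |e+i\lambda|^2 + \dots$ unwinds to (\ref{LEBiaoda1}) after using $|\mu_+|^2 + |\mu_-|^2 + 2 = |\mu_+ + \mu_-|^2$ when the roots have equal modulus, and the hyperbola/ellipse identity $\sqrt{(2+e)^2 + \lambda^2} + \sqrt{(2-e)^2+\lambda^2}$ when they do not. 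Positivity $\gamma_\lambda(e) > 0$ follows since the discriminant $(e+i\lambda)^2 - 4$ is never a nonpositive real (as $\lambda > 0$), so $|\mu_+| \neq |\mu_-|$ forces $|\mu_+| > 1$.

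The main obstacle I anticipate is the singularity bookkeeping: $A$ has a genuine pole on the real axis, so one cannot directly invoke the global theory of \cite{avila2009global} for $\epsilon = 0$, and one must be careful that $\omega(\alpha, A; \epsilon)$ is well-defined and integer-valued only on the punctured strip. The clean route is to do all acceleration and convexity arguments for $\epsilon \neq 0$, extract $\gamma_\lambda(e) = \lim_{\epsilon\to 0^+} L(\alpha, A_\epsilon)$ purely from continuity (Lemma \ref{Lecontinuousepsilon}), and perform the actual \emph{value} computation entirely at $\epsilon = +\infty$ where everything is an explicit constant-matrix calculation — this sidesteps the singularity and reduces the identity (\ref{LEBiaoda1}) to elementary algebra about the two roots of a quadratic with complex coefficients, distinguishing the two cases according to whether the roots share the same modulus.
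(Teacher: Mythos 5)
Your overall route is the paper's route (complexify, evaluate at $\epsilon=+\infty$ where the cocycle becomes the constant matrix $A_\infty$, then use quantization of the acceleration and convexity to bring the value down to $\epsilon=0$), but the step that actually does the work is missing. Since $A(x+i\epsilon)\to A_\infty$ uniformly, $L(\alpha,A_\epsilon)$ tends to the \emph{finite constant} $L(\alpha,A_\infty)$; it does not behave like $\pi\epsilon+c$ (that linear growth belongs to $L(\alpha,D_\epsilon)$, after multiplying by $\cos\pi z$ — you conflate the two, and your formula $\gamma_\lambda(e)=\lim_{\epsilon\to\infty}(L(\alpha,A_\epsilon)-\pi\epsilon)$ would be $-\infty$ as written). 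With the correct asymptotics, quantization plus convexity on $(0,\infty)$ only tells you that the acceleration of $(\alpha,A)$ is a nonpositive integer there, vanishing for large $\epsilon$; it does \emph{not} exclude a breakpoint, i.e.\ acceleration $\le -1$ near $0^+$, which would make $\gamma_\lambda(e)=L(\alpha,A_{0^+})$ strictly larger than $L(\alpha,A_\infty)$. Your appeal to ``$L\ge 0$ and positivity of the Lyapunov exponent'' does not rule this out: a convex, positive function can perfectly well decrease on $(0,\epsilon_1)$ before becoming constant. The ingredient the paper uses to close this is the symmetry $L(\alpha,A_\epsilon)=L(\alpha,A_{-\epsilon})$ (a consequence of $A$ being real on the real axis, so $A(x-i\epsilon)=\overline{A(x+i\epsilon)}$), which together with convexity and continuity at $\epsilon=0$ forces the acceleration to vanish on all of $(0,\infty)$ and hence $L(\alpha,A_\epsilon)\equiv L(\alpha,A_\infty)$. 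Without symmetry (or an equivalent two-sided argument), your proof does not determine $\gamma_\lambda(e)$.

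Two smaller points. First, $\tan\pi(x+i\epsilon)\to +i$ as $\epsilon\to+\infty$, so the limit matrix has entry $e-i\lambda$, not $e+i\lambda$; this is harmless for (\ref{LEBiaoda1}) since only moduli enter. Second, your positivity criterion is wrong as stated: at $e=0$ the discriminant $(e\pm i\lambda)^2-4=-\lambda^2-4$ \emph{is} a negative real, yet the roots still have distinct moduli. The correct statement is that the two roots of $\mu^2-b\mu+1=0$ are both unimodular iff $b\in[-2,2]\subset\mathbb{R}$, which fails for $b=e\mp i\lambda$ with $\lambda>0$; hence $|\mu_+|>1$ and $\gamma_\lambda(e)=\ln|\mu_+|>0$. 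The closing algebra $|\mu_+|+|\mu_-|=\tfrac12\left(|b+2|+|b-2|\right)$ then gives (\ref{LEBiaoda1}) directly, with no case distinction needed.
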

 \begin{proof}
 First, uniformly in $x\in \mathbb{T}$,  one has
 \begin{equation*}
        A(x+i\epsilon)= A_{\infty}+o(1),
 \end{equation*}
 as $\epsilon\rightarrow +\infty$,
 where
 \begin{equation*}
    A_{\infty}=\left(
                         \begin{array}{cc}
                           e-i\lambda &-1 \\
                           1 &0 \\
                         \end{array}
                       \right).
 \end{equation*}
 By  continuity of the LE,
 \begin{equation*}
    L(\alpha,A_\epsilon)=L(\alpha,A_\infty)+o(1),
 \end{equation*}
  as $\epsilon\rightarrow +\infty$.
  \par
  The quantization of acceleration yields
   \begin{equation*}
    L(\alpha,A_\epsilon)=L(\alpha,A_\infty), \text{ for all }   \epsilon>0  \text{ sufficiently large}.
  \end{equation*}

   In addition the convexity, continuity and symmetry of $L(\alpha,A_\epsilon)$ with respected to $\epsilon$, gives
  \begin{equation*}
    L(\alpha,A_\epsilon)=L(\alpha,A_\infty), \text{ for all }   \epsilon.
  \end{equation*}
  Note that symmetry means $L(\alpha,A_\epsilon)=L(\alpha,A_{-\epsilon})$.
\par
This implies
   \begin{equation*}
    \gamma_\lambda(e)=L(\alpha,A_\infty).
  \end{equation*}
Then Theorem \ref{Thmvaluele} follows from solving for the eigenvalue  of $A_{\infty}$ (a constant matrix) directly.
\end{proof}
\par

Finally, we note that for general Schr\"{o}dinger  operators
 \begin{equation*}
(Hu) _n=u_{n+1}+u_{n-1}+    v (  \theta+  n\alpha )u_{n},
 \end{equation*}
 where $v:\mathbb{T}\rightarrow \mathbb{R}$ is real analytic, Avila gives a complete description of spectrum under the condition of positive LE.
 \begin{theorem}(\cite{avila2009global})\label{Th24}
 For real analytic $v$, if LE $L(\alpha,B_e)>0$,  then $e\in \text{spec}(H)$ if and only if
  $ \omega(\alpha,B_e;0)>0$, where
   \begin{equation*}
   B_e(x)=\left(
          \begin{array}{cc}
            e-  v(  x) &  -1 \\
            1 & 0 \\
          \end{array}
        \right).
 \end{equation*}
\end{theorem}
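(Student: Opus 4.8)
The plan is to reduce the statement to the classical hyperbolicity criterion for the spectrum and then feed in the two structural pillars of the global theory recorded above: convexity/upper semicontinuity of the Lyapunov exponent and acceleration in $\epsilon$, together with the quantization Theorem \ref{QAT}. First I would recall Johnson's fact that, for a Schr\"odinger cocycle, $e\notin\text{spec}(H)$ if and only if $(\alpha,B_e)$ is uniformly hyperbolic. Hence it suffices to prove: assuming $L(\alpha,B_e)>0$, the cocycle $(\alpha,B_e)$ is uniformly hyperbolic \emph{if and only if} $\omega(\alpha,B_e;0)=0$. I would also note at the outset that, since $v$ is real analytic, $\overline{B_e(x+i\epsilon)}=B_e(x-i\epsilon)$ on $\mathbb{R}$, so $L(\alpha,B_{e,\epsilon})=L(\alpha,B_{e,-\epsilon})$; combined with convexity in $\epsilon$ this makes $0$ a minimum of $\epsilon\mapsto L(\alpha,B_{e,\epsilon})$, so $\omega(\alpha,B_e;0)\ge 0$ always and ``$\omega(\alpha,B_e;0)\ne 0$'' and ``$\omega(\alpha,B_e;0)>0$'' are interchangeable.

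For the direction \textbf{uniformly hyperbolic $\Rightarrow\omega(\alpha,B_e;0)=0$}: uniform hyperbolicity is an open condition that persists under complexification, so $(\alpha,B_{e,\epsilon})$ stays uniformly hyperbolic for all $|\epsilon|<\epsilon_0$. On that strip the unstable direction extends to a holomorphic section; writing $B_{e,\epsilon}$ in an adapted holomorphic frame, the unstable multiplier $\lambda(z)$ is holomorphic and nonvanishing, so $\ln|\lambda|$ is harmonic and $L(\alpha,B_{e,\epsilon})=\int_{\mathbb{T}}\ln|\lambda(x+i\epsilon)|\,dx$ is affine in $\epsilon$ on $(-\epsilon_0,\epsilon_0)$. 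Being also even, this affine function is constant there, hence $\omega(\alpha,B_e;0)=0$.

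For the converse, \textbf{$L(\alpha,B_e)>0$ and $\omega(\alpha,B_e;0)=0$ $\Rightarrow$ uniformly hyperbolic}: since $\epsilon\mapsto L(\alpha,B_{e,\epsilon})$ is convex, $\omega(\alpha,B_e;\cdot)$ is nondecreasing, upper semicontinuous, and by Theorem \ref{QAT} integer valued for the $\text{SL}(2,\mathbb{C})$ cocycle; vanishing of the right acceleration at $0$ therefore forces $\omega(\alpha,B_e;\epsilon)=0$ on some interval $[0,\epsilon_1)$, so $L(\alpha,B_{e,\epsilon})\equiv L(\alpha,B_e)>0$ there, and by evenness $L$ is constant on $(-\epsilon_1,\epsilon_1)$. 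At this point I would invoke the structural heart of \cite{avila2009global}: a one-frequency analytic $\text{SL}(2,\mathbb{C})$-cocycle whose Lyapunov exponent is positive and affine (here, locally constant) in a neighborhood of $\epsilon=0$ is uniformly hyperbolic. Applying this to $(\alpha,B_e)$ gives $e\notin\text{spec}(H)$. Combining the two directions with Johnson's fact yields the theorem.

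I expect the last invocation --- positive plus locally affine Lyapunov exponent implies uniform hyperbolicity --- to be the genuine obstacle and the only nonsoft step: it is exactly where one leaves convexity/quantization bookkeeping and must produce the holomorphic invariant section (equivalently, exclude the nonuniformly hyperbolic regime in the analytic category). Everything else --- persistence of hyperbolicity, the harmonicity argument for affineness of $L$ on a hyperbolic strip, and the $\epsilon\leftrightarrow-\epsilon$ symmetry --- is routine.
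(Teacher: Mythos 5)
This statement is not proved in the paper at all: it is quoted verbatim from Avila's global theory \cite{avila2009global}, and appears here only so that the authors can point out that its conclusion \emph{fails} for the Maryland cocycle (which is excluded from the hypotheses because $\tan$ is singular, so $A_e$ is not analytic on $\mathbb{T}$). So there is no ``paper's proof'' to compare against; your proposal has to be judged against Avila's own argument, and in that respect your outline is a faithful reconstruction of it: Johnson's uniform-hyperbolicity criterion for the complement of the spectrum, the evenness $L(\alpha,B_{e,\epsilon})=L(\alpha,B_{e,-\epsilon})$ forcing $\omega(\alpha,B_e;0)\ge 0$, the harmonicity/holomorphic-section argument giving affineness of $L$ on a uniformly hyperbolic strip, and quantization plus upper semicontinuity of the acceleration upgrading $\omega(\alpha,B_e;0)=0$ to local constancy of $L$ near $\epsilon=0$. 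All of that is correct and is exactly the soft half of the argument.

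The caveat — which you state yourself, but which should be said plainly — is that the remaining step, ``$L>0$ and $L(\alpha,B_{e,\epsilon})$ locally constant near $0$ implies uniform hyperbolicity,'' is not an auxiliary lemma: it is the central theorem of \cite{avila2009global} (regularity with positive Lyapunov exponent is equivalent to uniform hyperbolicity), and its proof occupies most of that paper. It requires producing the invariant holomorphic directions from the dominated splittings of the complexified cocycles $(\alpha,B_{e,\epsilon})$ for small $\epsilon\ne 0$ and passing to the limit $\epsilon\to 0$, which is where all the analytic difficulty lives. As written, your proposal is therefore a correct reduction of the stated theorem to an equally deep cited result rather than a proof; since the theorem is itself a citation in this paper, that is an acceptable account of where the content lies, but it would not stand as a self-contained argument.
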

For the Maryland model, however, we have $\gamma_{\lambda}(e)>0,$  $\text{spec}(h)=(-\infty,+\infty)$; yet
 $ \omega(\alpha,A_e;0)=0$ for all $e$.  This is similar to the
 phenomenon observed for the extended Harper's model for certain
 parameters (see Sec. 7 in \cite{jitomirskaya2012analytic}). Thus it gives
 another counterexample to Theorem \ref{Th24} in the singular case.
\appendix
\section{}
\begin{lemma}\label{appendixLemB_1}
The following estimate holds
\begin{equation*}
    ||A^{-1}(\theta-\tilde{q}_k\alpha)A^{-1}(\theta+\alpha-\tilde{q}_k\alpha)
      \cdots    A^{-1}(\theta+(j_0-1)\alpha-\tilde{q}_k\alpha)||\leq Ce^{\tilde{q}_k(\gamma_\lambda(e)+\varepsilon)}.
 \end{equation*}
 \end{lemma}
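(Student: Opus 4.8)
The plan is to recognise that the matrix product in the statement is exactly the inverse $j_0$-step transfer matrix of the Maryland cocycle started at the shifted phase $\theta-\tilde q_k\alpha$, i.e. it equals $A_{j_0}(\theta-\tilde q_k\alpha)^{-1}$, and then to estimate its norm through the regular--singular factorisation. Since $A(x)\in\mathrm{SL}(2,\mathbb{R})$ the matrix $A_{j_0}(\theta-\tilde q_k\alpha)$ is in $\mathrm{SL}(2,\mathbb{R})$ as well, so $\|A_{j_0}(\theta-\tilde q_k\alpha)^{-1}\|=\|A_{j_0}(\theta-\tilde q_k\alpha)\|$; and factoring $A(x)=D(x)/\cos\pi x$ as in (\ref{MMR}) gives
\[
\|A_{j_0}(\theta-\tilde q_k\alpha)\|=\frac{\|D_{j_0}(\theta-\tilde q_k\alpha)\|}{\prod_{\ell=0}^{j_0-1}|\cos\pi(\theta+\ell\alpha-\tilde q_k\alpha)|}.
\]
It thus suffices to bound the numerator from above and the denominator from below.

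For the numerator, the uniform upper bound (\ref{EstimateD}) coming from unique ergodicity gives $\|D_{j_0}(\theta-\tilde q_k\alpha)\|\le Ce^{j_0(L(\alpha,D)+\varepsilon)}$, and $L(\alpha,D)=\gamma_\lambda(e)-\ln 2$ by (\ref{MMCLE}) and (\ref{ln2}). Since $j_0\le\tilde q_k$ this is at most $Ce^{\varepsilon\tilde q_k}e^{j_0(\gamma_\lambda(e)-\ln 2)}$ (also in the case $\gamma_\lambda(e)<\ln 2$, where $e^{j_0(\gamma_\lambda(e)-\ln 2)}\le1$); the factor $e^{-j_0\ln 2}$ is kept deliberately, as it will cancel a factor $2^{-j_0}$ from the denominator.

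For the denominator one first removes the shift: by (\ref{q_k+1}) one has $\|\tilde q_k\alpha\|_{\mathbb{R}/\mathbb{Z}}\le1/\tilde q_{k+1}$ super-exponentially small in $\tilde q_k$, while for each $\ell<j_0$ (so $\ell\ne j_0$) one has $|\cos\pi(\theta+\ell\alpha)|=|c_\ell|\ge1/(C\tilde q_k)$ --- because the $\tilde q_k$ points $\{\theta+j\alpha:0\le j\le\tilde q_k-1\}$ are $(2\tilde q_k)^{-1}$-separated (by (\ref{Approximate1})--(\ref{Approximate2})), so at most one of them, namely $j=j_0$, can lie within $(4\tilde q_k)^{-1}$ of the singularity $\tfrac12$. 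Hence replacing each shifted cosine by $c_\ell$ costs only a bounded multiplicative constant, and the denominator is $\ge\tfrac12\prod_{\ell=0}^{j_0-1}|c_\ell|$ for $k$ large. The remaining product is then bounded below by splitting the interval $\{0,\dots,j_0-1\}$ into consecutive blocks of lengths equal to denominators $q_m$ of the convergents of $\alpha$ and applying Lemma \ref{Estimatecos} to each block; the essential point is that none of these sub-blocks contains the index $j_0$, so the exceptional factor discarded by Lemma \ref{Estimatecos} is each time $\ge1/(2\tilde q_k)$ instead of being arbitrarily small, giving $\prod_{\ell=0}^{j_0-1}|c_\ell|\ge c\,2^{-j_0}\tilde q_k^{-C}$.

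Combining the three bounds, the $e^{-j_0\ln 2}$ from the numerator and the $2^{-j_0}$ from the denominator cancel, and
\[
\|A_{j_0}(\theta-\tilde q_k\alpha)\|\le C\tilde q_k^{C}e^{j_0\gamma_\lambda(e)}e^{\varepsilon\tilde q_k}\le Ce^{\tilde q_k(\gamma_\lambda(e)+\varepsilon)}
\]
after absorbing the polynomial factor into the exponential and relabelling $\varepsilon$. The step I expect to require the most care is precisely achieving this cancellation at the right scale: bounding the cosine product crudely by applying Lemma \ref{Estimatecos} to a single full block of length $\tilde q_k$ produces $2^{-\tilde q_k}$ rather than $2^{-j_0}$, and the resulting estimate $\|A_{j_0}(\theta-\tilde q_k\alpha)\|\le Ce^{j_0\gamma_\lambda(e)+(\tilde q_k-j_0)\ln 2+\varepsilon\tilde q_k}$ is too weak when $\gamma_\lambda(e)<\ln 2$ and $j_0\ll\tilde q_k$. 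So one genuinely needs the block-by-block form of Lemma \ref{Estimatecos}, together with the observation that every sub-block misses the single bad index $j_0$, and a check that the number of sub-blocks times $\ln\tilde q_k$ is $o(\tilde q_k)$ --- which is where the growth of the $q_m$ and the precise shape of Lemma \ref{Estimatecos} enter.
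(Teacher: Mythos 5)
Your proof follows the same route as the paper's: reduce to bounding $\|A_{j_0}(\theta-\tilde q_k\alpha)\|$ (the displayed product is $A_{j_0}(\theta-\tilde q_k\alpha)^{-1}$ and $\det A=1$), factor out the regular part $D$ of (\ref{MMR}), control $\|D_{j_0}\|$ from above by the uniform subadditive bound (\ref{EstimateD}) with $L(\alpha,D)=\gamma_\lambda(e)-\ln 2$, and control the cosine product from below via Lemma \ref{Estimatecos} together with $|c_j|\ge 1/(C\tilde q_k)$ for $j\ne j_0$. The one point of divergence is the lower bound on $\prod_{\ell=0}^{j_0-1}|\cos\pi(\theta+\ell\alpha-\tilde q_k\alpha)|$. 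The paper bounds this product below by the product over all of $\{0,\dots,\tilde q_k-1\}\setminus\{j_0\}$ and applies Lemma \ref{Estimatecos} once to the full block of length $\tilde q_k$ --- exactly the ``crude'' estimate you single out. The resulting exponent is $(L+\varepsilon)j_0+(\ln 2+\varepsilon)\tilde q_k=\gamma_\lambda(e)\,j_0+(\tilde q_k-j_0)\ln 2+O(\varepsilon\tilde q_k)$, and your observation that this is $\le(\gamma_\lambda(e)+2\varepsilon)\tilde q_k$ only when $\gamma_\lambda(e)\ge\ln 2$ is correct; since $\gamma_\lambda(e)$ can be arbitrarily small, insisting on a $2^{-j_0}$ rather than $2^{-\tilde q_k}$ bound for the cosine product is exactly what a sharp proof needs, and is a genuine refinement of the argument as written.

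The difficulty is that this refinement is the crux of the lemma, and you assert it rather than prove it. The Ostrowski block decomposition does not deliver $\prod_{\ell<j_0}|c_\ell|\ge c\,2^{-j_0}\tilde q_k^{-C}$ for free: each block of length $q_m$ costs not only the $q_m^{-C}$ of Lemma \ref{Estimatecos} but also a factor $1/(C\tilde q_k)$ for the one index it discards, and the number of blocks in the expansion of $j_0$ need not be $o(\tilde q_k/\ln\tilde q_k)$ when the partial quotients of $\alpha$ grow quickly; you flag this check but do not carry it out, and in the form stated it can fail. A cleaner route to the bound you want, using only tools already in play, is to write $\sum_{\ell=0}^{j_0-1}\ln|2\cos\pi y_\ell|=\sum_{\ell=0,\,\ell\ne j_0}^{\tilde q_k-1}\ln|2\cos\pi y_\ell|-\sum_{\ell=j_0+1}^{\tilde q_k-1}\ln|2\cos\pi y_\ell|$ with $y_\ell=\theta+\ell\alpha-\tilde q_k\alpha$, bound the first sum below by $-C\ln\tilde q_k$ with a single application of Lemma \ref{Estimatecos} (plus the $1/(C\tilde q_k)$ bound if the block minimizer differs from $j_0$), and bound the second sum \emph{above} by $\varepsilon\tilde q_k+C_\varepsilon$ using that $\ln|2\cos\pi x|$ is upper semicontinuous with zero mean, so its Birkhoff sums admit a uniform upper bound by unique ergodicity --- the same mechanism behind (\ref{EstimateD}). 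This yields $\prod_{\ell=0}^{j_0-1}|\cos\pi y_\ell|\ge 2^{-j_0}\tilde q_k^{-C}e^{-\varepsilon\tilde q_k}$, which produces the cancellation you describe and closes the argument for all values of $\gamma_\lambda(e)$.
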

 \begin{proof}
 It suffices to prove that
 \begin{equation*}
 ||A_{j_0}(\theta -\tilde{q}_k\alpha)||\leq Ce^{\tilde{q}_k(\gamma_\lambda(e)+\varepsilon)}.
 \end{equation*}
 Notice that
   \begin{eqnarray*}
     A_{j_0}(\theta -\tilde{q}_k\alpha) &=& \prod_{j_0-1}^{j=0}A(\theta+j\alpha -\tilde{q}_k\alpha) \\
       &=&  \prod_{j_0-1}^{j=0}\frac{D(\theta+j\alpha-\tilde{q}_k\alpha)}{\cos\pi(\theta+j\alpha-\tilde{q}_k\alpha)}.
   \end{eqnarray*}
 By (\ref{EstimateD}), one has
 \begin{equation}\label{appendixB_1}
     \prod_{j_0-1}^{j=0}D(\theta+j\alpha-\tilde{q}_k\alpha)\leq Ce^{(L+\varepsilon)j_0}.
 \end{equation}
 By (\ref{jneq j_o}),
 \begin{equation*}
   |\cos\pi(\theta+j\alpha )|\geq \frac{1}{C\tilde{q}_k} \text{ for all } j\neq j_0.
 \end{equation*}
 Combing with (\ref{q_k+1}), we have
 \begin{equation}\label{AppendixBG44}
     |\cos\pi(\theta+j\alpha -\tilde{q}_k\alpha)|\geq \frac{1}{C\tilde{q}_k} \text{ for all } j\neq j_0.
 \end{equation}
 If $ |\cos\pi(\theta+j_0\alpha -\tilde{q}_k\alpha)|$ is the smallest one of  $|\cos\pi(\theta+j\alpha -\tilde{q}_k\alpha)|$, $j=0,1,\cdots,\tilde{q}_k-1$,
 applying Lemma \ref{Estimatecos},
one has
\begin{eqnarray}
\nonumber
   \prod_{j=0}^{j_0-1} |\cos\pi(\theta+j\alpha-\tilde{q}_k\alpha)| &\geq& \prod_{j=0,j\neq j_0}^{\tilde{q}_k-1} |\cos\pi(\theta+j\alpha-\tilde{q}_k\alpha)|  \\
   &\geq&  \frac{e^{(     -\ln2-\varepsilon)\tilde{q}_k }}{C\tilde{q}_k}.\label{appendixB_2}
\end{eqnarray}
If for some $j'\neq j_0$,  $ |\cos\pi(\theta+j'\alpha -\tilde{q}_k\alpha)|$ is  the smallest one of  $|\cos\pi(\theta+j\alpha -\tilde{q}_k\alpha)|$, $j=0,1,\cdots,\tilde{q}_k-1$, by Lemma \ref{Estimatecos} and (\ref{AppendixBG44}),  (\ref{appendixB_2}) also holds.
\par
Putting   (\ref{appendixB_1}) and (\ref{appendixB_2}) together,
 \begin{eqnarray*}
   || A_{j_0}(\theta -\tilde{q}_k\alpha)|| &\leq& Ce^{(L+\varepsilon)j_0}\tilde{q}_k e^{(      \ln2+\varepsilon)\tilde{q}_k  }  \\
     &\leq&  Ce^{\tilde{q}_k(\gamma_\lambda(e)+2\varepsilon)}.
 \end{eqnarray*}
 This yields Lemma \ref{appendixLemB_1}.

 \end{proof}
\section{}
\begin{lemma}\label{A.1}

  Let
\begin{equation*}
\{t_k\}_{k=1}^{\infty}=\{q_n, 2q_n, 3q_n,\cdots, \ell_{ n} q_n\} _{n=1}^{\infty} ,
\end{equation*}
   with $  \ell_{ n} = \min\{ \lfloor\frac{ q_n^2}{||q_nk_{\lambda}(e)||_{\mathbb{R}/\mathbb{Z}}}\rfloor, \lfloor \frac{2q_{n+1} }{q_n}  \rfloor\}$, then for some $t>0$, we have
   \begin{equation}\label{appendixG1}
     \sum_{j\neq 0}\sup_{k}|\frac{1-e^{-2\pi ijt_k\alpha}}{1-e^{-2\pi ij\alpha}}  {\zeta} _j |e^{ t |j|}<\infty.
   \end{equation}
\end{lemma}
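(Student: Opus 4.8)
The plan is to dominate each summand in (\ref{appendixG1}) by a quantity that decays geometrically in $|j|$, splitting the range of $j$ into exactly the regimes used in the proof of Lemma~\ref{Firstpart}. Write $\gamma=\gamma_\lambda(e)$, which is positive by Theorem~\ref{Thmvaluele}; I will establish (\ref{appendixG1}) for any $0<t<\gamma/2$, say $t=\gamma/4$. Two elementary inputs are used throughout. From Lemma~\ref{LemValuezeta} and $|\sin\pi x|\le\pi||x||_{\mathbb{R}/\mathbb{Z}}$ one has, for $j\ne0$, both $|\zeta_j|\le\frac{1}{\pi|j|}e^{-\gamma|j|}$ and the sharper $|\zeta_j|\le\frac{1}{|j|}e^{-\gamma|j|}||jk_\lambda(e)||_{\mathbb{R}/\mathbb{Z}}$; and from $4||\beta||_{\mathbb{R}/\mathbb{Z}}\le|1-e^{2\pi i\beta}|\le2\pi||\beta||_{\mathbb{R}/\mathbb{Z}}$ together with $|1-e^{-2\pi ijt_k\alpha}|\le\min\{2,\,t_k|1-e^{-2\pi ij\alpha}|\}$ one gets, uniformly in $k$,
\begin{equation*}
\left|\frac{1-e^{-2\pi ijt_k\alpha}}{1-e^{-2\pi ij\alpha}}\right|\le\min\left\{t_k,\ \frac{1}{2||j\alpha||_{\mathbb{R}/\mathbb{Z}}},\ \frac{\pi||jt_k\alpha||_{\mathbb{R}/\mathbb{Z}}}{2||j\alpha||_{\mathbb{R}/\mathbb{Z}}}\right\}.
\end{equation*}
Since $\zeta_{-j}=\zeta_j$ and the finitely many $j$ with $|j|<q_1$ contribute only finitely many finite terms, it suffices to bound $\sum_{j\ge q_1}$. (If $||q_nk_\lambda(e)||_{\mathbb{R}/\mathbb{Z}}=0$ one reads $\ell_n=\lfloor 2q_{n+1}/q_n\rfloor$; then $\zeta_j=0$ whenever $q_n\mid j$, so those levels are harmless.)

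Fix $j>0$ and let $q_n\le j<q_{n+1}$, so $||j\alpha||_{\mathbb{R}/\mathbb{Z}}\ge\Delta_n\ge\frac{1}{2q_{n+1}}$ by (\ref{Approximate1})--(\ref{Approximate2}). If $j\ge\frac{2}{\gamma}\ln q_{n+1}$, bound the ratio by $\frac{1}{2||j\alpha||_{\mathbb{R}/\mathbb{Z}}}\le q_{n+1}$ and $|\zeta_j|$ by $\frac{1}{\pi j}e^{-\gamma j}$; summing the geometric series, $\sum_{j\ge\frac{2}{\gamma}\ln q_{n+1}}\sup_k(\text{term})\le\frac{q_{n+1}}{\pi}\sum_{j\ge\frac{2}{\gamma}\ln q_{n+1}}e^{-(\gamma-t)j}\le Cq_{n+1}^{\,2t/\gamma-1}=Cq_{n+1}^{-1/2}$, which sums over $n$ since $q_n\to\infty$. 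If instead $q_n\le j<\frac{2}{\gamma}\ln q_{n+1}$, write $j=\ell q_n+k_0$ with $1\le\ell<\frac{2\ln q_{n+1}}{\gamma q_n}$, $0\le k_0<q_n$. When $k_0\ne0$, (\ref{Approximate1})--(\ref{Approximate2}) give $||j\alpha||_{\mathbb{R}/\mathbb{Z}}\ge\Delta_{n-1}-\ell\Delta_n\ge\frac{1}{4q_n}$ for $n$ large, so the ratio is $\le 2q_n$ and the term is $\le\frac{2q_n}{\pi j}e^{-(\gamma-t)j}\le\frac{2}{\pi}e^{-(\gamma-t)j}$, summable over $j\ge q_n$ and then over $n$.

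The one delicate case is $k_0=0$, i.e.\ $j=\ell q_n$, where $||j\alpha||_{\mathbb{R}/\mathbb{Z}}=\ell\Delta_n$ can be as small as $\ell/(2q_{n+1})$. For a modulus $t_k=\ell'q_m$ with $m\ne n$, factoring $jt_k\alpha=(\ell\ell'q_n)(q_m\alpha)$ when $m>n$ and $jt_k\alpha=(\ell\ell'q_m)(q_n\alpha)$ when $m<n$, and using $\ell'q_m\le\ell_mq_m\le2q_{m+1}$ together with $\ell'\le\ell_m\le2q_{m+1}/q_m$, one gets $||jt_k\alpha||_{\mathbb{R}/\mathbb{Z}}\le2\ell q_n/q_{n+1}$ in both cases, whence the ratio is $\le Cq_n$ and the term is $\le\frac{C}{\ell}e^{-(\gamma-t)\ell q_n}$, summable in $\ell$ and $n$. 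For $t_k=\ell'q_n$ (the same level), bound the ratio by $t_k\le\ell_nq_n$ and $|\zeta_{\ell q_n}|$ by $\frac{1}{q_n}e^{-\gamma\ell q_n}||q_nk_\lambda(e)||_{\mathbb{R}/\mathbb{Z}}$ (using $||\ell q_nk_\lambda(e)||_{\mathbb{R}/\mathbb{Z}}\le\ell||q_nk_\lambda(e)||_{\mathbb{R}/\mathbb{Z}}$); then the term is
\begin{equation*}
\ell_nq_n\cdot\frac{1}{q_n}e^{-\gamma\ell q_n}||q_nk_\lambda(e)||_{\mathbb{R}/\mathbb{Z}}\,e^{t\ell q_n}=\ell_n||q_nk_\lambda(e)||_{\mathbb{R}/\mathbb{Z}}\,e^{-(\gamma-t)\ell q_n}\le q_n^2\,e^{-(\gamma-t)\ell q_n},
\end{equation*}
where the last inequality is exactly the defining bound $\ell_n\le\lfloor q_n^2/||q_nk_\lambda(e)||_{\mathbb{R}/\mathbb{Z}}\rfloor$. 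Taking the supremum over $k$ and summing over $\ell\ge1$ and $n$ gives $\sum_nq_n^2\,O\!\left(e^{-(\gamma-t)q_n}\right)<\infty$. Assembling the three regimes proves (\ref{appendixG1}).

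The main obstacle is precisely this last case. On an arithmetic progression $j=\ell q_n$ the harmonic denominator $||\ell q_n\alpha||_{\mathbb{R}/\mathbb{Z}}$ can be of order $1/q_{n+1}$, so bounding the ratio naively by $1/(2||j\alpha||_{\mathbb{R}/\mathbb{Z}})$ costs a factor $q_{n+1}$ that is uncontrollable for Liouville $\alpha$; the first cutoff $\ell_n\le q_n^2/||q_nk_\lambda(e)||_{\mathbb{R}/\mathbb{Z}}$ built into the $t_k$ is exactly the device that trades this factor against the smallness $||q_nk_\lambda(e)||_{\mathbb{R}/\mathbb{Z}}$ of the coefficient $\zeta_{\ell q_n}$, and this is where the proof genuinely differs from that of Lemma~\ref{Firstpart}. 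The second cutoff $\ell_n\le2q_{n+1}/q_n$ plays no role here beyond keeping the continued-fraction estimates (\ref{Approximate1})--(\ref{Approximate2}) in force for indices up to $\ell_nq_n$; it is what is needed for condition~(1) of Lemma~\ref{Twocondition}.
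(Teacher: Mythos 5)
Your proof is correct and follows essentially the same route as the paper's: the same three ratio bounds, the same two bounds on $\zeta_j$, the same resonant/non-resonant decomposition of $j$ relative to the denominators, and the same key cancellation $\ell_n\cdot||q_nk_\lambda(e)||_{\mathbb{R}/\mathbb{Z}}\le q_n^2$ against the factor $||q_nk_\lambda(e)||_{\mathbb{R}/\mathbb{Z}}$ hidden in $\zeta_{q_n}$. If anything, your explicit treatment of $j=\ell q_n$ with $\ell\ge 2$ (which the paper delegates to ``an argument similar to Lemma \ref{Firstpart}'') is more complete than the published sketch.
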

\begin{proof}
Fix some  $t_k=\ell q_n$ with $\ell\leq \frac{ q_n ^2}{||q_nk_{\lambda}(e)||_{\mathbb{R}/\mathbb{Z}}} $ and $\ell\leq\frac{2q_{n+1} }{q_n} $.
 \par
 If $|j|=q_{n'}$ with $n'\leq n-1$,  by (\ref{Valuezeta_n})
  one has
 \begin{eqnarray*}
    |\frac{1-e^{-2\pi ijt_k\alpha}}{1-e^{-2\pi i j\alpha}}   \zeta  _j | &\leq&  C |j|\ell \frac{ ||q_n\alpha||_{\mathbb{R}/\mathbb{Z}}}{||q_{n'}\alpha||_{\mathbb{R}/\mathbb{Z}}}| \zeta  _j| \\
     &\leq&  C |j|\frac{q_{n+1} }{q_n}    \frac{q_{n'+1}}{q_{n+1}}| \zeta  _j|\\
      &\leq&   C  |j \zeta  _j| \leq  C   e^{-\frac{\gamma_{\lambda}(e)}{2}|j|}.
 \end{eqnarray*}
 If $|j|=q_{n }$, by
   (\ref{Valuezeta_n}) again, one has
 \begin{equation*}
    | \zeta_{j}  | \leq  C ||q_nk_{\lambda}(e)||_{\mathbb{R}/\mathbb{Z}}  e^{-\gamma_{\lambda}(e)|q_n|} .
 \end{equation*}
 Thus
 \begin{eqnarray*}
    |\frac{1-e^{-2\pi ijt_k\alpha}}{1-e^{-2\pi i j\alpha}}   \zeta  _j | &\leq&  C \frac{t_k||j\alpha||_{\mathbb{R}/\mathbb{Z}}}{||j\alpha||_{\mathbb{R}/\mathbb{Z}}}| \zeta  _j| \\
     &\leq&  C\frac{ q_n^3}{||q_n k_{\lambda}(e)||_{\mathbb{R}/\mathbb{Z}}} ||q_n k_{\lambda}(e)||_{\mathbb{R}/\mathbb{Z}}  e^{-\gamma_{\lambda}(e)|q_n|} \\
      &\leq&   C   e^{-\frac{\gamma_{\lambda}(e)}{2}|j|}.
 \end{eqnarray*}
 If $|j|=q_{n'}$ with $n'\geq n+1$,  also by (\ref{Valuezeta_n})
  one has
  \begin{eqnarray*}
    |\frac{1-e^{-2\pi ijt_k\alpha}}{1-e^{-2\pi i j\alpha}}   \zeta  _j | &\leq&  C \frac{t_k||j  \alpha||_{\mathbb{R}/\mathbb{Z}}}{||j \alpha||_{\mathbb{R}/\mathbb{Z}}}| \zeta  _j| \\
     &\leq&  C  q_{n+1}   e^{-\gamma_{\lambda}(e)  q_{n'}}    \\
      &\leq&   C   e^{-\frac{\gamma_{\lambda}(e)}{2}|j|}.
 \end{eqnarray*}
 If $q_{n'-1}<|j|< q_{n'}$ for some $n'\in \mathbb{N}^+$,  by an
 argument similar to the proof of Lemma \ref{Firstpart} ,
 we also have
   \begin{eqnarray*}
    |\frac{1-e^{-2\pi ijt_k\alpha}}{1-e^{-2\pi i j\alpha}}   \zeta  _j | &\leq& | \frac{C\zeta  _j}{1-e^{-2\pi i j\alpha}  }|\\
    &\leq&  C   e^{-\frac{\gamma_{\lambda}(e)}{2}|j|} .
 \end{eqnarray*}
  The  estimate (\ref{appendixG1})   is easy to obtain from the above   cases.

\end{proof}

\begin{lemma}\label{A.2}
Suppose for $n$ large enough, the following holds,
\begin{equation*}
    ||q_n k_{\lambda}(e)  ||_{\mathbb{R}/\mathbb{Z}} \leq C\frac{q_n}{q_{n+1}  }+C\frac{e^{(\delta +\varepsilon)q_n}}{q_{n+1}}.
\end{equation*}
Let $\tilde{\zeta}(x)=\zeta(x)-\zeta_0.$ Then equation
\begin{equation}\label{appendixG2}
   e^{-2\pi i\tilde{\zeta}(x)} \hat{c}_1(x)= \hat{c}_1(x+  \alpha)
\end{equation}
has an analytic solution.
\end{lemma}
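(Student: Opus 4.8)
The plan is to reduce the cohomological equation (\ref{appendixG2}) to an explicit Fourier computation, exactly as in the proof of Lemma \ref{Firstpart}. I look for a solution of the form $\hat{c}_1(x)=e^{-2\pi i\psi(x)}$ with $\psi$ real analytic on $\mathbb{T}$. Substituting into (\ref{appendixG2}) and comparing Fourier coefficients, one sees that $\hat{c}_1$ solves (\ref{appendixG2}) as soon as $\psi(x+\alpha)-\psi(x)=\tilde{\zeta}(x)$; since $\tilde{\zeta}=\zeta-\zeta_0$ has zero mean, this is solved coefficient-wise by $\psi_0:=0$ and
\begin{equation*}
\psi_k=\frac{\zeta_k}{e^{-2\pi ik\alpha}-1},\qquad k\neq 0.
\end{equation*}
It then remains only to show $|\psi_k|\le Ce^{-t|k|}$ for some $t>0$, which gives analyticity of $\psi$ on a strip and hence analyticity of $\hat{c}_1$.

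For the decay estimate I would use the explicit formula (\ref{Valuezeta_n}), which gives $|\zeta_k|\le \frac{C}{|k|}e^{-\gamma_\lambda(e)|k|}|\sin(\pi k\,k_\lambda(e))|$ (hence also $|\zeta_k|\le C e^{-\gamma_\lambda(e)|k|}\|k\,k_\lambda(e)\|_{\mathbb{R}/\mathbb{Z}}$), together with $|e^{-2\pi ik\alpha}-1|\ge c\|k\alpha\|_{\mathbb{R}/\mathbb{Z}}$, and then run the case analysis of Lemma \ref{Firstpart} according to the position of $|k|$ relative to the continued-fraction denominators. When $|k|=q_n$, I combine $\|q_n\alpha\|_{\mathbb{R}/\mathbb{Z}}\asymp q_{n+1}^{-1}$ from (\ref{Approximate2}) with the hypothesis $\|q_n k_\lambda(e)\|_{\mathbb{R}/\mathbb{Z}}\le Cq_n/q_{n+1}+Ce^{(\delta+\varepsilon)q_n}/q_{n+1}$ to get $|\psi_{q_n}|\le Ce^{-\gamma_\lambda(e)q_n}+\frac{C}{q_n}e^{-(\gamma_\lambda(e)-\delta-\varepsilon)q_n}$; this is exponentially small in $q_n$ because under the standing hypothesis $\gamma_\lambda(e)>\delta(\alpha,\theta)$ of Proposition \ref{Proquantizationcondition} the number $\varepsilon$ may be taken so small that $\gamma_\lambda(e)-\delta(\alpha,\theta)-\varepsilon>0$. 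When $q_n<|k|<q_{n+1}$, I split: if $|k|\ge 2\gamma_\lambda(e)^{-1}\ln q_{n+1}$, the bare factor $e^{-\gamma_\lambda(e)|k|}$ absorbs the small divisor $\|k\alpha\|_{\mathbb{R}/\mathbb{Z}}\ge \|q_n\alpha\|_{\mathbb{R}/\mathbb{Z}}\ge (2q_{n+1})^{-1}$ (via (\ref{Approximate1})–(\ref{Approximate2})); if $|k|<2\gamma_\lambda(e)^{-1}\ln q_{n+1}$, write $|k|=\ell q_n+k_0$ with $|\ell|\le 2\gamma_\lambda(e)^{-1}\ln q_{n+1}/q_n$ and $0\le k_0<q_n$, so that for $k_0\ne 0$ one has $\|k\alpha\|_{\mathbb{R}/\mathbb{Z}}\ge \Delta_{n-1}-|\ell|\Delta_n\ge (Cq_n)^{-1}$, while for $k_0=0$ the estimate reduces to the $|k|=q_n$ case with a harmless extra factor $|\ell|$.

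The case analysis itself is routine, being essentially identical to that in Lemma \ref{Firstpart}. The only delicate point is the $|k|=q_n$ case, where the bound on $\|q_n k_\lambda(e)\|_{\mathbb{R}/\mathbb{Z}}$ produces a genuinely exponentially decaying contribution only because $\gamma_\lambda(e)>\delta(\alpha,\theta)$; this is precisely where the hypothesis of Proposition \ref{Proquantizationcondition} enters, and without it the conclusion would fail. Once $|\psi_k|\le Ce^{-t|k|}$ is established for all $k$, $\psi$ extends analytically to the strip $|\Im z|<t/(2\pi)$, so $\hat{c}_1=e^{-2\pi i\psi}$ is analytic there and solves (\ref{appendixG2}).
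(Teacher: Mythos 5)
Your proposal is correct and follows essentially the same route as the paper: the ansatz $\hat{c}_1=e^{-2\pi i\psi(x)}$, the coefficient formula $\psi_k=\zeta_k/(e^{-2\pi ik\alpha}-1)$, and the case analysis on $|k|$ borrowed from Lemma \ref{Firstpart}, with the hypothesis of the lemma entering exactly at $|k|=q_n$ where $\gamma_\lambda(e)>\delta(\alpha,\theta)$ makes the bound exponentially small. The paper's own proof is just a terser version of this, explicitly writing out only the $|k|=q_n$ case and referring to Lemma \ref{Firstpart} for the rest.
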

\begin{proof}
The proof of this lemma is similar to that of Lemma  \ref{Firstpart}. We only show some key steps.
We will prove that equation (\ref{appendixG2}) has an  analytic solution  of the form $\hat{c_1}(x)=e^{-2\pi i \psi(x) }$, where $\psi(x)$ is real analytic on $\mathbb{T}$.
Substituting  $\hat{c}(x)=e^{-2\pi i \psi(x)}$ into   (\ref{appendixG2}) and comparing the Fourier coefficients,
one has
 \begin{equation*}
    \psi_k=\frac{\zeta _k}{ e^{-  2\pi i k\alpha} -1}, k\neq 0.
 \end{equation*}
 In order to prove this lemma, we only need to prove that $\psi_k$
 is exponentially small.
 \par
  If $|k|=q_n$,  we have
 \begin{eqnarray*}
   |\psi_{k}| &\leq& C\frac{e^{-\gamma_{\lambda}(e)q_n}|| q_n k_{\gamma}(e))||_{\mathbb{R}/\mathbb{Z}}}{|| q_n \alpha||_{\mathbb{R}/\mathbb{Z}}}  \\
     &\leq&  C e^{-(\gamma_{\lambda}(e)-\delta -\varepsilon)q_n}  +C    e^{-\gamma_{\lambda}(e)q_n}   q_n.
 \end{eqnarray*}
  Thus $\psi_k$
 is exponentially decaying.
 \par
 The proofs of the  other cases are the same as in the proof of Lemma \ref{Firstpart}.
\end{proof}
\section*{Acknowledgments}

This research was partially supported by NSF DMS-1401204. This work was done when Wencai  Liu visited University of California,
Irvine in 2014. He would like to thank Xiaoping Yuan (Professor
 at  Fudan University) for supporting him partly. SJ would like to
 thank Mira Shamis for reminding her of the open problems on the Maryland model  and the organizers of Kent
 State Informal Analysis Seminar in March 2014, during which this research was started.
\footnotesize

\bibliography{Maryland3}

\end{document}